\documentclass[letterpaper,11pt,prx,aps,superscriptaddress,showpacs,nofootinbib]{revtex4-2}

\usepackage[utf8]{inputenc}
\usepackage[colorlinks=true,linkcolor=blue,citecolor=blue,urlcolor=blue]{hyperref}
\usepackage{graphicx}
\usepackage{amsmath,amssymb}
\usepackage{dsfont}
\usepackage{xcolor}
\usepackage{amsthm}

\usepackage{algorithm}
\usepackage{algpseudocode}

\usepackage{subfig}
\usepackage{ragged2e}

\usepackage[mode=buildnew,subpreambles=true]{standalone}

\newtheorem{lemma}{Lemma}
\newtheorem{definition}{Definition}
\newtheorem{theorem}{Theorem}
\newtheorem{corollary}{Corollary}
\theoremstyle{remark}
\newtheorem{remark}{Remark}

\newcommand{\TC}{${\mathcal{Z}(\text{Vec}_{\mathbb{Z}_2})}$ }

\definecolor{FGreen}{RGB}{1,100,10}

\makeatletter
\def\l@subsection#1#2{}
\def\l@subsubsection#1#2{}
\makeatother

\begin{document}

\title{Layer codes}
\author{Dominic J. Williamson}
\author{Nou\'edyn Baspin}
\affiliation{Centre for Engineered Quantum Systems, School of Physics, University of Sydney, Sydney, NSW 2006, Australia}
\date{May 2023}

\begin{abstract}
\noindent
The surface code is a two-dimensional topological code with code parameters that scale optimally with the number of physical qubits, under the constraint of two-dimensional locality. 
In three spatial dimensions an analogous simple yet optimal code was not previously known. 
Here, we introduce a construction that takes as input a stabilizer code and produces as output a three-dimensional topological code with related code parameters.
The output codes have the special structure of being topological defect networks formed by layers of surface code joined along one-dimensional junctions, with a maximum stabilizer check weight of six. 
When the input is a family of good low-density parity-check codes, the output is a three-dimensional topological code with optimal scaling code parameters and a polynomial energy barrier.
\end{abstract}

\maketitle

\tableofcontents

\section{Introduction} 
\label{sec:Introduction}

Quantum computing promises to open a new window into highly-entangled quantum many-body physics~\cite{Feynman1982,manin1980computable,Benioff1980,Deutsch1985,Arute2019}.
Quantum error correction is a crucial ingredient in the design of scalable quantum computers that aim to access new regimes of physics beyond the reach of classical simulations~\cite{Shor1995,Steane1996,shor1996fault,gottesman1997stabilizer,aharonov1997fault,knill1998resilientQC,kitaev1997quantum,preskill1998reliable,Preskill1997fault}. 
Topological quantum codes form the basis of leading approaches to implement quantum error correction~\cite{kitaev2003fault,bravyi1998quantum,dennis2002topological,raussendorf2007fault,raussendorf2007topological,Postler2022,Krinner2022,Acharya2023}. 
This can be attributed to their favourable properties, including relatively simple stabilizer checks and high thresholds.
Topological codes form an important subclass of low-density parity-check (LDPC) codes~\cite{gottesman2014fault,tillich2014quantum,leverrier2015quantum,breuckmann2021ldpc}, set apart by the geometric locality of their checks.

The surface code is the simplest example of a topological code and as such is the most widely studied quantum error-correcting code both theoretically and experimentally~\cite{kitaev2003fault,bravyi1998quantum,dennis2002topological,raussendorf2007fault,raussendorf2007topological,Postler2022,Krinner2022,Acharya2023}. 
It is known by several names including the toric code, the planar code, $\mathbb{Z}_2$ lattice gauge theory, the $\mathbb{Z}_2$ quantum double, and $\mathcal{Z}(\text{Vec}_{\mathbb{Z}_2})$~\cite{kitaev2003fault,bravyi1998quantum,Wegner1971,Kogut1979,DijkgraafWitten,kitaev2006anyons}.
Despite its popularity and positive features, the surface code is far from attaining optimal scaling of its code parameters $[[n,k,d]]$. 
Here $n$ is the number of physical qubits, $k$ is the number of encoded qubits, and $d$ is the code distance, i.e.~the minimum weight of a nontrivial logical operator. 
The surface code is a $[[2L(L-1),1,L]]$ code and hence has a vanishing rate $\frac{k}{n}$ and suboptimal distance scaling when compared to general LDPC stabilizer codes which can achieve a constant $\frac{d}{n}$ scaling. 
In a recent flurry of work, \textit{good} LDPC codes were discovered~\cite{hastings2020fiber,panteleev2020quantum,breuckmann2020balanced,Panteleev2022,leverrier2022quantum,Dinur2023} that attain code parameters with optimal scaling i.e.~$[[n,\Theta(n),\Theta(n)]]$. 

Topological codes defined on a regular lattice in $D$-dimensional Euclidean (flat) space with a finite density of qudits can never attain optimal scaling of the code parameters. 
This is because they are constrained by several bounds, including the Bravyi-Poulin-Terhal (BPT) bound~\cite{bravyi2010tradeoffs}
\begin{align}
    kd^{\frac{2}{D-1}} \leq O(n), 
\end{align}
the Bravyi-Terhal (BT) bound~\cite{bravyi2009no}
\begin{align}
    d \leq O(n^{\frac{D-1}{D}}),
\end{align}
and Haah's bound~\cite{Haah2021}
\begin{align}
    k \leq O(n^{\frac{D-2}{D}}). 
\end{align}

In light of these bounds, the best scaling of code parameters for a topological code in $D$ dimensions one can hope to achieve is $[[n,\Theta(n^{\frac{D-2}{D}}),\Theta(n^{\frac{D-1}{D}})]]$. Writing these code parameters in terms of the linear extent of a hypercuboid lattice $L$ we find $[[\Theta(L^D),\Theta(L^{D-2}),\Theta(L^{D-1})]]$. 
The most physically relevant cases are $D=2,3$.
It is apparent from the parameters given above that the surface code achieves optimal scaling for $D=2$. 
Here, we introduce families of \textit{layer codes} that achieve optimal scaling for $D=3$.

\subsection{Overview of Main Results} 

In this work we introduce a construction that takes as input an arbitrary Calderbank-Shor-Steane~\cite{calderbank1996good,Steane1996Simple} (CSS) stabilizer code and outputs a topological CSS code that is local in dimension $D=3$ which we call a \textit{layer code}. 
The code parameters of the input and output codes of our construction are related by
\begin{align}
\label{eq:CodeParameterMap}
    [[n,k,d]] \mapsto [[\Theta(nn_Xn_Z),k,\Omega(\frac{d}{w} n_{*})]],
\end{align}
where $n_X$ is the number of $X$ checks, $n_Z$ is the number of $Z$ checks, $n_{*}= \min(n_X,n_Z)$, and $w$ is the maximum weight of the checks in the input code. Furthermore, the maximum check weight of the output code is $6$. 

The layer codes output by our construction have additional structure, taking the form of topological defect networks~\cite{Slagle2018,Aasen2020,Song2021}. 
More specifically, each layer code is constructed from layers of surface code with one layer for each physical qubit, $X$ check, and $Z$ check of the input code.
The surface code layers associated to physical qubits are stacked on $xz$-planes of a cubic lattice, the layers associated to $X$ checks are stacked on $xy$-planes, and the layers associated to $Z$ checks are stacked on $yz$-planes. 
The grid of surface code layers are then coupled together at nontrivial junctions that are determined by the Tanner graph of the input code. 

When applied to a family of good CSS LDPC codes our construction outputs a family of topological CSS codes that achieve optimal scaling of the code parameters for $D=3$, that is
\begin{align}
    [[n,\Theta(n),\Theta(n)]] \mapsto [[\Theta(n^3),\Theta(n),\Theta(n^2)]].
\end{align}
These code parameters achieve the point `a' depicted in Figure \ref{fig:BoundsOnCodes}. 
Furthermore, the range of code parameters on the line between `a' and $(0,1)$ in Figure \ref{fig:BoundsOnCodes} can be achieved by tiling layer code blocks. 
To demonstrate this we parametrize the `a' to $(0,1)$ line by $\alpha \in [0,1]$ and consider $N^{1-\alpha}$ layer code blocks of size $N^\alpha$ to obtain codes with parameters $[[N,\Omega(N^{1-\alpha}\cdot N^{\alpha/3}), \Omega(N^{2\alpha/3})]]$. 
This construction saturates the BPT bound for all attainable values of $d$. 
We remark that a union of disjoint layer code blocks ceases to be topological in the sense required for Haah's bound to apply. 
Nonetheless, each single block of layer code obeys Haah's bound and can saturate it.
The saturation of the above bounds demonstrates that topological codes are, in a sense, optimal local codes in 3D. 

\begin{figure}[t]
    \includegraphics[page=12]{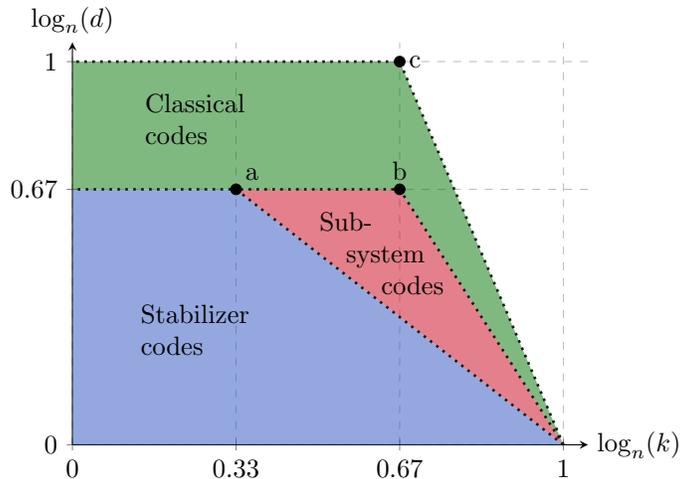}
    \caption{
    \justifying
    An illustration of existing bounds on local codes in 3D, and constructions known to saturate them. The blue shaded area corresponds to the BPT bound~\cite{bravyi2010tradeoffs} ${kd \leq O(n)}$, and the BT bound~\cite{bravyi2009no} $d \leq O(n^{2/3})$. The point labeled `a' indicates the layer code construction, see Theorem~\ref{thm:1}. 
    The red shaded area corresponds to Bravyi's bounds for subsystem codes~\cite{bravyi2011subsystem}. The point `b' indicates the construction in Ref.~\cite{bacon2015sparse}, see also Ref.~\cite{WireCodes}. The green shaded area corresponds to the bound on classical codes from Ref.~\cite{bravyi2010tradeoffs}. The point `c' indicates the construction in Ref.~\cite{baspin2023combinatorial}.
    All points along the dashed line from `a' to $(1,0)$ can be achieved by tiling smaller layer code blocks, see Corollary~\ref{corr:1}, and similarly for points `b' and `c'. 
    }
    \label{fig:BoundsOnCodes}
\end{figure}

Finally, we demonstrate that the layer code construction preserves the scaling of the energy barrier when the input code is LDPC. 
We show that a family of good CSS LDPC codes from Ref.~\cite{leverrier2022quantum} has energy barrier $\Theta(n)$. This leads to our main result
\begin{theorem}
\label{thm:1}
    There exist families of topological CSS stabilizer codes in $D=3$ spatial dimensions that achieve the optimal code parameters $[[\Theta(L^3),\Theta(L),\Theta(L^2)]]$ and a $\Theta(L)$ energy barrier. In particular, layer codes based on the good LDPC codes from Ref.~\cite{leverrier2022quantum} achieve these properties and have checks of weight 6 or less.
\end{theorem}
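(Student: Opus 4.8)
The plan is to assemble Theorem~\ref{thm:1} from three ingredients: the code-parameter map of the layer code construction established in the preceding sections, the parameters and check weights of the good LDPC codes of Ref.~\cite{leverrier2022quantum}, and a separate analysis of the energy barrier. First I would instantiate the construction \eqref{eq:CodeParameterMap} on the quantum Tanner codes of Leverrier--Z\'emor: these form a family of $[[n,\Theta(n),\Theta(n)]]$ CSS codes whose Tanner graph has bounded degree, so $w=O(1)$ and $n_X,n_Z=\Theta(n)$, hence $n_{*}=\Theta(n)$. Feeding these in gives output parameters $[[\Theta(n^3),\Theta(n),\Theta(n^2)]]$; since the construction stacks $\Theta(n)$ surface-code layers along each of the three coordinate directions, the linear extent of the cubic lattice is $L=\Theta(n)$, yielding $[[\Theta(L^3),\Theta(L),\Theta(L^2)]]$. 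The bound of $6$ on the check weight is inherited verbatim from the general construction, as is the statement that the output is a genuine $3$D-local CSS code with the structure of a topological defect network of surface-code layers joined along one-dimensional junctions, so neither needs to be re-derived here.

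The substantive new work is the energy barrier. I would work with the standard definition: the minimum, over all sequences of single-qubit Pauli operators taking the identity to a nontrivial logical operator, of the largest number of stabilizer checks violated by any intermediate operator. Step one is to show the input codes of Ref.~\cite{leverrier2022quantum} have barrier $\Theta(n)$. The upper bound $O(n)$ is immediate: a minimum-weight logical has weight $\Theta(n)$ and can be built up qubit by qubit, and bounded check degree means each added qubit flips only $O(1)$ syndrome bits. The lower bound $\Omega(n)$ should follow from the linear confinement (soundness) property that underlies the linear-distance proof for these expander-based codes: a low-weight syndrome forces the supporting error to lie within bounded distance of a small correctable region, so an error en route to becoming a nontrivial logical---which has weight $\Omega(n)$---cannot keep its syndrome weight below a constant fraction of its own weight, and in particular must at some point violate $\Omega(n)$ checks.

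Step two is to transfer this lower bound to the layer code via a two-way correspondence between error paths. A logical error path in the input code lifts to the layer code by dressing each input qubit's Pauli with the corresponding string (or membrane) in the relevant surface-code layer, incurring only $O(1)$ extra violated checks per step; this gives the $O(L)$ upper bound on the layer-code barrier. For the lower bound I would take an energy-optimal path realizing a nontrivial layer-code logical and coarse-grain it onto the input Tanner data, with the key claim being that such a path must, at some intermediate time, induce an operator whose restriction to the junction structure carries the syndrome of a near-logical configuration of the input code; the input-code barrier of $\Omega(n)$ then lower-bounds the number of violated junction checks, hence the layer-code energy, up to constants.

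The main obstacle is the lower bound in step two. The surface code has only an $O(1)$ energy barrier, so within any single layer an error string can be grown and deformed essentially for free; the whole difficulty is showing that crossing from the trivial to a nontrivial logical class of the \emph{layer} code forces one to eventually pay the input code's linear cost at the one-dimensional junctions rather than sneaking through cheap intra-layer moves. Making this precise requires a careful decomposition of the layer-code logical operators and stabilizer group with respect to the junction network---in effect, a bookkeeping argument that the ``cheap'' intra-layer degrees of freedom are confined by the junctions in exactly the way the input code's bits are confined by its checks---and I expect this to absorb most of the technical effort.
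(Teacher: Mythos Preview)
Your plan is correct and matches the paper's proof structure: instantiate the code-parameter map on the Leverrier--Z\'emor codes, inherit the weight-$6$ bound from the lattice model, prove the $\Theta(n)$ barrier for the input codes from their confinement/soundness property, and then transfer the barrier to the layer code by a projection from layer-code error paths to input-code error paths. The one place where the paper is more concrete than your sketch is the lower-bound mechanism in step two: rather than tracking ``violated junction checks'' directly, the paper defines an explicit cleaning procedure that takes an arbitrary layer-code Pauli operator, pushes all its excitations through the defect network onto canonical locations (first off the check layers, then to boundaries), and reads off an input-code operator whose syndrome weight is at most a constant factor $\tfrac{w\hat w}{4}$ larger than the original layer-code syndrome weight---this is exactly the ``coarse-graining'' you anticipate, and it sidesteps having to reason about which specific checks sit at junctions.
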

\begin{proof}
    The proof is divided into a high level sketch of the code properties in Section~\ref{sec:CodeProperties}, supported by technical lemmas in Section~\ref{sec:Proofs}, and the explicit construction of lattice checks in Section~\ref{sec:LatticeModel}. 
    
    In Section~\ref{sec:CodePropertiesParameters} we explain how the code parameters of the layer codes are related to those of the input codes. 
    This is summarized in Eq.~\eqref{eq:CodeParameterMap}. 
    In Section~\ref{sec:CodePropertiesEnergy} we explain how the scaling of the energy barrier for a layer code is related to the energy barrier of the input code. 
    In Section~\ref{sec:ProofsLogicals} we present technical lemmas characterizing the form of the logical operators in the layer codes.
    We rely upon this characterization to establish the scaling of the distance, number of encoded qubits and energy barrier. 
    In Section~\ref{sec:ProofsLZBarrier} we show that the family of good CSS LDPC codes from Ref.~\cite{leverrier2022quantum} has energy barrier $\Theta(n)$. 
    In Section~\ref{sec:LatticeModel} we introduce a lattice model for the layer codes involving local stabilizer generators of weight at most 6. 
    Together, these results demonstrate that applying the layer code construction to the codes from Ref.~\cite{leverrier2022quantum} produces families of layer codes with the claimed properties. 
\end{proof}
\begin{corollary}
\label{corr:1}
    There exist families of local CSS stabilizer codes in D=3 spatial dimensions that achieve code parameters $[[N,\Omega(N^{1-\alpha}\cdot N^{\alpha/3}), \Omega(N^{2\alpha/3})]]$ for $\alpha \in [0,1]$.
\end{corollary}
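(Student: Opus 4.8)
The plan is to obtain Corollary~\ref{corr:1} from Theorem~\ref{thm:1} by a straightforward tiling argument: take many disjoint copies of a single layer code block whose size is tuned to the exponent $\alpha$. Fix $\alpha \in [0,1]$ and a target qubit count $N$. I would choose a linear extent $L = \Theta(N^{\alpha/3})$, concretely $L = \max(L_0,\lfloor N^{\alpha/3}\rfloor)$ for a fixed constant $L_0$ so the construction is nondegenerate. By Theorem~\ref{thm:1}, the corresponding layer code block is a geometrically local $D=3$ CSS stabilizer code with checks of weight at most $6$, parameters $n_{\mathrm{blk}} = \Theta(L^3) = \Theta(N^{\alpha})$, $k_{\mathrm{blk}} = \Theta(L) = \Theta(N^{\alpha/3})$, $d_{\mathrm{blk}} = \Theta(L^2) = \Theta(N^{2\alpha/3})$, occupying a cubic region of side $\Theta(L)$.

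Next I would place $M = \lfloor N/n_{\mathrm{blk}}\rfloor = \Theta(N^{1-\alpha})$ such blocks in pairwise disjoint regions of $\mathbb{R}^3$ (for instance stacked along one axis, or packed into a cube of side $\Theta(N^{1/3})$; any embedding works since the blocks impose no mutual checks), and, if $M\,n_{\mathrm{blk}} < N$, append the remaining $N - M\,n_{\mathrm{blk}}$ qubits as a trivial block (each carrying a weight-one stabilizer), which changes neither the logical count nor the distance. The union is again a local $D=3$ CSS code on exactly $N$ qubits with the same maximum check weight. Its logical qubit count is additive over the blocks, $k = M\,k_{\mathrm{blk}} = \Omega(N^{1-\alpha}\cdot N^{\alpha/3})$, and its distance is the minimum of the (nontrivial) block distances, $d = d_{\mathrm{blk}} = \Omega(N^{2\alpha/3})$, which together give the claimed parameters $[[N,\Omega(N^{1-\alpha}\cdot N^{\alpha/3}),\Omega(N^{2\alpha/3})]]$.

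I do not expect any genuine obstacle here; the points that need care are all elementary. They are: the integer rounding in the choices of $L$ and $M$ (the reason the statement is phrased with $\Omega$ rather than $\Theta$); the two standard facts that a disjoint union of local codes is local and that its distance equals the minimum distance among the summands that encode something; and the degenerate endpoint $\alpha = 0$, where one takes $\Theta(N)$ constant-size blocks to obtain $[[N,\Omega(N),\Omega(1)]]$, while the endpoint $\alpha = 1$ simply reproduces point `a' of Theorem~\ref{thm:1}. One could additionally remark that each block retains a $\Theta(L) = \Theta(N^{\alpha/3})$ energy barrier, although this is not part of the corollary's statement.
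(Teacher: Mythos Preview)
Your proposal is correct and follows essentially the same approach as the paper: take $\Theta(N^{1-\alpha})$ disjoint layer-code blocks of size $\Theta(N^{\alpha})$ and use additivity of $k$ and the minimum rule for $d$. You are simply more explicit than the paper about integer rounding, padding with trivial qubits, and the endpoint cases, none of which the paper bothers to spell out.
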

\begin{proof}
    Consider $N^{1-\alpha}$ blocks of size $N^\alpha$ each containing a layer code based on the good LDPC codes from Ref.~\cite{leverrier2022quantum}. 
    The layer code in each block has code parameters $[[N^\alpha,\Omega(N^{\alpha/3}), \Omega(N^{2\alpha/3})]]$ and checks of weight 6 or less.
    Since there are $N^{1-\alpha}$ such code blocks, the stated result follows.
\end{proof}
The layer code construction opens the door to a vast, unexplored landscape of promising non translation-invariant topological codes with striking features. Table~\ref{tab:summary} provides a summary of the input and output codes involved in the construction. Algorithm~\ref{alg:layer-codes-alg} provides a summary of the construction procedure.

\setlength\tabcolsep{1.5cm}
\begin{table}[t]
    \centering
    \begin{tabular}{ c c}
    Input code                  &   Output layer code \\
    \hline \hline
    General CSS stabilizer code      &   3D topological defect network code \\
    Physical qubit              &   $xz$ surface code layer \\
    $X$ check                   &   $xy$ surface code layer \\
    $Z$ check                   &   $yz$ surface code layer \\
    Max check weight $w$       &   Max check weight 6 \\
    Energy barrier $\Delta(n)$  &   Energy barrier $\frac{1}{w w'} \Delta(n)$ \\
    $[[n,k,d]]$                 &   $[[\Theta(nn_Xn_Z),k,\Omega(\frac{1}{w}d \min(n_X,n_Z))]]$
\end{tabular}
    \caption{Summary of the input and output code properties in the layer code construction.}
    \label{tab:summary}
\end{table}

\subsection{Prior Work} 

Quantum codes in three dimensions have received significant attention over the past two decades~\cite{dennis2002topological,Bombin2007,Bombin2015,Brown2016,Castelnovo2007,Castelnovo2008,yoshida2011classification,Yoshida2011Feasibility,chamon2005quantum,haah2011fractal,walker2012,kim20123d,yoshida2013exotic,brell2014proposal,PhysRevB.92.235136,Vijay2016,Williamson2016,bravyi2011topological,michnicki2012quantum,PhysRevLett.107.150504,Bravyi2013,williamson2016hamiltonian,Brown2019Parallel,Weinstein2019,Devakul2020b,Zhu2022,aitchison2023no,Dua2023Fractal}.  
A stack of $L$ two-dimensional surface codes of size $L\times L$ demonstrates the possibility of three-dimensional topological codes with a number of encoded qubits that scales as $\Theta(L)$. 
The three-dimensional toric code~\cite{dennis2002topological} established that planelike logical operators with minimum weight $\Omega(L^2)$ are possible in three-dimensional topological codes. 
However the distance of the three-dimensional toric code scales as $\Theta(L)$ as it also supports a stringlike logical operator. 
This further implies that the three-dimensional toric code has a constant logical energy barrier and hence is not a self-correcting quantum memory~\cite{Castelnovo2007,Castelnovo2008}. 
There have been numerous efforts searching for codes beyond the toric code that have no stringlike logical operators and consequently have superlinear distances and logical energy barriers that increase with system size. 
Research in this area has resulted in the discovery of a number of fracton codes~\cite{chamon2005quantum,haah2011fractal,kim20123d,yoshida2013exotic,PhysRevB.92.235136,Vijay2016}.  
One particularly remarkable fracton code is Haah's cubic code which has no stringlike logical operators, a superlinear distance, a logarithmic energy barrier, a number of encoded qubits that scales as $\Theta(L)$ for a certain family of system sizes, and which exhibits partial self correction~\cite{haah2011fractal,PhysRevLett.107.150504,Bravyi2013}. 
Despite these promising characteristics, neither the cubic code nor any other of the so-called type-II fracton models that satisfy the no stringlike logical operator rule have been shown to achieve optimal $\Theta(L^2)$ scaling of the code distance. 
Codes with polynomially growing energy barriers have been found by abandoning translation invariance~\cite{michnicki2012quantum}.
However, these codes also failed to achieve optimal scaling of the code distance. 
The layer codes output by our construction can be viewed as a new variety of non translation-invariant fracton codes which can achieve the desired optimal code parameters $[[\Theta(L^3),\Theta(L^{1}),\Theta(L^{2})]]$ and a logical energy barrier that scales as $\Theta(L)$.

A large amount of research has been done on new constructions of fracton models and attempts to characterize and classify them~\cite{PhysRevB.92.235136,haah2014bifurcation,Vijay2016,Williamson2016,vijay2017generalization,PhysRevB.95.245126,Dua2019,Dua2019c,Pai2019,Tantivasadakarn2019,Tantivasadakarn2021,Tantivasadakarn2021Nonabelian,Song2023,Bulmash2018,song2018twisted,Prem2019,Prem2019Cage,Bulmash2019,Dua2019b,Williamson2020,Sullivan2021}. 
This led to the discovery that networks of two and three-dimensional topological codes joined together by topological defects can recover all known fracton models (with finite order excitations)~\cite{Aasen2020,Song2021}. 
The layer codes introduced in this work are local topological codes in three spatial dimensions that take the form of topological defect networks. 
However, unlike previous topological defect network constructions, they are not translation invariant even at a coarse grained scale. 
The construction of layer code topological defect networks in this work is different to the topological defect network code construction introduced in Ref.~\cite{Song2021}. 
For contrast, the earlier construction takes as input a topological CSS code that is already local in a fixed spatial dimension, and produces as output a new topological CSS code that is local in the same spatial dimension but has the special form of a topological defect network. 
The layer codes in this work are also related to the gauged layer constructions in Refs.~\cite{Williamson2020,Sullivan2021}.
The models in both the present and previous works can be understood as resulting from gauging planar symmetries formed by the product of stringlike logical operators over layers of surface code stacked along one direction. 
However in the previous works the products were taken over all layers, whereas in this work the layers involved in each product are determined by the checks of an input CSS code. 

\textit{Note added} --- While this project was in progress~\cite{Williamson2023talk}, a manuscript proving a related result appeared~\cite{portnoy2023local} which also builds upon the recent breakthroughs establishing good LDPC codes~\cite{Panteleev2022,breuckmann2021ldpc,leverrier2022quantum,Dinur2023} to construct new topological codes in $D\geq 3$.
Our construction was formulated and worked out independently of Ref.~\cite{portnoy2023local} and differs on several key points. 
First, our work makes no use of the main techniques employed in Ref.~\cite{portnoy2023local}, including the construction due to Freedman and Hastings~\cite{Freedman2021} and a quantitative embedding theorem due to Gromov and Guth~\cite{Gromov2012}. 
The distinction is further accentuated by the special structure inherent to the layer codes output by our construction, they are topological defect networks made up of surface code layers and defects with maximum check weight 6, which does not appear to hold for the codes in Ref.~\cite{portnoy2023local}.
Finally, we remark that while our construction is only defined for $D=3$ it saturates the BPT bound without any polylog factor, whereas the result in Ref.~\cite{portnoy2023local} applies to arbitrary finite $D$ but only achieves optimal code parameter scaling up to a polylog factor. 
The precise relationship between the construction we present here and that of Ref.~\cite{portnoy2023local} is an interesting question we leave to future work.
Finally, we remark that another related work Ref.~\cite{Lin2023} appeared concurrently with this work. 

\subsection{Open questions}
\label{sec:openq}

The layer code construction raises a number of questions which we review below. 

\textbf{Self-correction:} There are families of layer codes that have optimal energy barriers, and hence it is natural to ask whether any layer code family is a self-correcting quantum memory. 
Despite the promising energy barrier, it appears to be challenging to answer this question. 
The standard method to show self-correction following Ref.~\cite{alicki2010thermal}, Ref.~\cite{Bombin2013Self} (Section VI), and Ref.~\cite{roberts2020symmetry} (Appendix D), relies on bounding the number of syndrome configurations with low energy. 
However, in the layer codes there are at least $n^{\Omega (n)}$ configurations with energy $\gamma n$, for some constant $\gamma >0$.\footnote{This is because one can always find a quasi-concatenated logical supported on $d$ $xz$-layers, where $d$ is the distance of the input code.
Without loss of generality we consider an $X$-type logical. 
By truncating the support of this logical on the $xz$-layers, we can obtain an operator with energy at most $2d$. 
Since the pointlike excitations created by the truncated logical are all mobile along the $x$ direction, there are at least $L^d$ configurations with the same energy where $L$ is the length of $xz$ surface code layer.
For good LDPC codes $L\sim d \sim n$ and hence the above results in an $n^{\Omega (n)}$ degeneracy for configurations of energy $\sim n$.} 
For these parameters the upper bound from Equation (51) in Ref.~\cite{alicki2010thermal} gives $e^{- n (\beta \gamma - \log(n)) } = \omega(1)$, which fails to prove self-correction. This points to the need for alternative methods for bounding the memory time. 
We leave the resolution of the memory time question to future work.

\textbf{Decoders:} We have not addressed the important problem of decoding layer codes in this work. 
We remark that the general renormalization group decoder~\cite{Bravyi2013} is applicable to layer codes, as they satisfy the local topological order condition in three dimensions. 
However, the structure of the layer codes suggests that more specialized decoders should exist. 
We have also not addressed the problem of performing fault-tolerant logical gates on the layer codes. 
The surface code layer structure of the layer codes presents a great opportunity to leverage well-developed surface code techniques for performing such gates.
We leave these directions to future work.

\textbf{Product constructions:} The layer code construction we have introduced is evocative of product constructions of quantum codes~\cite{tillich2014quantum, breuckmann2020balanced}. 
More explicitly, the logicals of a layer code closely resemble those obtained from concatenating the input LDPC code with a surface code. 
For the special case of a classical input code the resulting layer code \textit{is} a product of the input code with a repetition code, see Figure~\ref{fig:zzDefectHGP}. 
Even beyond that special case, topological defects can equivalently be understood in terms of the stabilizers and logicals they induce as shown in Figure \ref{fig:zzDefectNonlocalExamples}. 
This raises the question, can one find a combinatorial interpretation of layer codes? A combinatorial point of view is likely to produce complementary insights on the construction that could help with generalizing it to different target spaces.

\textbf{Higher dimensions:} An interesting challenge is the extension of our construction to higher dimensions in a way that produces output codes that saturate the BPT bound in all dimensions. 
Similarly, it would be interesting to generalize our construction to produce optimal codes in spaces with constant negative curvature, rather than flat space. 
We anticipate that the techniques of Ref.~\cite{baspin2023combinatorial} could prove useful for this line of research. 
It would also be interesting to generalize the construction of layer codes to allow arbitrary nonlocal, but low weight, connections between the surface codes and to explore the tradeoff in code parameters and properties that this allows. For example instead of using data layers of size $n_X \times n_Z$, one could attempt to use data layers of size $\delta_X \times \delta_Z$, where $\delta_X$ is the maximum $X$ degree of the code, to obtain codes of constant degree and weight with parameters $[[n \delta_X \delta_Z, k, \frac{d}{\delta_X \delta_Z}]]$. If this was possible it would yield an alternative sparsification theorem for CSS LDPC codes~\cite{hastings2021quantum,Sabo2024}.

\textbf{Different topological codes:} 
It is interesting to ask how replacing the surface code with an alternative topological code might affect the construction.  
First, it would be interesting to find a floquet implementation of the layer codes that reduces the check weight to 2.
It would also be interesting to extend our construction to color code layers via folding~\cite{kubica2015unfolding} in a way that preserves the transversal gates of the input code, such as the Steane code example. One could imagine trying to preserve a transversal $T$ gate by using the 3D color code~\cite{bombin2007topological} in lieu of 2D topological codes, as no 2D code possesses such a gate~\cite{bravyi2013classification}.
Another question is about the generalization of our construction to non-CSS codes, and codes beyond the Pauli stabilizer formalism. 
Can our construction be leveraged to build translation invariant non-stabilizer\footnote{It is known that translationally invariant \emph{stabilizer} codes in $3$ dimensions have energy barrier at most $O(\log(n))$~\cite{haah2013commuting} and therefore cannot be self correcting~\cite{temme2016thermalization}.} topological codes in three dimensions with optimal code parameters?

\textbf{Equivalence relation:} We remark that our construction depends on the choice of stabilizer generators, or checks, of the input code.
This raises the question of how the output codes produced from the same input code with different choices of generators are related. 
In particular, are output codes with different choices of input generators in the same topological phase of matter? 
That is, are they equivalent up to a generalized local unitary in three dimensions?
Similarly, we have not addressed the question of how the layer code output by our construction is related to the three-dimensional code output by the construction of Ref.~\cite{portnoy2023local} for the same input LDPC code.

\subsection{Examples} 

We now illustrate our construction by depicting the layer codes that are output given several simple examples of input CSS codes. 

\subsubsection{Repetition Code}

The first example is based on the 3 qubit repitition code, which has stabilizer checks $ZZI,IZZ$. 
The layer code derived from the 3 qubit repetition code involves 5 surface code layers in total. 
There are 3 $xz$ surface code layers corresponding to the input physical qubits, and 2 $yz$ surface code layers corresponding to the input code $Z$ checks. 
Trijunction topological defects join the check layers to the qubit layers as shown in Figure~\ref{fig:RepetitionCode} below, the details of these defects are explained in Section~\ref{sec:LayerCodeConstruction}.
\begin{figure}[H]
    \centering
    \includegraphics[page=5]{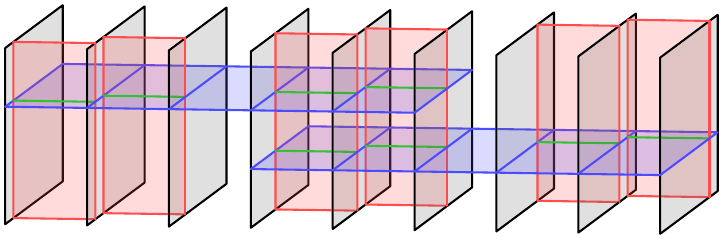}
    \caption{The layer code based on the 3 qubit repetition code. 
    3 grey $xz$-layers depict surface codes corresponding to input physical qubits. 
    2 red $yz$-layers depict surface codes corresponding to input $Z$ checks. 
    Trijunctions between the red and grey layers correspond to nontrivial topological defects. }
    \label{fig:RepetitionCode}
\end{figure}

\subsubsection{[[4,2,2]] error detecting code}

The [[4,2,2]] error detecting code has the following stabilizer checks $XXXX,ZZZZ$. 
The layer code based on the  [[4,2,2]] code involves 6 surface code layers in total, 4 $xz$-layers corresponding to encoded qubits, 1 $xy$-layer corresponding to the $X$ check, and 1 $yz$-layer corresponding to the $Z$ check. 
The surface code layers are joined together at their junctions by topological defects as shown in Figure~\ref{fig:422Code}, see Section~\ref{sec:LayerCodeConstruction} for a detailed description of the defects. 
\begin{figure}[H]
    \centering
    \includegraphics[page=61]{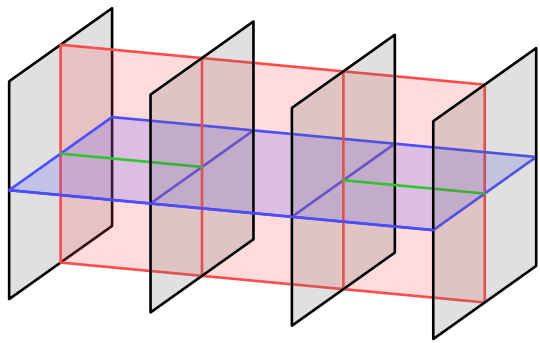}
    \caption{The layer code based on the [[4,2,2]] code. 
    4 grey $xz$-layers depict surface codes corresponding to input physical qubits. 
    The blue $xy$-layer depicts a surface code corresponding to an input $X$ check. 
    The red $yz$-layer depicts a surface code corresponding to an input $Z$ check. 
    Blue, red and green junctions where the layers meet correspond to nontrivial topological defects. }
    \label{fig:422Code}
\end{figure}

\subsubsection{Shor's Code}

Shor's code has the following stabilizer checks~\cite{Shor1995}
\begin{align}
    \begin{array}{c c c c c c c c c }
        X & X & X & X & X & X & I & I & I  \\
        I & I & I & X & X & X & X & X & X  \\
        Z & Z & I & I & I & I & I & I & I  \\
        I & Z & Z & I & I & I & I & I & I  \\
        I & I & I & Z & Z & I & I & I & I  \\
        I & I & I & I & Z & Z & I & I & I  \\
        I & I & I & I & I & I & Z & Z & I  \\
        I & I & I & I & I & I & I & Z & Z .  
    \end{array}
\end{align}
The layer code based on Shor's code involves 17 surface code layers in total, 9 $xz$-layers corresponding to encoded qubits, 2 $xy$-layers corresponding to $X$ checks, and 6 $yz$-layers corresponding to $Z$ checks. 
These surface code layers are joined together at their junctions by defects that follow the incidence relations of the qubits and checks in Shor's code, see Figure~\ref{fig:ShorsCode}. The defects are described in Section~\ref{sec:LayerCodeConstruction}. 

\begin{figure}[H]
    \centering
    \includegraphics[scale = 1, page=1]{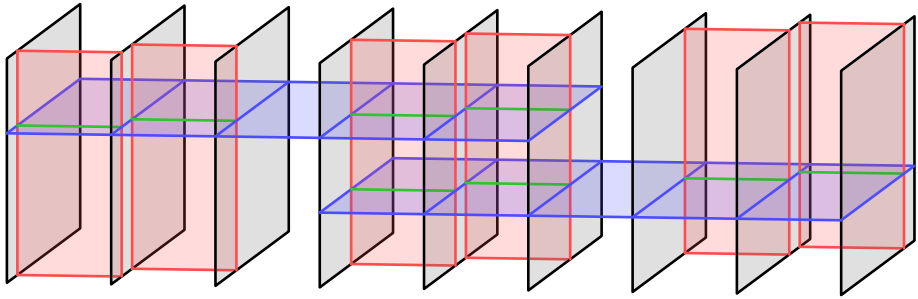}
    \caption{The layer code based on Shor's code. 
    9 grey $xz$-layers depict surface codes corresponding to input physical qubits. 
    2 blue $xy$-layers depict surface codes corresponding to input $X$ checks. 
    6 red $yz$-layers depict surface codes corresponding to input $Z$ checks. 
    Blue, red and green junction lines where the layers meet correspond to nontrivial topological defects. }
    \label{fig:ShorsCode}
\end{figure}

\subsubsection{Steane's Code}

Steane's code has the following stabilizer checks~\cite{steane1996multiple}
\begin{align}
    \begin{array}{c c c c c c c }
        X & I & X & I & X & I & X   \\
        I & X & X & I & I & X & X   \\
        I & I & I & X & X & X & X   \\
        Z & I & Z & I & Z & I & Z   \\
        I & Z & Z & I & I & Z & Z   \\
        I & I & I & Z & Z & Z & Z  .  
    \end{array}  
\end{align}
The layer code based on Steane's code involves 13 surface code layers in total, 7 $xy$-layers corresponding to encoded qubits, 3 $xy$-layers corresponding to $X$ checks, and 3 $yz$-layers corresponding to $Z$ checks. 
The surface code layers are joined together at their junctions by topological defects as described in Section~\ref{sec:LayerCodeConstruction}. 
The resulting layer code is depicted in Figure~\ref{fig:SteanesCode}
\begin{figure}[H]
    \centering
    \includegraphics[scale = 1, page=2]{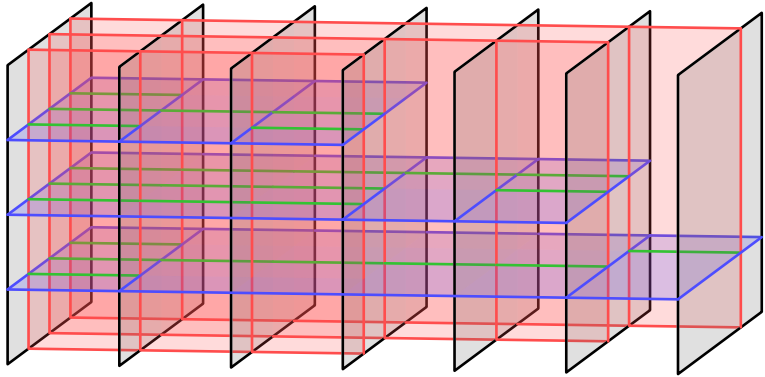}
    \caption{The layer code based on Steane's code. 
    7 grey $xz$-layers depict surface codes corresponding to input physical qubits. 
    3 blue $xy$-layers depict surface codes corresponding to input $X$ checks. 
    3 red $yz$-layers depict surface codes corresponding to input $Z$ checks. 
    Blue, red and green junction lines where the layers meet correspond to nontrivial topological defects.}
    \label{fig:SteanesCode}
\end{figure}

\subsection{Section Outline} 

In Section~\ref{sec:Background} we review relevant background material. 
In Section~\ref{sec:LayerCodeConstruction} we explain the layer code construction. 
In Section~\ref{sec:CodeProperties} we describe the properties of the layer codes. 
In Section~\ref{sec:Proofs} we provide proofs of our main results. 
In Section~\ref{sec:LatticeModel} we describe explicit stabilizer checks to implement layer codes on the lattice. 
In Section~\ref{sec:Discussion} we discuss our results.

\section{Background} 
\label{sec:Background}

In this section we summarize relevant background material that we build upon in the layer code construction. 

\subsection{Quantum Codes} 

A $n$-qubit quantum code $\mathcal{C}$ is a subspace of an $n$-qubit Hilbert space $(\mathbb{C}^2)^{\otimes n}$. We say that $\mathcal{C}$ encodes $k$ logical qubits if $\mathcal{C} \cong (\mathbb{C}^2)^{\otimes k}$. In this work we focus on stabilizer codes, where $\mathcal{C}$ is the common $+1$ eigenspace of $\mathcal{S}$, an abelian subgroup of the $n$-qubit Pauli group. Further, a code is ``CSS" if $\mathcal{S} = \mathcal{S}^X \cdot \mathcal{S}^Z$, where $\mathcal{S}^X$ and $\mathcal{S}^Z$ are generated, respectively, by $X$-type and $Z$-type Pauli operators. The normalizer $\mathcal{N}$ is the group of $n$-qubit Pauli operators that commute with $\mathcal{S}$. Finally, the distance $d$ of a stabilizer code is the cardinality of the support of the smallest operator $L \in \mathcal{N}$ such that $L$ commutes with all elements in $\mathcal{S}$ yet is not in $\mathcal{S}$. We sometimes refer to the $X$ (or $Z$) distance $d_X$ (resp. $d_Z$) of a CSS code, which corresponds to the cardinality of the support of the smallest $X$-type (resp. $Z$-type) Pauli operator that commutes with $\mathcal{S}^Z$ (resp. $\mathcal{S}^X$) and is not in $\mathcal{S}^X$ (resp. $\mathcal{S}^Z$). In that case we have $d = \min (d_X, d_Z)$.

\subsection{Surface Code} 

The surface code refers to a family of $n= \Theta( L_x \times L_z )  $ qubit CSS stabilizer codes~\cite{kitaev1997quantum}. 
The qubits are associated to edges of an $L_x \times L_z$ patch of the square lattice with two \textit{smooth} boundaries at $x=0,L_x$ and two \textit{rough} boundaries at $z=0,L_z$, see Figure~\ref{fig:SurfaceCodeIntro}. 
The stabilizer group is generated by star and plaquette checks
\begin{align}
    \mathcal{S} = \big\langle A_v, B_p \ |\  A_v = \prod_{v \ni e} X_e , B_p = \prod_{e \in p} Z_p \big\rangle. 
    \label{eq:SurfaceCodeGenerators}
\end{align}
The stabilizer checks in the bulk have weight 4, while the boundary stabilizer checks have weight 3. 
Note vertices along the smooth boundary, and plaquettes along the rough boundary are included in the patch of square lattice. 
Counting the number of qubits and independent stabilizer checks reveals there is a single encoded qubit. 
Logical representatives are given by a horizontal $X$ string operator and a vertical $Z$ string operator
\begin{align}
    \overline{X} = \prod_{e\in \hat{\gamma}_x} X_e, &&
    \overline{Z} = \prod_{e\in \gamma_z} Z_e ,
\end{align}
where $\hat{\gamma}_x$ is a path through the dual lattice running form $x=0$ to $x=L_x$, and $\gamma_z$ is a path through the lattice running from $z=0$ to $z=L_z$.

\begin{figure}[H]
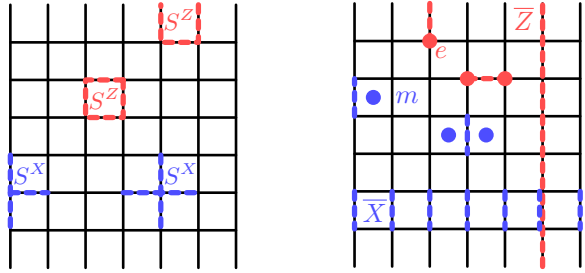

    \centering
    \includegraphics[page=10]{TikzFiguresUnrotated}
    \qquad \qquad 
    \includegraphics[page=11]{TikzFiguresUnrotated}
    \caption{A patch of surface code with examples of $X$ checks (blue) and $Z$ checks is depicted on the left. 
    On the right another patch of surface code is depicted with $X$ logicals, errors, and syndromes (blue) and similar for $Z$ (red). }
    \label{fig:SurfaceCodeIntro}
\end{figure}

\subsubsection{Anyons and Superselection Sectors}

The surface code can be formulated as the ground space of a local Hamiltonian in two dimensions
\begin{align}
    H=-\sum_v A_v -\sum_p B_p. 
\end{align}
The surface code Hamiltonian has a property known as \text{topological order}, which requires that the projection of any local operator onto the degenerate ground space is proportional to the identity~\cite{Bravyi2010Stability}. 
This is similar to the Knill-Laflamme condition for the detection and correction of local errors~\cite{knill1997theory}. 
While all local operators act trivially within the ground space, local excitations can be nontrivial. 
These local excitations correspond to syndromes of the surface code, i.e. collections of stabilizer generators with eigenvalue $-1$, within a local region of the lattice. 
Following Ref.~\cite{kitaev2003fault} we call an $A_v=-1$ syndrome an $e$ anyon, which can be created by a $Z$ string operator, while we call a $B_p=-1$ syndrome an $m$ anyon, which can be created by an $X$ string operator. 

Local excitations in the surface code are organized into equivalence classes known as \textit{superselection sectors}~\cite{kitaev2006anyons}. 
A pair of local excitations $E_1,E_2,$ is considered equivalent if $E_1$ can be converted to $E_2$ via the application of a local operator, and vice versa. 
In the surface code there are four superselections sectors known as the vacuum, electric charge, magnetic flux, and dyon, which we label $\{1,e,m,\psi\}$ respectively, where $\psi=e \times m$. 
Since $e$ and $m$ anyons are alwas created in pairs by local operators contained within the bulk of the surface code, the trivial sector $1$ corresponds to an even number of both $e$ and $m$ anyons. 
The superselection sector $e$ collects all excitations with an odd number of $e$ anyons, while $m$ is the sector corresponding to an odd number of $m$ anyons. 
A pair of superselection sectors can be \textit{fused} by considering representative local excitations for each on adjacent regions $A$ and $B$ and then taking equivalence classes under local operators on a larger region $C$ that contains both $A$ and $B$. 
The \textit{fusion} rules of the surface code superselection sectors are given by $\mathbb{Z}_2^2$. 
A pair of superselection sectors can also be \textit{braided} by moving one around the other using string operators. 
The nontrivial braiding properties of the surface code anyons are generated by a $-1$ braiding process between $e$ and $m$ due to the anticommutation of the relevant $X$ and $Z$ string operators. 
The braiding statistics of abelian anyons, such as those of the surface code, can be extracted form the \textit{exchange} statistics of the particles. 
It is possible to extract these from the lattice following the procedure in Refs.~\cite{Levin2003,Ellison2022}. 
In the surface code, the only nontrivial exchange statistic is a $-1$ associated to the emergent fermion $\psi$. 
The fusion and braiding structure for the surface code anyons is represented abstractly by a modular tensor category~\cite{kitaev2006anyons,Moore1989} that is denoted $\mathcal{Z}(\text{Vec}_{\mathbb{Z}_2})$. 

\subsubsection{Boundaries, Defects and Condensation}

A \textit{defect}, is a modified region of a topological code where different stabilizers or Hamiltonian terms are introduced -- for example the rough and smooth boundaries in the surface code have terms that differ from the bulk.
Because of these alterations, the creation and destruction of $e$ and $m$ anyons follow new rules that determine the structure of the logical operators of the new code. In this work we rely heavily on this fact to characterize the codes we build and their properties. 

\begin{figure}[H]
    \centering
    \includegraphics[page=8]{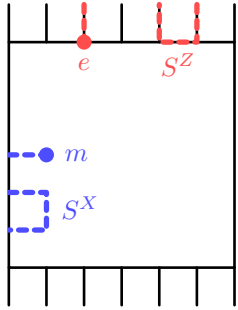}
    \caption{Operators and anyons condensing at the rough and smooth boundary of surface code. In the bulk of the code it is impossible to create a \emph{single} $e$ or $m$ anyon, while this becomes possible at the appropriate boundaries. We say that the rough boundary condenses the $e$ anyon, while the smooth boundary condenses the $m$ anyon. These condensation rules dictate the structure of the logical operators of the surface code. For example they allow for the creation of a single $e$-anyon at the top rough boundary which can be ``dragged down" and condensed at the bottom boundary, creating the logical $\overline{Z}$.}
    \label{fig:SurfaceCodeCondensation}
\end{figure}

A defect is \textit{topological} if its properties, such as which anyons condense there, do not depend on the microscopic realization  of the defect such as its precise shape. 
In this sense it is often said that topological defects are deformable. 
A defect is said to be \textit{gapped} if the topological order condition is still satisfied after the defect is introduced, i.e. the defect does not introduce constant sized logical operators in its surroundings. 
An anyon is said to \textit{condense} on a defect if it can be created (and annihilated) by a local operator on a region containing the defect. 
We remark that in the absence of a defect, anyons in nontrivial superselection sectors cannot be created by local operators, for example in the bulk of the surface code, anyons are \textit{always} created in pairs. 
However, in the vicinity of a defect the condensation of anyons in nontrivial superselection sectors becomes possible. 
That is, near a defect some anyons that would otherwise be nontrivial can be created or annihilated by the application of local operators, see Figure~\ref{fig:SurfaceCodeCondensation}. 
Anyon condensation must satisfy certain consistency conditions~\cite{Bais2009Condensate,Kong2014Anyon}. 
In an abelian anyon theory such as the surface code the condensing anyons form a subgroup of all anyons under fusion~\cite{Levin2013Protected,Barkeshli2013Classification}.

\subsubsection{From Defects to Stabilizers}

As previously mentioned, in this work we make use of defects that condense specific anyons to find new codes with properties that can differ from those of the standard surface code. However, a defect is only well defined if one can find Hamiltonian terms realizing its condensation rules. This is not always possible, and requires certain consistency conditions to be obeyed. In this section, we present a heuristic picture that can be used to find lattice terms that implement a desired defect. Regarding the specific case of the layer codes, the proof that these terms exist and commute -- therefore are stabilizers --  is given by their enumeration in Section \ref{sec:LatticeModel}.

A fundamental type of line defect is known as a gapped boundary~\cite{beigi2011quantum,Kitaev2012Models}. A gapped boundary, is simply a gapped line defect connected to only one topological phase, see Figure \ref{fig:FoldingTrick}: compare a general line defect (in green, left), to a boundary (green, right).
At a gapped boundary there are additional consistency constraints on the anyons that can condense there~\cite{Bais2009Condensate,Levin2013Protected,Kong2014Anyon}. 
The anyons that condense at a gapped boundary must braid trivially with one another, and each anyon must have trivial exchange statistics\footnote{
In the presence of physical fermions this condition is loosened to allow condensing anyons to have fermionic exchange statistics~\cite{aasen2017fermion}.}, 
see Figure \ref{fig:NeedTrivialBraiding}. 

\begin{figure}[H]
    \centering
    \includegraphics[page=9]{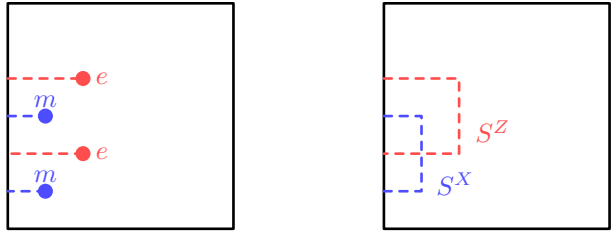}
    \caption{In this example we assume that the left boundary of the surface code patch can condense anyons with nontrivial braiding statistics, namely $e$ and $m$ anyons. We can then find operators $S_X, S_Z$ that both preserve the ground space: they result in no excitations, i.e. they are in the normalizer of the stabilizer group. However, because $e$ and $m$ have nontrivial braiding statistics, $S_X$ and $S_Z$ do not commute with each other and hence are nontrivial logical operators. In this case the boundary is said to not be gapped due to the existence of local logical operators.}
    \label{fig:NeedTrivialBraiding}
\end{figure}

Furthermore, the algebra of anyons that condense at a gapped boundary must be maximal meaning all anyons that do not condense braid nontrivially with some anyon that does condense.
Intuitively, this condition ensures that the condensation of anyons at the boundary leaves nothing behind, and hence is consistent with a gapped boundary with nothing on the other side. 
For abelian anyon theories these conditions are summarized by an object known as a Lagrangian subgroup~\cite{Levin2013Protected}. The existence of a Lagrangian subgroup guarantees the existence of Hamiltonian terms or checks that yield the desired condensation rules:

\begin{lemma}[\cite{Levin2013Protected,Barkeshli2013Classification}]
\label{lem:GappedBoundary}
There is a one-to-one mapping between Lagrangian subgroups of an abelian anyon theory and gapped boundaries.
\end{lemma}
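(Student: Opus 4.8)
The plan is to exhibit the correspondence as an explicit pair of mutually inverse maps. In one direction, to a gapped boundary $B$ of the abelian anyon theory $\mathcal{A}$ assign the set $L(B)\subseteq\mathcal{A}$ of anyons that condense on $B$, and show $L(B)$ is a Lagrangian subgroup. Closure under fusion is immediate: if $a$ and $b$ both condense, bringing the two local creation operators together exhibits a local operator creating $a\times b$, and running the process backwards condenses inverses. Isotropy is exactly the consistency condition reviewed above and illustrated in Figure~\ref{fig:NeedTrivialBraiding}: if two condensing anyons had nontrivial mutual braiding, the associated string operators would both preserve the ground space yet fail to commute, producing a constant-size logical near $B$ and contradicting that $B$ is gapped; similarly a condensing anyon with nontrivial topological spin would obstruct a consistent gapped boundary (absent physical fermions). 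Maximality — that every anyon braiding trivially with all of $L(B)$ already lies in $L(B)$, i.e. $L(B)=L(B)^{\perp}$ — is again the statement that $B$ carries no residual topological order: an anyon $c\notin L(B)$ with $c\in L(B)^{\perp}$ would let one build a pair of noncommuting ground-space-preserving operators from a $c$-string ending on $B$ together with a string of the anyon that detects $c$, yielding a local logical. Since the mutual-braiding form on a modular theory is nondegenerate, $L(B)=L(B)^{\perp}$ forces $|L(B)|^{2}=|\mathcal{A}|$, so $L(B)$ is Lagrangian.

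In the other direction, given a Lagrangian subgroup $L$, construct a boundary $B(L)$ by condensing the bosons of $L$ in a half-plane and collapsing the condensed region to a line. Consistency of this condensation is precisely the assertion that $L$ is isotropic (its elements are bosons and mutually local), so the anyon-condensation formalism~\cite{Bais2009Condensate,Kong2014Anyon} applies; triviality of the condensed phase follows from maximality, since the surviving deconfined anyons are labelled by $L^{\perp}/L=L/L$, which is trivial. Hence the condensed half-plane can be shrunk to a one-dimensional defect abutting the vacuum — a genuine gapped boundary — and by construction the anyons absorbed there are exactly those of $L$. (Equivalently, one may fold the boundary into a gapped domain wall and invoke the Lagrangian-algebra classification of such walls, which for an abelian theory reduces to the subgroup description above.)

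It remains to check that these maps are mutual inverses. The composite $L\mapsto B(L)\mapsto L(B(L))$ is the identity because, by construction, the anyons absorbed at $B(L)$ are exactly those of $L$. The composite $B\mapsto L(B)\mapsto B(L(B))$ is the identity once one knows that a gapped boundary of an abelian anyon theory is determined up to equivalence by the subgroup of anyons it condenses: condensing $L(B)$ in a collar neighbourhood of $B$ leaves a trivial phase on the far side, so $B$ can be deformed into the standard boundary $B(L(B))$ by a finite-depth local unitary supported near $B$. I expect this rigidity statement to be the main obstacle. The existence half of the correspondence is essentially Bais--Slingerland condensation, and the Lagrangian property of $L(B)$ is a short consequence of gappedness plus the braiding consistency conditions; but ruling out inequivalent boundaries with identical condensation data requires showing that the only residual freedom is stacking a one-dimensional invertible bosonic phase along the boundary, of which there is only the trivial one. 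For the layer-code application only the forward direction and the mere \emph{existence} of the associated commuting checks are needed, and the latter is verified concretely by the enumeration of stabilizers in Section~\ref{sec:LatticeModel}.
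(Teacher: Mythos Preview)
The paper does not prove this lemma at all: it is stated with a citation to Refs.~\cite{Levin2013Protected,Barkeshli2013Classification} and then immediately used as a black box (``this existence Lemma does not provide a practical way to work out the exact lattice terms\ldots''). There is therefore no ``paper's own proof'' to compare your proposal against.

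That said, your sketch is a faithful outline of the argument in the cited references. You correctly identify the three ingredients of the Lagrangian property (closure, isotropy, maximality) and tie each to the gappedness condition; you correctly invoke anyon condensation for the reverse direction and note that $L^{\perp}/L$ being trivial is what makes the condensed phase vacuum; and you are right that the delicate point is uniqueness --- that two gapped boundaries with the same condensation data are equivalent --- which the cited papers handle by classifying the residual boundary data (for abelian theories, nothing survives beyond the Lagrangian subgroup itself). Your closing observation is also exactly the paper's attitude: for the layer-code construction only the existence of commuting checks realizing a given Lagrangian subgroup is needed, and that is established concretely by the explicit enumeration in Section~\ref{sec:LatticeModel} rather than by appeal to the abstract lemma.
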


However, this existence Lemma does not provide a practical way to work out the exact lattice terms/stabilizer checks.
Explicit constructions of local Hamiltonian terms that realize boundaries corresponding to any Lagrangian algebra are known for quantum doubles and string-net models~\cite{beigi2011quantum,Kitaev2012Models}.
Here we instead use abstract anyon string operators that appear near a defect as a heuristic guide to obtain a complete set of lattice terms, see Section~\ref{sec:LatticeModel}.

A general line defect in two dimensions can connect a pair of topological phases. 
This includes line defects that are formed by simply placing gapped boundaries for each topological phase that meets the defect line next to one another.
A range of other defects are also possible. 
If the topological phases meeting at a line defect are equivalent, i.e. they are described by the same anyon theory, the defect may be \textit{invertible}. 
An invertible line defect is one that has an inverse, for example the electromagnetic duality line defect in surface code~\cite{bombin2010topological}.  
That is, composing the defect with its inverse defect results in a line defect that is trivial, i.e. does nothing to the anyons passing through it. 
The anyons that condense on an invertible line defect must come in pairs, consisting of an anyon on each side of the defect that are related by an isomorphism of the anyon theory.
Defects that factorize into gapped boundaries from either side are, in a sense, as far as possible from being invertible. 
That is, there is a maximal set of anyons on either side that can condense on the line defect. 

On the other hand, all gapped line defects can be reinterpreted as gapped boundaries of some topological phase via the \textit{folding trick}~\cite{Kitaev2012Models}. 
The folding trick maps an arbitrary line defect to a gapped boundary by folding along the defect line, thereby stacking the topological phases on either side of the line defect onto one another, see Figure~\ref{fig:FoldingTrick}. 
Combined with Lemma \ref{lem:GappedBoundary}, this trick allows us to obtain Hamiltonian terms for arbitrary gapped line defects.

\begin{figure}[H]
    \centering
    \includegraphics[page=62]{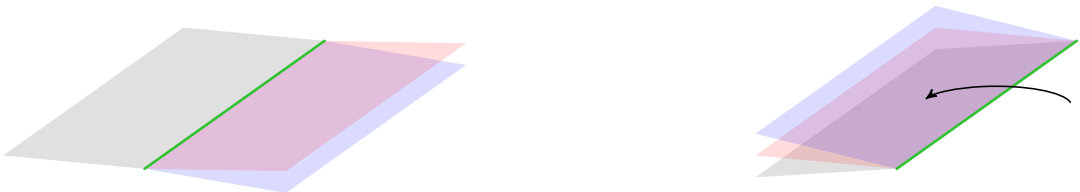}
    \caption{An illustration of the folding trick applied to a line defect (left) to produce an equivalent gapped boundary (right). Each layer connected to the green line is a patch of surface code. On the right, the three superposed patches can be interpreted as a single topological phase, which allows us to use Lemma \ref{lem:GappedBoundary} and figure out the checks corresponding to the defect on the left.}
    \label{fig:FoldingTrick}
\end{figure}

In this work we focus on line defects that reduce to gapped boundaries of multiple copies of the surface code under the folding trick. 
Gapped boundaries of several copies of the surface code are described by Lagrangian subgroups of surface code anyons. 
A Lagrangian subgroup for $c$ copies of surface code is defined by a set of $c$ independent nontrivial anyons with trivial self-statistics, and trivial mutual braiding statistics. 
For a single copy of surface code the possible choices of generator for a Lagrangian subgroup are $e$ or $m$, which correspond to rough and smooth boundaries, respectively. 
The Lagrangian subgroup description of a line defect specifies the anyon types that can be locally created in the vicinity of the defect. 
This information can then be used to derive products of string operators near the defect that create pairs of anyons and push them into the defect, leaving behind no excitations~\cite{Levin2013Protected}, see Figure \ref{fig:NeedTrivialBraiding}. 
Such operators are commonly referred to as condensed on the line defect. 
This picture can be used to derive stabilizer generators for the defect on the lattice. 
For example, a single copy of the surface code with a gapped boundary defined by the condensation of $e$ has short $e$ string operators that end on the gapped boundary. 
Converting these short strings into lattice operators produces the 3-body rough boundary $B_p$ terms introduced in Eq.~\eqref{eq:SurfaceCodeGenerators}. 
A similar procedure applied to $m$ condensing boundaries produces the smooth boundaries of the surface code, see Figure~\ref{fig:SurfaceCodeCondensation}. 
The layer code construction also involves point defects, where line defects meet in multiple copies of surface code. 

Point defects in a two-dimensional topological phase are described by projection operators in the \textit{annulus algebra} that surrounds them. 
An annulus algebra is generated by equivalence classes of all operators supported on an annulus shaped region that create no excitations, modulo the subalgebra of operators that are local within a disclike subregion of the annulus and create no excitations~\cite{haahInvariant}. 
For a stabilizer code this corresponds to the group of logical operators supported on an annulus, modulo stabilizers supported on disclike subregions of the annulus. 
Intuitively this separates the stringlike operators that wrap around the annulus from the short loops that do not. 
Each nontrivial equivalence class corresponds to an independent stabilizer that can be included in the vicinity of the point defect. 
The annulus algebra is a matrix algebra, and can be decomposed into irreducible representation blocks. 
Each irreducible block corresponds to a definite type of point defect.
The projection operator in the annulus algebra that picks out a certain block can be converted into stabilizer generators that realize the corresponding point defect on the lattice~\cite{haahInvariant}. 

\begin{figure}[H]
    \centering
    \includegraphics[page=15]{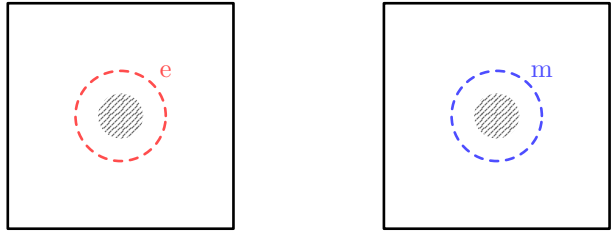}
    \caption{Closed $e$ (red) and $m$ (blue) string operator generators of the annulus algebra around a puncture (shaded circle) in the surface code with no line defects.}
    \label{fig:AnnulusAlg}
\end{figure}

A simple example of the annulus algebra is given by a single copy of surface code with no line defects, see Figure~\ref{fig:AnnulusAlg}. In this case the algebra reduces to a $\mathbb{Z}_2^2$ algebra of closed loop operators generated by $e$ and $m$ loops around the chosen point. 
The irreducible projection operators in this algebra pick out states that have $\pm 1$ braiding statistics with $e,m,$ respectively. 
These four sectors precisely specify the four anyon types due to the modularity, or nondegeneracy, of the braiding~\cite{kitaev2006anyons}. 
Converting the annulus projection operators into lattice operators generates the familiar star and plaquette stabilizer checks of the surface code. 
Here modularity means that the eigenstates of a single star and plaquette can specify a surface code anyon of any type.

\begin{figure}[H]
    \centering
    \includegraphics[page=16]{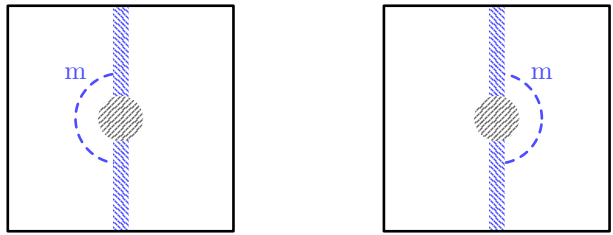}
    \caption{String arc operator generators (blue) of the annulus algebra around a puncture (shaded circle) that lies on a vertical domain wall given by $m$-condensing gapped boundaries (shaded bar).}
    \label{fig:AnnulusAlg2}
\end{figure}

Another simple example is given by a single copy of the surface code with a pair of m-condensing gapped boundaries to vacuum, see Figure~\ref{fig:AnnulusAlg2}. 
In this case the algebra again reduces to $\mathbb{Z}_2^2$, generated by independent arcs of $e$ string operator that enclose the point defect from the left and right side of the gapped boundary region. 
The irreducible projection operators in this algebra pick out states that have $\pm 1$ braiding with the pair of $e$ string arcs.
These four sectors are generated by an $e$ excitation pinned to the defect from the left and the right, respectively.
Converting these projections into lattice operators produces the familiar truncated $X$ stabilizers that appear on the smooth boundary of the surface code.

The above examples demonstrate a heuristic counting rule for a point defect to be fully gapped. This hueristic corresponds to saturating an upper bound of at most one independent generator for the annulus algebra per half-plane of surface code that meets the defect.
In Section~\ref{sec:LatticeModel} we prove that all point defects in the layer code lattice model are fully gapped, i.e. there are no local logical operators, by showing that the local neighborhood around each defect is correctable.

\subsection{Energy Barrier}
\label{sec:intro-energy-barrier}
In this subsection we introduce the energy barrier of a quantum stabilizer code. 

The \textit{energy penalty} of a Pauli operator is defined to be the weight of the syndrome generated by that operator.
The energy barrier of a sequence of singe qubit Pauli operators $(P_i)_{i=1}^{N}$ is the maximum energy penalty of their cumulative product sequence $(\prod_{i=1}^j P_i)_{j=1}^{N}$. 
The energy barrier of a multi-qubit Pauli operator $L$ is defined to be the minimum energy barrier over all sequences $(P_i)_{i=1}^{N}$ of single qubit Pauli operators such that $\prod_{i=1}^N P_i = L$. 
The energy barrier of a logical class is defined to be the minimum energy barrier for any logical representative in the logical class. The \textit{energy barrier} of a code is the minimum energy barrier of a nontrivial logical class. 
The above definitions have variants that are restricted to products of Pauli $X$ (or $Z$) operators. 
A more universal definition allows for a product of a constant number of Pauli operators to be applied at each step, rather than single Pauli operators. 
For LDPC codes, the scaling of the energy barrier with the total number of qubits under both the more and less restrictive definitions must match, up to a universal constant, since applying a constant number of Pauli operators at once can only change the energy penalty by a constant amount. It might be tempting to conclude that a macroscopic energy barrier is sufficient to guarantee the existence of a quantum memory, however there are indications that this assumptions is misguided~\cite{siva2017topological}.

\section{Layer Code Construction} 
\label{sec:LayerCodeConstruction}

In this section we describe the central construction of the current work. 
This construction takes as input an arbitrary CSS stabilizer code and produces as output a layer code. 
Layer codes are three-dimensional topological CSS codes formed by stacks of surface codes, one for each physical qubit, $X$ check, and $Z$ check of the input code, joined together by topological defect lines in a pattern that is based on the Tanner graph of the input code. 

The core idea behind the construction is fairly simple. Our goal is to find topological CSS codes in three dimensions that saturate the BPT bound. 
We start from the observation that concatenating an $[[n,\Theta(n),\Theta(n)]]$ good CSS quantum code $\mathcal{C}$ with an $[[L^2,1, \Theta(L)]]$ surface code 
produces an $[[nL^2, \Theta(n), \Theta(nL)]]$ code with very high stabilizer weight and long range interactions but whose code parameters saturate the BPT bound for $L=\Theta(n)$. 
Our goal is then to reduce the stabilizer weight and interaction range back to constant without sacrificing the desired code parameters. 

\begin{figure}[t]
    \centering
    \includegraphics[page=13]{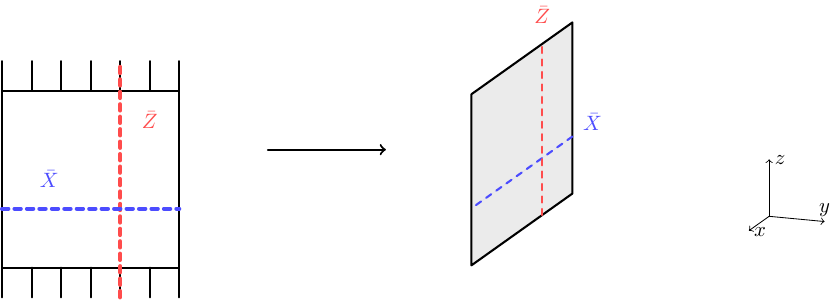}
    \caption{
    \justifying
    A lattice surgery patch allows a product of a pair of global logical operators to be inferred via local measurement of the surface code checks. }
    \label{fig:LatticeSurgeryExample}
\end{figure}

Fortunately, lattice surgery~\cite{horsman2012surface} can address both of these issues. For example, a $\overline{Z}\, \overline{Z}$ check between two distant patches of surface code can be inferred via local stabilizer measurements using a surgery patch as depicted in Figure~\ref{fig:LatticeSurgeryExample}. 
One can then imagine replacing all the high-weight long-range checks with lattice surgery patches to obtain a 3D local code with parameters $[[\Theta(L^3),\Theta(L),\Theta(L^2)]]$. 

This approach, however, quickly runs into issues, as surgery patches for $X$-type checks do not commute with those for $Z$-type checks when they act on the same qubit of $\mathcal{C}$, or equivalently, on the same surface code patch of the concatenated code. The final ingredients we introduce are topological defects that allows us to ``sew together'' lattice surgery patches that act on the same qubits, in a way that restores commutation and preserves the desired code parameters and locality properties.

\begin{figure}[H]
    \centering
    \includegraphics[page=1]{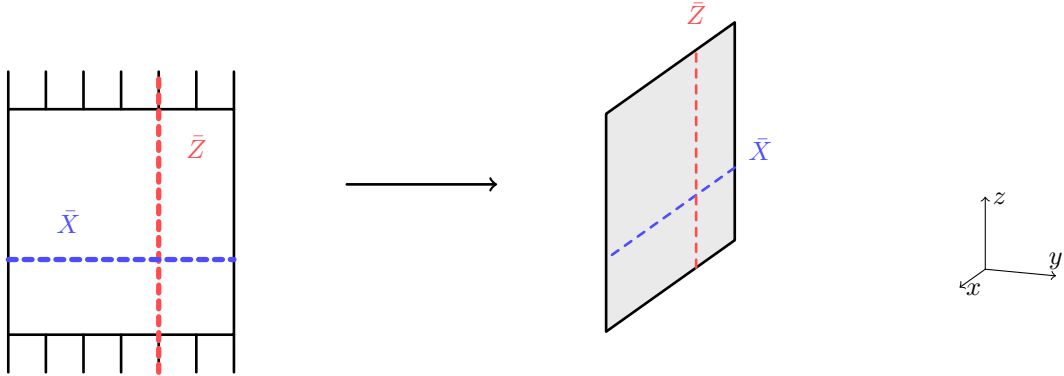}
    \caption{On the left is the usual representation of the surface code with rough and smooth boudaries such that the $\overline{Z}$ logical operator runs from top to bottom, while the $\overline{X}$ logical operator runs from left to right.
    This is an abstraction of the lattice shown in Figure~\ref{fig:SurfaceCodeIntro}. Throughout the rest of this work we represent the surface code in 3D space with the boundary conditions oriented such that the $\overline{Z}$ operator runs along the $\hat{z}$ axis, and the $\overline{X}$ operator runs along the $\hat{x}$ axis. In the three-dimensional picture the rough boundaries are left implict.  }
    \label{fig:surface-code-bdry}
\end{figure}

\newpage
\subsection{Overview}
Before explaining the details of the layer code construction, we present a high level overview in Algorithm \ref{alg:layer-codes-alg}. 
Figure~\ref{fig:shor-steps} illustrates the steps of the algorithm when applied to Shor's code as an input.

\algnewcommand\algorithmicforeach{\textbf{for each}}
\algdef{S}[FOR]{ForEach}[1]{\algorithmicforeach\ #1\ \algorithmicdo}

\begin{algorithm}[H]
\caption{Layer code construction}
\label{alg:layer-codes-alg}
\begin{algorithmic}
    \Require A CSS LDPC code $\mathcal{C}$, with parameters $[[n,k,d]]$, $n_X$ $X$-type checks, $n_Z$ $Z$-type checks, and maximum check weight $w$.
    \Ensure A CSS code that is local in 3D with parameters $[[\Theta(nn_Xn_Z),k,\Omega(\frac{1}{w}d \min(n_X,n_Z))]]$.
    \ForEach {qubit $i \in [0, \dots, n-1]$}
    \State Create an $xz$-layer $\mathcal{D}_i$, with the boundary conditions specified in Figure \ref{fig:surface-code-bdry}. (Depicted in Figure~\ref{fig:shor-steps}~a).
    \EndFor
    \ForEach {$s_j$ in the $Z$-type checks of $\mathcal{C}$}
    \State Let $i_1,\dots,i_T$ be the qubits in the support of $s_j$. Create a $yz$-layer $\mathcal{Z}_j$ starting from $\mathcal{D}_{i_1}$, going through each of the $\mathcal{D}_{i_t}$ until $\mathcal{D}_{i_T}$, with the specifications given in Section \ref{par:k-body-terms} and illustrated in Figure~\ref{fig:k-body-Z-checks}. (Depicted in Figure~\ref{fig:shor-steps}~b).
    \EndFor
    \ForEach {$\tilde{s}_k$ in the $X$-type checks of $\mathcal{C}$}
    \State Let $i_1,\dots,{i_T}$ be the qubits in the support of $\tilde{s}_k$. Create an $xy$-layer $\mathcal{X}_k$ starting from $\mathcal{D}_{i_1}$, going through each of the $\mathcal{D}_{i_t}$ until $\mathcal{D}_{i_T}$, with the specifications given in Section \ref{par:k-body-terms}, and illustrated in Figure~\ref{fig:k-body-X-checks}. (Depicted in Figure~\ref{fig:shor-steps}~c).
    \EndFor
    \ForEach {$s_j$ in the $Z$-type checks of $\mathcal{C}$}
        \ForEach {$\tilde{s}_k$ in the $X$-type checks of $\mathcal{C}$}
            \State If the support of $s_j$ and $\tilde{s}_k$ overlap, introduce line and point defects between the layers $\mathcal{Z}_j$ and $\mathcal{X}_k$ following the specifications of Section~\ref{par:line-point-defects}. (Depicted in Figure~\ref{fig:shor-steps}~d).
        \EndFor
    \EndFor
\end{algorithmic}
\end{algorithm}
\newpage
\begin{figure}[H]
    \centering
    \includegraphics[page=60]{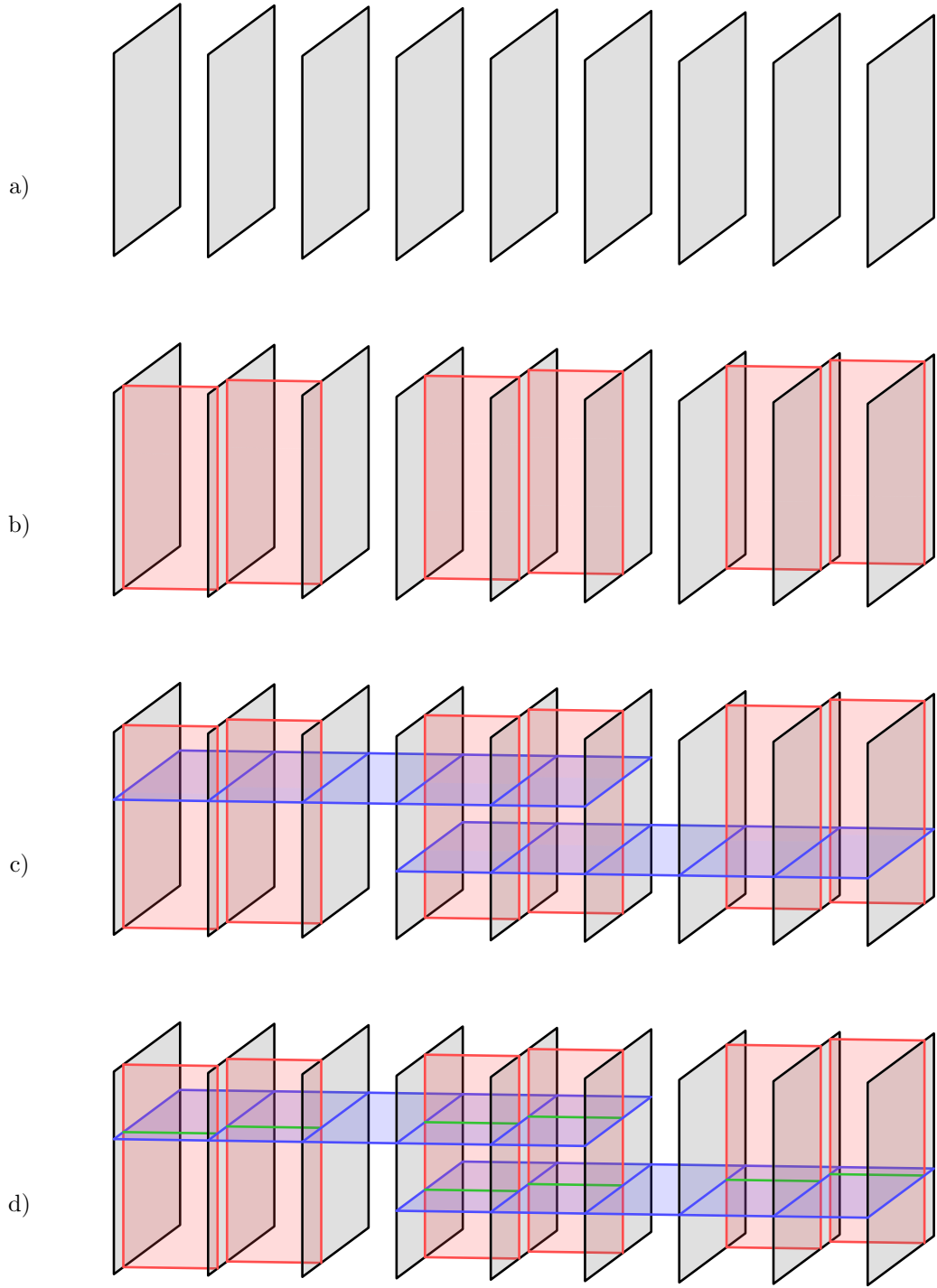}
    \caption{Steps of the layer code construction outlined in Algorithm~\ref{alg:layer-codes-alg} when applied to Shor's code.}
    \label{fig:shor-steps}
\end{figure}

\subsection{Lattice surgery as a topological defect} 

We start by reviewing lattice surgery on the surface code from a topological quantum field theory~\cite{Witten1988TQFT,atiyah1988topological} lens. This allows us to abstract away unnecessary details in the construction. The lattice stabilizer checks that implement the defects we introduce can be found in Section~\ref{sec:LatticeModel}.

The abstract picture for the surface code is a patch of \TC topological order which supports anyon types $\{1,e,m,\psi\}$ together with a pair of $e$-condensing horizontal, and $m$-condensing vertical, gapped boundaries arranged as shown in Figure~\ref{fig:surface-code-bdry}. See Section~\ref{sec:Background} for basic background information on surface code and its topological quantum field theory features including anyons, defects and boundaries. 

Let us start by considering how a logical $\overline{Z}\,\overline{Z}$ stabilizer can be enforced on a pair of surface codes by introducing a topological defect line\footnote{The connection between anyon condensation and measurement has been explored recently in Refs.~\cite{Ellison2022,ellison2022pauli,kesselring2022anyon}.}. The defect line is pictured in red in Figure \ref{fig:zzDefectNonlocal}. 

\begin{figure}[H]
    \centering
    \includegraphics[page=63,scale=.8]{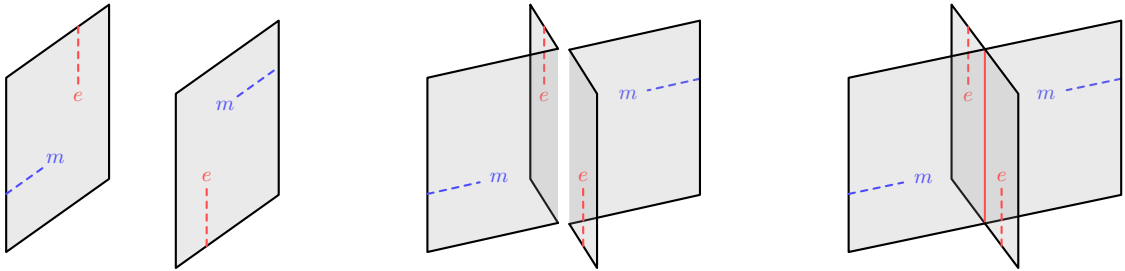}
    \caption{ On the far left we depict a pair of non-interacting surface codes, with the usual rough and smooth boundaries. The second and third pictures show the two patches being bent and brought together at their middle, then ``stitched" with a defect (red line). Away from the defect the condensation rules are unchanged, however near the defect they differ. The additional rules are elucidated below.}
    \label{fig:zzDefectNonlocal}
\end{figure}

Our chief concern is to ensure that the only logical $X$-type operator remaining is $\overline{X}\,\overline{X}$. 
This is achieved by having $m_1^\ell m_1^r m_2^\ell m_2^r$ be the only $m$-type anyon condensed by the defect, where $1,2,$ labels the layer and $\ell,r,$ labels the region to the left, right, of the defect. 

We now move on to the $e$-anyons; these have to condense somehow on the defect, otherwise the $\overline{Z}$ operators on a surface code patch on either side of the defect would become independent.
The analysis is simplified by the fact that condensing anyons map to elements of the normalizer, see Figure \ref{fig:NeedTrivialBraiding}. These elements of the normalizer should further be stabilizers, otherwise the defect introduces constant-size logicals. It is easy to see that in order to satisfy this condition, $e$-anyons have to condense in pairs: $e_1^\ell e_2^\ell, e_1^r e_2^r, e_1^\ell e_1^r$. In terms of elements of the normalizer, these pairs are mapped to pairs of semi circles at the defect, each intersecting with $\overline{X}\,\overline{X}$ twice and therefore satisfying the commutation requirement. In total, the set of anyons that the defect has to condense is $\langle e_1^\ell e_2^\ell, e_1^r e_2^r, e_1^\ell e_1^r, m_1^\ell m_1^r m_2^\ell m_2^r  \rangle$. Another equivalent way to derive this group, is to note that according to Lemma \ref{lem:GappedBoundary} we are looking for a Lagrangian subgroup of anyons that includes $m_1^\ell m_1^r m_2^\ell m_2^r$ and no further purely $m$-type excitations. It can be verified that this corresponds to $\langle e_1^\ell e_2^\ell, e_1^r e_2^r, e_1^\ell e_1^r, m_1^\ell m_1^r m_2^\ell m_2^r  \rangle$. These condensation rules, and the logical operators they give rise to are exemplified in Figure~\ref{fig:zzDefectNonlocalExamples}.

\begin{figure}[H]
    \centering
    \includegraphics[page=64,scale=.8]{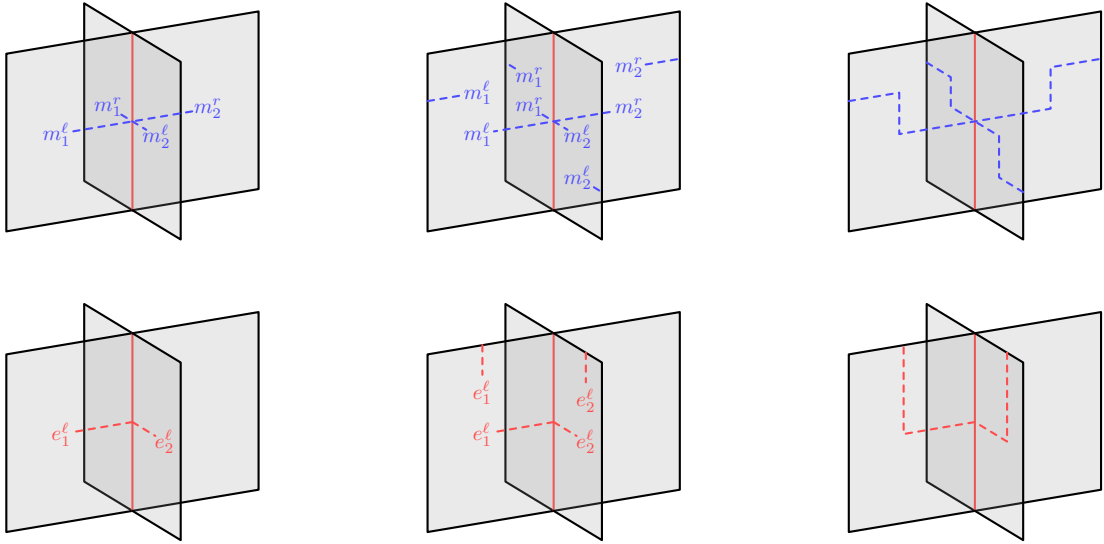}
    \caption{An illustration of how the condensation rules near a defect junction and gapped boundaries produce logical operators (top) and stabilizers (bottom).}
    \label{fig:zzDefectNonlocalExamples}
\end{figure}

Figure~\ref{fig:zzDefectNonlocalExamples} illustrates the impact of the condensation rules on the normalizer. The top row shows how the condensation of $m_1^\ell m_1^r m_2^\ell m_2^r$ allows the operator $\overline{X}\, \overline{X}$ to cross the defect line without leaving excitations behind. This is in contrast with the behaviour of a single $\overline{X}$ operator. Indeed, in the case of two uncoupled surface codes, an $\overline{X}$ operator could be created by creating $m_1^\ell$ and $m_1^r$ via boundary condensation, and joining them together in the middle. 
However, due to the line defect this is no longer possible: if we again create $m_1^\ell$ and $m_1^r$ and drag them close to the line defect, since $m_1^\ell m_1^r m_2^\ell m_2^r$ is the \emph{only} $m$-type anyon that can condense at the defect, we are left with an $m_2^\ell m_2^r$ excitation.
This shows that $\overline{X}$ is no longer in the normalizer of the code.

A dual perspective on the same phenomenon is illustrated in the bottom row of Figure~\ref{fig:zzDefectNonlocalExamples}. Using condensation rules, it is possible to create the dashed red string in the bottom right picture without leaving excitations behind. Because this string is created by a constant size operator, it belongs to the stabilizer group. This string can then be dragged down and deformed into a $\overline{Z}\,\overline{Z}$ operator supported on the two patches. This accomplishes the goal of enforcing a $\overline{Z}\,\overline{Z}$ stabilizer.

We remark that the defect depicted in Figure~\ref{fig:zzDefectNonlocalExamples} is not invertible, and also does not factor into a pair of gapped boundaries -- i.e. it cannot be obtained by \emph{locally} modifying the checks of the surface code patches. 
Above we have used the folding trick to view the line defect as a gapped boundary of four half-planes of \TC topological order. 
Recall that for stacks of \TC topological order, a Lagrangian subgroup of condensing anyons can be described compactly by a set of independent generators,  which have trivial self and mutual braiding statistics, with size equal to the number of toric codes that terminate at the gapped boundary. 

\begin{figure}[H]
    \includegraphics[page=3,scale=.8]{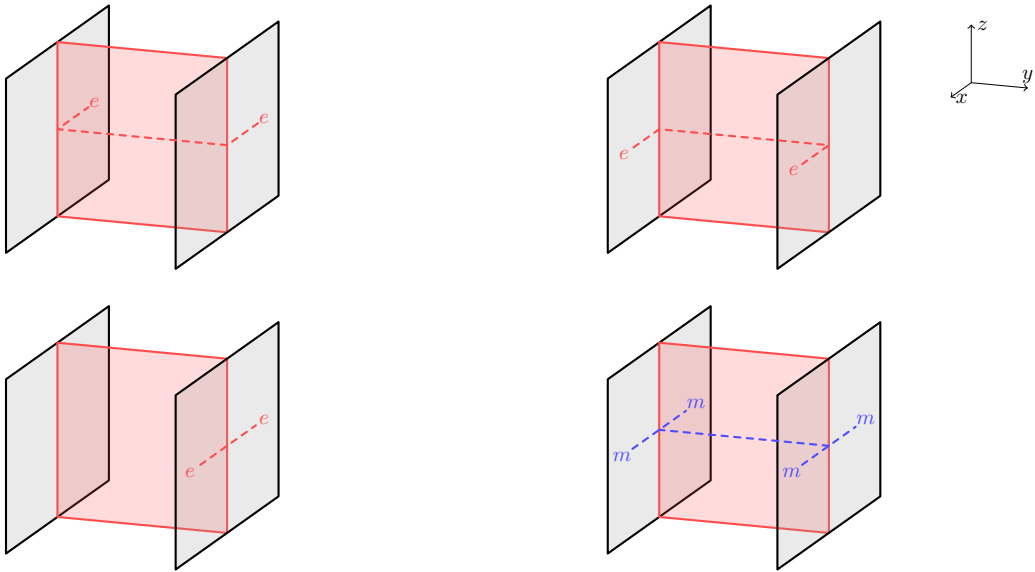}
    \centering
    \caption{
    Each grey layer is a surface code patch with the boundary conditions specified in Figure~\ref{fig:surface-code-bdry}. 
    We refer to the grey layers as $xz$-layers. The red layer corresponds to a topological defect emulating a $\overline{Z}\, \overline{Z}$ stabilizer between the two $xz$-layers. We refer to the red layers as $yz$-layers. The red layer condenses $\langle e_1^\ell e_2^\ell, e_1^r e_2^r, e_1^\ell e_1^r, m_1^\ell m_1^r m_2^\ell m_2^r  \rangle$.}
    \label{fig:ZZDefect}
\end{figure}

The topological defect of Figure \ref{fig:zzDefectNonlocal} is only local if the pair of surface code layers involved are within a constant distance of each other. 
Generalizing this construction to allow concatenation with codes that are not spatially local, and whose checks might involve qubits that are fart apart, requires an extra trick. 
This is provided by decomposing the above topological defect into a pair of topological defects joined by an additional layer of surface code that runs between the original surface code layers, see Figure~\ref{fig:ZZDefect}. 

\begin{figure}[H]
    \centering
    \includegraphics[page=4, scale=.9]{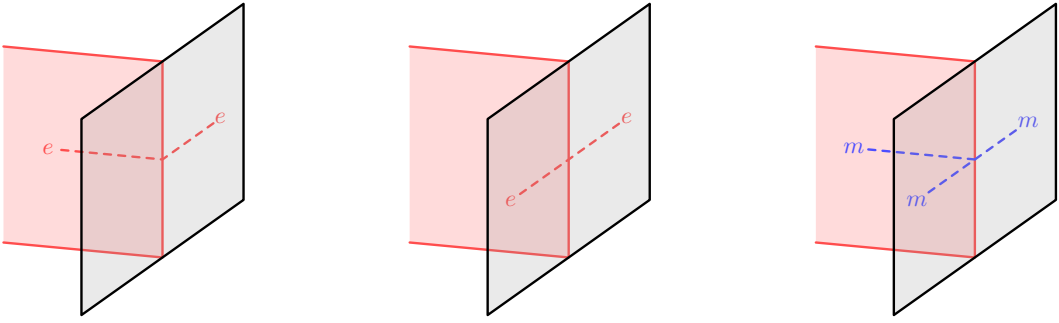}
    \caption{Topological defect trijunction (red line on the grey layer) between a $zy$ layer and an $xz$-layer that condenses $\langle e_1^\ell e_1^r, e_1^\ell e_{\hat{z}}, m_1^\ell m_1^r m_{\hat{z}} \rangle$. This elucidates how to realize the defect depicted in Figure \ref{fig:ZZDefect}. Here, both the red layer and the grey layer are surface codes, while their intersection line supports a nontrivial defect. Using the folding trick, the line where they intersect can also be seen as the gapped boundary of three surface codes.}
    \label{fig:zyLayerDefect}
\end{figure}

This results in a pair of topological defects where a triple of \TC topological orders meet, see Figure~\ref{fig:zyLayerDefect}. The new pair of line defects are specified by the condensed anyons $\langle e_1^\ell e_1^r, e_1^\ell e_{\hat{z}}, m_1^\ell m_1^r m_{\hat{z}} \rangle$ and $\langle e_2^\ell e_2^r, e_2^\ell e_{\hat{z}}, m_2^\ell m_2^r m_{\hat{z}} \rangle$ respectively, which are equivalent up to a relabelling of the anyons -- note that both these subgroups are Lagrangian, and therefore the new trivalent defect is guaranteed to exist.

\begin{figure}[H]
    \centering
    \includegraphics[page=65,scale=.9]{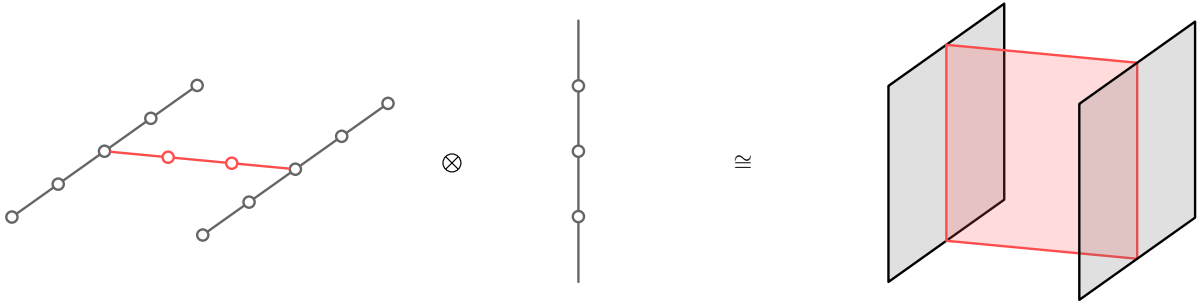}
    \caption{On the left of the ``$\cong$" sign are two classical codes on either side of ``$\otimes$" that are hypergraph-producted to yield the topological defect on the right of the ``$\cong$" sign. Each ``$\circ$" is a qubit, and every line between two qubits is a simple $ZZ$ check. This representation allows for a classical interpretation of the defect on the right. Following Ref.~\cite{baspin2023combinatorial}, the leftmost code can be viewed as a pair of repetition codes (grey) coupled by an effective long-range $ZZ$ (red). This enforces the repetition codes to either \emph{both} be in the all $0$ state, or in the all $1$ state. }
    \label{fig:zzDefectHGP}
\end{figure}

We can now state the main point of this section: the resultant defect, i.e. the red sheet in Figure \ref{fig:ZZDefect}, corresponds exactly to a lattice surgery patch \cite{horsman2012surface,vuillot2019code} connecting the two surface codes. By using the language of topological defects, we showed how to enforce a $\overline{Z} \, \overline{Z}$ on two surface codes. There is also a way to view this defect from a combinatorial lens by way of the hypergraph product construction~\cite{tillich2014quantum}, see Figure \ref{fig:zzDefectHGP}.

\subsection{Repetition code example} 

It is convenient at this point to illustrate our discussion with a specific example. 
Here we describe the layer code that corresponds to taking the 3-qubit repetition code $\mathcal{S} = \langle Z_1 Z_2, Z_2 Z_3 \rangle$ as input.
This layer code is given by two pairs of the defects from Figure~\ref{fig:zyLayerDefect} connecting together a triple of surface code layers as shown in  Figure~\ref{fig:RepetitionCode2}.
This layer code displays an example of the topological defect network structure inherent to all layer codes. 
That is, it is formed by layers of surface code, joined together by topological defect line junctions according to the Tanner graph of the input repetition code. 

\begin{figure}[H]
    \centering
    \includegraphics[page=5,scale=.9]{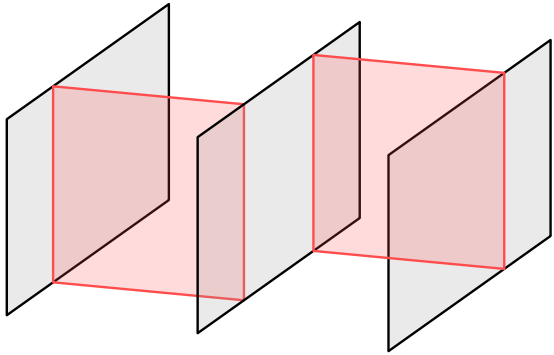}
    \caption{Topological defect network emulating the concatenation of a surface code and a repetition code. It is composed of three $xz$-layers, each representing a qubit of the repetition code, and two $yz$-layers, each representing a check of the repetition code.}
    \label{fig:RepetitionCode2}
\end{figure}

In this layer code, the $\hat{y}$ boundaries of the $yz$-oriented surface codes are taken to be $e$-condensing. 
With this choice, all boundaries ending on a $yz$-plane match and are $e$-condensing. 
Due to the defects, the ${Z}$ string operator in each layer provides an equivalent logical $\overline{Z}$ representative. 
On the other hand, the ${X}$ string operator in an individual layer no longer preserves the code space as it creates an additional excitation when it pierces the newly introduced line defects. 
A logical $\overline{X}$ representative of the concatenated code is found by connecting together the product of the $X$ string operators in the original three surface code layers with additional string operators that remove the excitations that appear at the junctions, see Figure~\ref{fig:RepetitionCodeLogical}. 

\begin{figure}[H]
    \centering
    \includegraphics[page=6,scale=.9]{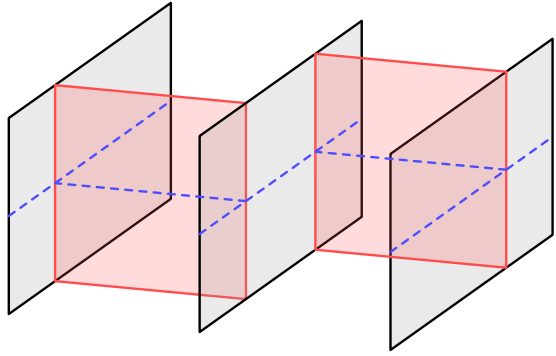}
    \caption{Topological defect network emulating the concatenation of a surface code and a repetition code. The network enforces that the logical $\overline{X}$ operator has the same structure as that of the concatenated code, without the need for high weight long range interactions.}
    \label{fig:RepetitionCodeLogical}
\end{figure}

\subsection{Generalizing the construction}

\label{par:Generalizing}
To generalize the above construction to Shor's code, and beyond to general CSS codes, we need to introduce several further features:
\begin{enumerate}
    \item 
    \label{item:Generalizing1}
    Defects that implement $k$-body $X$, or $Z$, stabilizer checks.
    \item 
    \label{item:Generalizing2}
    Line defects that resolve the crossing of $xy$ and $yz$ oriented surface code layers. 
    \item 
    \label{item:Generalizing3}
    Point defects where multiple line defects meet.
\end{enumerate}
Item 3.~above captures point defects where a line defect meets a boundary. This is because gapped boundaries of copies of the surface code can be viewed as line defects. 

\subsection{$k$-body terms.} 
\label{par:k-body-terms} 
To address the first two items above it is convenient to introduce an abstract picture for a line defect that captures the anyons it condenses while ignoring the homogeneous structure along the defect. 
For the $\langle e_1^\ell e_1^r, e_1^\ell e_{\hat{z}}, m_1^\ell m_1^r m_{\hat{z}} \rangle$ defect this picture is
\begin{figure}[H]
    \centering
    \includegraphics[page=2]{TikzFiguresUnrotated}
\end{figure}
we henceforth refer this defect as the $\bullet$-defect. 
This defect can be generalized to an arbitrary number $k$ of incoming layers by joining together $(k-2)$ $\bullet$-defects with three incoming layers each. 
It is easy to verify that the order does not matter. 
Due to the symmetry of \TC under exchanging $e$ and $m$, there is a related defect with the condensation rules
\begin{figure}[H]
    \centering
    \includegraphics[page=3]{TikzFiguresUnrotated}
\end{figure}
we refer to this as the $\oplus$-defect. 
Following the discussion for the $\bullet$-defect above, the $\oplus$-defect generalizes straightforwardly to any number of incoming layers. 

By combining the above defects we can implement the $k$-body generalization of the concatenated $Z$ stabilizer check. 
This is depicted in Figure~\ref{fig:DefectDecomp}, where we make use of the decomposition of a $k$-layer $\oplus$-defect into trivalent $\oplus$-defects to rearrange the network into one defect per $xy$-surface code layer, connected by an auxiliary layer of $yz$-surface code\footnote{This decomposition is closely related to Shor's scheme for syndrome extraction~\cite{shor1996fault}.}. 
An analogous construction for concatenated $X$ stabilizer checks is achieved by exchanging the roles of the $\bullet$-defects and $\oplus$-defects in Figure~\ref{fig:DefectDecomp}.

\begin{figure}[H]
    \centering
    \includegraphics[page=4, scale = 1.5]{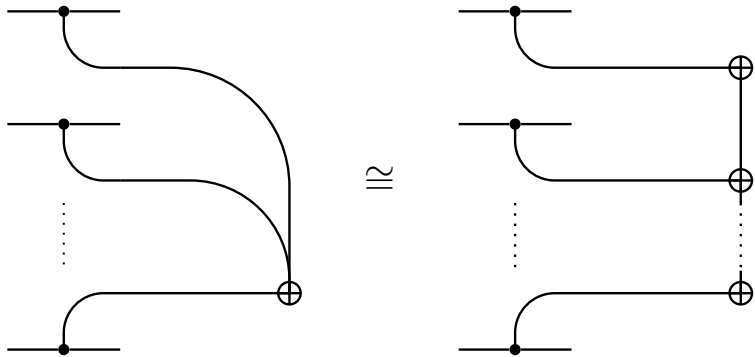}
    \caption{Decomposition of a defect corresponding to a single $k$-body check into a string of defects corresponding to at most $3$-body checks.}
    \label{fig:DefectDecomp}
\end{figure}

We have now answered item~(\ref{item:Generalizing1}) from the list in Section~\ref{par:Generalizing} as we have a recipe to implement $k$-body $X$, or $Z$, stabilizer checks on layers of surface code using defects that are local in 3D. 
The defects involved in a $Z$ stabilizer check are shown in Figure~\ref{fig:k-body-Z-checks}, and those involved in an $X$ stabilizer check are shown in Figure~\ref{fig:k-body-X-checks}.
\begin{figure}[H]
    \centering
    \includegraphics[page=7]{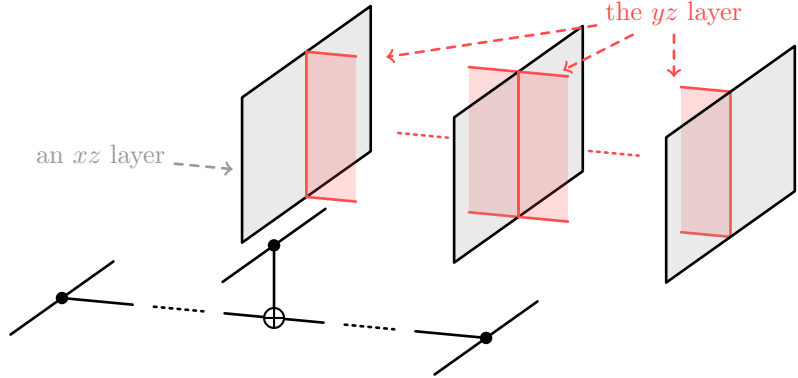}
    \caption{Defects associated with an arbitrary $k$-body $Z$-check. The network of surface code and defects corresponding to this check is referred to as a $yz$-layer. Such defects are illustrated as red layers in $yz$-planes below.}
    \label{fig:k-body-Z-checks}
\end{figure}
\begin{figure}[H]
    \centering
    \includegraphics[page=8]{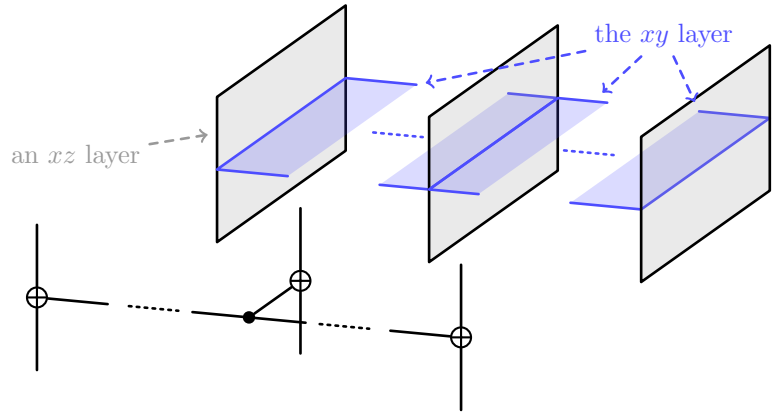}
    \caption{Defects associated with an arbitrary $k$-body $X$-check.
    The network of surface code and defects corresponding to this check is referred to as an $xy$-layer. Such defects are illustrated as blue layers in $xy$-planes below.}
    \label{fig:k-body-X-checks}
\end{figure}

These defects serve to introduce equivalences on $Z$-type, respectively $X$-type, string operators that correspond to multiplication by the associated stabilizer check operator. 
Additionally, they introduce an energy penalty for any $X$-type, respectively $Z$-type, string operator that passes through the defect. 
Only an even number of $X$-type, respectively $Z$-type, string operators can pass thorough a defect on the $xz$ surface code layers that it interacts with without leaving the codespace. 
This is because each such string operator deposits an excitation on the defect, which can be paired up and annihilated if their number is even, see Figure~\ref{fig:RepetitionCodeLogical}.

\subsection{Line defects at the intersection of $xy$ and $yz$-layers} 
\label{par:line-point-defects} 

Our construction for a general CSS code involves defects coupled to $xy$-layers that implement concatenated $X$ checks, as well as defects coupled to $yz$-layers that implement concatenated $Z$ checks. 
For $xy$ and $yz$-layers that do not interact with any common $xz$-layers, corresponding to $X$ and $Z$ checks that have nonoverlapping support, any lines of intersection along $\hat{y}$ are taken to be trivial crossings. 
On the other hand, for an $xy$ and $yz$-layer that intersect nontrivially on, necessarily an even number of, common $xz$-layers a trivial crossing along $\hat{y}$ does not suffice. 
To see why this is the case, we note that a $Z$-type string operator in an $xz$ surface code layer is equivalent to one, or two, $Z$-type string operators in any $yz$-layer that meets the $xz$-layer at a nontrivial defect.
If the $yz$-layer were to intersect trivially with all $xy$-layers, the $Z$-type string operators therein are free to pass across the full system while preserving the codespace, thereby implementing undesired logical string operators. 

Additionally, two lattice surgery patches supported on the same data surface code do not commute when these patches are, respectively, from a Z check and an X check.

To find appropriate $\hat{y}$ defect lines we introduce additional structure to the checks of the input CSS code. 
We consider qubits laid out along a line, and note there are 9 inequivalent configurations for an overlapping $X$ and $Z$ check in total.

\begin{figure}[H]
    \centering
    \includegraphics[scale = 0.4, page=5]{TikzFiguresUnrotated}
\end{figure}

We remark that the following pair of single qubit overlaps are not possible as the $X$ and $Z$ checks commute and hence must share an even number of qubits.

\begin{figure}[H]
    \centering
    \includegraphics[scale = 0.4, page=6]{TikzFiguresUnrotated}
\end{figure}

The extra structure we make use of in our defect network construction is a pair matching on the qubits that are contained within the support of each overlapping pair of $X$ and $Z$ checks. This pair matching is depicted by green lines in the example below.

\begin{figure}[H]
    \centering
    \includegraphics[scale = 0.4, page=7]{TikzFiguresUnrotated}
\end{figure}

This pairing is defined by the linear ordering of the qubits in the general input CSS code that have been arranged along a line. 
The qubits in the intersection of the supports of an $X$ and $Z$ check are then grouped into pairs following this ordering, depicted by green lines above. 

The choice of $\hat{y}$ defect lines in the layer code relies on the additional pairing data. 
Along the $\hat{y}$ intersection line of an $xy$ and $yz$-layer we introduce nontrivial line defects that run between $xz$-layers corresponding to input qubits that have been paired, see Figure~\ref{fig:LineDefects}.  
The remaining unpaired $\hat{y}$ line segments are left as trivial crossing defects, see Figure~\ref{fig:LineDefects}.

\begin{figure}[H]
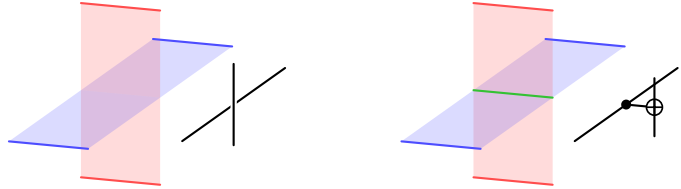

    \centering
    \subfloat
    {{\includegraphics[page=10,scale=.8]{TikzFigures}}}
    \qquad \qquad
    \subfloat
    {{\includegraphics[page=9,scale=.8]{TikzFigures} }}
    \caption{A trivial $\hat{y}$ line defect (left), and a nontrivial $\hat{y}$ line defect (right).}
    \label{fig:LineDefects}
\end{figure}

It is now easily verified that a $Z$ string operator within an $xz$-layer and its equivalent $Z$ string operators within $yz$-layers create equivalent excitations, up to local string operators. 
An analogous property holds for $X$ string operators. 
This resolves item~(\ref{item:Generalizing2}) from the list in Section~\ref{par:Generalizing}.

\subsection{Point defects} 
\label{sub:point-defects}

We now address item~(\ref{item:Generalizing3}) from the list in Section~\ref{par:Generalizing}. 
Recall that point defects in copies of the surface code are fixed by their eigenvalues under the codespace preserving, topologically nontrivial, operators supported on an annulus surrounding the point, see Section~\ref{sec:Background}. 
Below, we demonstrate a complete set of local topological operators whose eigenvalues fully specify the point defects that appear in the layer code construction based on the heuristic counting from Section~\ref{sec:Background}. 
Each of these operators is constant size and creates no excitations, and therefore will be included in the stabilizer group.
In Section~\ref{sec:LatticeModel} we find Pauli checks on a lattice model that realize these operators. 
We go on to show that they fully lift the local degeneracy as the neighborhood around each defect is correctable.

\subsubsection*{Bulk point defects}

We first consider point defects within the bulk of the layer code.
We remark that the list below does not contain trivial intersections, where layers simply pass through one another, as those decompose into standard annulus algebras of the individual \TC layers. 
\begin{enumerate}
    \item 
    \label{point:1}
    The first point defect configuration corresponds to an $xy$ and $yz$-layer ending on the same $xz$-layer. 
    There are four incoming half-planes of \TC topological order and hence four independent topological operators in the shell around the point suffice to lift all degeneracy:
    \begin{figure}[H]
        \centering
        \includegraphics[scale = 0.6, page=11]{TikzFigures}
    \end{figure}

    \item 
    \label{point:2}
    The second point defect is simply the reflection of the 1st through an $xz$-plane:
    \begin{figure}[H]
        \centering
        \includegraphics[scale = 0.6, page=12]{TikzFigures}
    \end{figure}

    \item 
    \label{point:3}
    The third point defect configuration corresponds to an $xy$ and $yz$-layer interacting with a common $xz$-layer, but not ending there. 
    A nontrivial $\hat{y}$ defect ends on the point from above. 
    Here there are 6 incoming half-planes of \TC and hence we find 6 independent topological operators in the shell around the point:
    \begin{figure}[H]
        \centering
        \includegraphics[scale = 0.6, page=13]{TikzFigures}
    \end{figure}

    \item 
    \label{point:4}
    The fourth point defect configuration is a reflection of the 3rd through an $xz$-plane:
    \begin{figure}[H]
        \centering
        \includegraphics[scale = 0.6, page=14]{TikzFigures}
    \end{figure}

    \item 
    \label{point:5}
    The fifth point defect configuration corresponds to an $xy$-layer interacting with an $xz$-layer that a $yz$-layer ends on from above. 
    This necessitates a nontrivial $\hat{y}$ defect also ending on the point from above. 
    There are 5 half-planes of \TC meeting at the point defect, and hence we find 5 independent topological operators in the shell:
    \begin{figure}[H]
        \centering
        \includegraphics[scale = 0.6, page=15]{TikzFigures}
    \end{figure}

    \item 
    \label{point:6}
    The sixth point defect configuration is a reflection of the fifth through and $xz$-plane:
    \begin{figure}[H]
        \centering
        \includegraphics[scale = 0.6, page=16]{TikzFigures}
    \end{figure}

    \item 
    \label{point:7}
    The seventh point defect configuration is similar to the fifth. 
    Here a $yz$-layer interacts with an $xz$-layer that an $xy$-layer ends on from above. 
    Again this requires a nontrivial $\hat{y}$ line defect to end on the point from above. 
    There are 5 half-planes of \TC meeting in this configuration, hence we find 4 independent topological operators: 
    \begin{figure}[H]
        \centering
        \includegraphics[scale = 0.6, page=17]{TikzFigures}
    \end{figure}

    \item 
    \label{point:8}
    The eighth bulk configuration is simply the reflection of the seventh through an $xz$-plane:
    \begin{figure}[H]
        \centering
        \includegraphics[scale = 0.6, page=18]{TikzFigures}
    \end{figure}

    \item 
    \label{point:9}
    The ninth bulk point defect corresponds to a $yz$-layer that interacts with an $xz$-layer and is joined by a nontrivial defect to an $xy$-layer that simply passes through the $xz$-layer:  
    \begin{figure}[H]
        \centering
        \includegraphics[scale = 0.6, page=70]{TikzFigures}
    \end{figure}

    \item 
    \label{point:10}
    The tenth bulk point defect is similar to the ninth with the role of $xy$ and $yz$-layers reversed: 
    \begin{figure}[H]
        \centering
        \includegraphics[scale = 0.6, page=71]{TikzFigures}
    \end{figure}
\end{enumerate}
This concludes the enumeration of bulk point defects.

\subsubsection*{Boundary point defects}

We now consider point defects that sit on the boundary of the layer code. 
We remark that the list below does not contain the boundary point defect where two layers cross trivially, as that decomposes into a pair of standard surface code boundaries. 
\begin{enumerate}
    \item 
    \label{point:bdry1}
    The first boundary point defect configuration corresponds to the front $xy$ boundary of a $yz$-layer ending on an $xz$-layer.
    There are two half-planes of \TC meeting the boundary at this point defect and hence we find a pair of independent topological operators:
    \begin{figure}[H]
        \centering
        \includegraphics[scale = 0.5, page=19]{TikzFigures}
    \end{figure}
    
    \item 
    \label{point:bdry2}
    The second boundary point defect configuration is a reflection of the first through an $xy$-plane:
    \begin{figure}[H]
        \centering
        \includegraphics[scale = 0.5, page=20]{TikzFigures}
    \end{figure}
    
    \item 
    \label{point:bdry3}
    The third boundary point defect is the reflection of the first through an $xz$-plane:
    \begin{figure}[H]
        \centering
        \includegraphics[scale = 0.5, page=21]{TikzFigures}
    \end{figure}
    
    \item 
    \label{point:bdry4}
    The fourth boundary point defect is the reflection of the second through an $xz$-plane:
    \begin{figure}[H]
        \centering
        \includegraphics[scale = 0.5, page=22]{TikzFigures}
    \end{figure}
    
    \item 
    \label{point:bdry5}
    The fifth boundary point defect configuration corresponds to the front $xy$ boundary of a $yz$-layer interacting with an $xz$-layer nontrivially.
    There are two half-planes of \TC ending at this point and hence we find a pair of independent topological operators: 
    \begin{figure}[H]
        \centering
        \includegraphics[scale = 0.5, page=23]{TikzFigures}
    \end{figure}
    
    \item 
    \label{point:bdry6}
    The sixth boundary point defect is the reflection of the fifth through an $xy$-plane:
    \begin{figure}[H]
        \centering
        \includegraphics[scale = 0.5, page=24]{TikzFigures}
    \end{figure}
    
    \item 
    \label{point:bdry7}
    The seventh boundary point defect configuration corresponds to the left $yz$ boundary of an $xy$-layer ending on an $xz$-layer. 
    There are two half-planes of \TC meeting at this point and hence we find a pair of independent topological operators: 
    \begin{figure}[H]
        \centering
        \includegraphics[scale = 0.5, page=25]{TikzFigures}
    \end{figure}
    
    \item 
    \label{point:bdry8}
    The eighth boundary point defect configuration is the reflection of the seventh through an $yz$-plane:
    \begin{figure}[H]
        \centering
        \includegraphics[scale = 0.5, page=26]{TikzFigures}
    \end{figure}
    
    \item 
    \label{point:bdry9}
    The ninth boundary point defect configuration is the reflection of the seventh through an $xz$-plane:
    \begin{figure}[H]
        \centering
        \includegraphics[scale = 0.5, page=27]{TikzFigures}
    \end{figure}
    
    \item 
    \label{point:bdry10}
    The tenth boundary point defect configuration is the reflection of the 8th through an $xz$-plane:
    \begin{figure}[H]
        \centering
        \includegraphics[scale = 0.5, page=28]{TikzFigures}
    \end{figure}
    
    \item 
    \label{point:bdry11}
    The eleventh boundary point defect configuration corresponds to the left $yz$ boundary of an $xy$-layer interacting nontrivially with an $xz$-layer.
    There are two half-planes of \TC meeting the defect and accordingly we find a pair of independent topological operators:
    \begin{figure}[H]
        \centering
        \includegraphics[scale = 0.5, page=29]{TikzFigures}
    \end{figure}
    
    \item 
    \label{point:bdry12}
    The twelfth, and final, boundary point defect confugration is the reflection of the eleventh through a $yz$-plane:
    \begin{figure}[H]
        \centering
        \includegraphics[scale = 0.5, page=30]{TikzFigures}
    \end{figure}
\end{enumerate}

This concludes the enumeration of boundary point defects in the layer code construction.

\section{Code Properties} 
\label{sec:CodeProperties}

In this section we explain some of the essential features of the layer codes that are output by the construction above. 

\subsection{Code Parameters}
\label{sec:CodePropertiesParameters}

First, we explain the code parameters that are achieved by the layer code construction. 
The construction maps the code parameters of an arbitrary input CSS code to an output layer code as follows
\begin{align}
    [[n,k,d]] \mapsto [[\Theta(nn_Xn_Z),k,\Omega(\frac{1}{w}d \min(n_X,n_Z))]].
\end{align}
Here $n_X$ is the number of $X$ checks, $n_Z$ the number of $Z$ checks, and $w$ the maximum weight of the checks in the input code.
The output code is local in three-dimensional space and has checks of weight 6 or less.
When a good CSS LDPC code is taken as input, the output code achieves parameters $[[N,\Theta(N^{\frac{1}{3}}),\Theta(N^{\frac{2}{3}})]]=[[\Theta(L^3),\Theta(L),\Theta(L^2)]]$, where $L$ is the linear extent of the system.
These code parameters are optimal in the sense that their scaling saturates the BPT bound for three spatial dimensions. 

Below we discuss the scaling of the code parameters $[[N,K,D]]$ of the layer code that is produced when our construction is applied to an input code with parameters $[[n,k,d]]$. 

\subsubsection{Number of Physical Qubits: $N=\Theta(nn_Xn_Z)$}
    The scaling of the number of physical qubits in the output code is lower bounded by the number of physical qubits in $n$ layers of surface code with linear dimensions $c n_X \times c n_Z$. 
    Here $c$ is a constant that sets the superlattice spacing between the surface code layers involved in the construction. 
    On the other hand, $N$ is upper bounded by the number of physical qubits in a stack of surface codes along the three lattice directions of a $cn \times cn_X \times cn_Z$ cuboid. 
    
    This gives
    \begin{align}
        2 c^2 n n_X n_Z \leq N \leq 6 c^3 n n_X n_Z
    \end{align}
    and hence $N=\Theta(n n_X n_Z)$. 
    
\subsubsection{Number of Logical Qubits: $K=k$}
    The number of logical qubits in the output code is equal to the number of logical qubits in the input code. 
    To demonstrate this we consider a basis of representatives for the logical operator pairs in the input code 
    \begin{align}
        \Big\{ \overline{X}_i,\overline{Z}_i \,\Big|\, \{\overline{X}_i,\overline{Z}_i\}=0,\, [\overline{X}_i,\overline{Z}_j]=0, \text{ for } i\neq j,\, i,j=1,\dots,k  \Big\},
    \end{align}
    where the $\overline{X}_i$ operators are products of physical $X$ operators, and $\overline{Z}_i$ operators are products of physical $Z$ operators. 
    Here ${\{A,B\}=AB+BA}$ denotes the anticommutator and ${[A,B]=AB-BA}$ denotes the commutator. 
    For each $\overline{X}_i,\overline{Z}_i,$ logical operator pair in the input code there is a corresponding logical operator pair $\widetilde{X}_i,\widetilde{Z}_i,$ in the output layer code.
    
    We define a \textit{concatenated representative} for the $\overline{X}_i$ logical in the layer code by replacing each physical $X$ in the input code with an $X$ string operator across the corresponding $xz$ surface code layer. 
    We remark that this concatenated representative operator would be a logical operator in a straightforward concatenation of surface code with the input code.
    However this concatenated representative is \textit{not} a logical operator in the layer code, as the string operators violate stabilizers on the $yz$ $Z$-check layers that meet the relevant $xz$ qubit surface code layers at nontrivial junctions. 
    The number of such excitations created in each $yz$-layer must be even. This is because the incidence relation of the string operator in each $xz$ surface code layer with the $yz$-layers via nontrivial junctions matches the incidence relation of the corresponding input physical qubit with the input $Z$ checks. Hence the incidence of a concatenated representative operator on a $yz$-layer at nontrivial junctions matches the incidence of the relevant input logical on the associated input $Z$ check, which must be even as they commute.
    We can then pair up the excitations within each $yz$-layer and remove all nontrivial syndromes via $X$ string operators within $yz$-layers to form a logical operator $\widetilde{X}_i$ that we call a \textit{quasiconcatenated representative}. See Figure~\ref{fig:RepetitionCodeLogical} for a depiction of a quasiconcatenated logical operator in a repetition code example. 

    A quasiconcatenated representative for each $\widetilde{Z}_i$ logical is obtained similarly with $yz$-layers replaced by $xy$-layers, and $X$ string operators replaced by $Z$ string operators.
    This construction preserves the (anti-)commutation relations of the logical operators, and guarantees that $K\geq k$. 
    To see this we note that the first step of the process, mapping single qubit Pauli operators to string operators in layers of surface code, preserves commutation relations. 
    The second step of the process, in which excitations are paired up in $yz$ and $xy$-layers, does not affect the commutation relations. 
    This follows from the fact that the $X$ operators used to do the pairing within the $yz$-layers and the $Z$ operators used to do the pairing within the $xy$-layers can be chosen to have disjoint support. 
    
    To complete the proof that $K=k$ we must demonstrate both that there are no additional logical operators that have not been accounted for, and that none of the logical pairs in the output code we have counted are equivalent, up to multiplication with stabilizers.
    These statements follow from the characterization of the logical operators in Section~\ref{sec:Proofs}, summarized in Remark~\ref{rem:LogicalMapping}.

\subsubsection{Distance: $D=\Omega\big(\frac{1}{w}d \min(n_X,n_Z)\big)$}
    We now discuss a lower bound on the weight of the quasiconcatenated representative logical operators. The results in Section~\ref{sec:Proofs} establish that this lower bound holds for the scaling of the code distance, see Corollary~\ref{cor:DX}.
    
    We start by considering the quasiconcatenated representative logical operators $\widetilde{X}_i$ in the output layer code. 
    These operators are obtained from logical operators $\overline{X}_i$ in the input code by replacing each of their physical $X$ operators by $X$ string operators on the corresponding $xz$ surface code layer and applying additional string operators in the $yz$-layers to pair up the excitations that the string operators in the $xz$-layers create. 
    The weight of the quasiconcatenated representative is bounded from below by $d c n_Z$. 
    However, the addition of the $xy$-layers introduce new equivalence relations for these logical operators that allow string operators on $xz$-layers that meet an $xy$-layer at nontrivial junctions to be moved onto the $xy$-layer. 
    In the input code this corresponds to the associated logical operator $\overline{X}_i$ sharing support with part of an $X$ stabilizer check. 
    We assume that we have a minimum weight logical operator representative $\overline{X}_i$. 
    The  common support of this logical operator representative with any $X$ stabilizer must be $\leq \frac{w}{2}$, otherwise there would be an equivalent representative logical operator with lower weight thus contradicting the minimum weight assumption. 
    The worst-case weight reduction via the new equivalence relations due to $xy$-layer mentioned above can only result in the weight being multiplied by $\frac{2}{w}$.
    This corresponds to all of the quasiconcatenated representative's string operators in $xz$-layers being cleaned into $xy$-layers. 
    Hence the weight lower bound retains the same scaling with $n$, given by $\Omega(\frac{2}{w} d c n_Z)$, provided that the input code is LDPC.

    In the case that the input code is not LDPC, we remark that the $X$ distance of the output code still satisfies $\Omega(c n_Z)$. 
    This is simply because the logical operators remain stringlike in the output layer code.

    Analogous statements hold for the $Z$ logical operators of the output code.

    The characterization of logical operators in Section~\ref{sec:Proofs} implies that the lower bound on the logical weight introduced above holds for any equivalent logical operators in the output layer code, see Corollary~\ref{cor:DX}. This accounts for equivalence under all stabilizers in the layer code, going beyond the simple cleaning process outlined above.

\subsection{Energy barrier}
\label{sec:CodePropertiesEnergy}

In this subsection we demonstrate that the logical energy barrier of the output layer code $\Delta_{\text{out}}$ is lower bounded by a ratio of the logical energy barrier of the input code $\Delta_{\text{in}}$ to the product of maximum stabilizer weight $w$ with the maximum number of checks that act on a common qubit $\hat{w}$ in the input code
\begin{align}
    \frac{4  \Delta_{\text{in}}}{ w \hat{w}} \leq \Delta_{\text{out}}.
\end{align}

We first remark that the energy barrier of the output code is upper bounded by the energy barrier of the input code plus 1. 
This follows by considering, without loss of generality, a sequence of Pauli $X$ operators in the input code that attain its energy barrier and implement a nontrivial logical operator.
Each individual Pauli $X$ operator in this sequence can be mapped to a Pauli $X$ string operator in the output code that can be sequentially implemented at the cost of at most one additional unit of energy on top of the energy penalty of the input $X$ operator. 
As Pauli $X$ string operators are sequentially implemented on the layer code, we can maintain a maximum of a single excitation within each $yz$-layer. 
This can be achieved by using string operators to move any excitation within a $yz$-layer to the appropriate location that will immediately annihilate any additional excitation that would be created in that layer via sequential application of an $X$ string in some $xz$-layer. 
This process results in the implementation of a quasiconcatenated logical operator in the layer code with the claimed energy penalty, which is hence an upper bound on the energy barrier $\Delta_{\text{out}}$. 
A similar argument applies for Pauli $Z$ logical operators.

To derive a lower bound on the energy barrier we devise a method to convert any sequence of Pauli $Z$ operators in the layer code into a different sequence that can then be used to define a sequence of Pauli $Z$ operators on the input code. 
This conversion increases the energy penalty multiplicatively by at most a factor $w\hat{w}$, where $w$ is the max weight of a stabilizer check and $\hat{w}$ is the maximum number of checks that act on a common qubit in the input code. 
Hence for input codes that are LDPC, the energy barrier of the output layer code retains the same scaling with $n$ as the energy barrier of the input code. 

We now describe the conversion process for a multi-qubit Pauli $Z$ operator in a layer code.
First, all $e$ excitations within $yz$-layers are transformed into $e$ excitations on $xz$-layers via string operators along the $\hat{y}$ direction. 
There are two choices per $yz$-layer at this step, we take the one that produces the minimum number of excitations. 
This step increases the energy penalty multiplicatively by at most $\frac{w}{2}$. 
Next, we apply string operators to move all $e$ excitations within $xz$-layers to condense at a rough boundary on an external $xy$ face of the cuboid containing the layer code. 
Again there are two choices at this step for each $e$ excitation, we take the choice that produces the minimum number of excitations. 
This step increases the energy penalty multiplicatively by at most $\frac{\hat{w}}{2}$. 
At this point, all remaining $e$ excitations are on $xy$-layers. 
We then apply string operators to bring all $e$ excitations on each $xy$-layer to a common point. 
This step does not increase the energy penalty. 
The results in Section~\ref{sec:ProofsLogicals} establish that the Pauli operator produced by this procedure is equivalent to a quasiconcatenated operator that consists of up to one $Z$ string operator per $xz$-layer, connected to one another and to the remaining $e$ excitations by strings in $xy$-layers. 
Each such quasiconcatenated string operator defines a multi-qubit Pauli $Z$ operator on the input code by mapping each $Z$ string operator in an $xz$-layer to the $Z$ operator on the associated input physical qubit.

Via the above mapping, any sequence of Pauli $Z$ operators on the output code maps to a sequence of Pauli $Z$ operators on the input code. 
For an LDPC code, a local change to the operator on the output code maps to a local change to the operator on the input code\footnote{Here local refers to combinatorial $k$-locality for a constant $k$.}. 
This is because the operator conversion process for a syndrome on any layer only involves other layers that meet the original layer at a nontrivial junction. 
Hence, a local sequence that implements a logical operator on the output code produces a local sequence that implements a logical operator on the input code. 
In doing so the energy penalty on the output code is increased multiplicatively by at most a factor $\frac{1}{4} w \hat{w}$, and hence the energy barrier of the output code must satisfy the lower bound 
\begin{align}
    \frac{4  \Delta_{\text{in}}(n)}{ w \hat{w}} \leq \Delta_{\text{out}}(n) .
\end{align}
Here we are considering the less restrictive definition of the energy barrier that allows a sequence of local Pauli operators to be applied, see Section~\ref{sec:Background}. 
This result implies that the layer code construction preserves the scaling of the energy barrier with $n$ for LDPC codes. 
We remark, however, that since our construction increases the number of physical qubits from $n$ to $\Theta(n n_X n_Z)$ the ration of the energy barrier divided by the number of physical qubits is generically decreased by the layer construction.

\subsection{Stabilizer check relations}

In this subsection we demonstrate that relations between the stabilizer checks of the input code imply similar relations for the output code. 
This implies that the excitations in the output code inherit conservations laws from the excitations in the input code~\cite{kitaev2003fault,Brown2019Parallel,Brown2023Conservation}.

A relation $\mathcal{R}$ amongst $X$ checks on the input code 
\begin{align}
    \prod_{i\in \mathcal{R}} S^X_i = \mathds{1} ,
\end{align}
implies a similar relation on the output code. 
This follows by considering the product of the $X$ checks on $xy$-layers that correspond to the input checks appearing in the relation $\mathcal{R}$. 
The product of all $X$ checks on an $xy$-layer results in a quasiconcatenated representative of the associated $X$ check in the input code, see Figure~\ref{fig:QuasiconcatenatedStab}. 
This operator involves a product of $X$ string operators on each $xz$-layer that corresponds to a qubit in the support of the associated input $X$ check. These string operators are connected by string operators where the $xy$-layer intersects $yz$-layers at nontrivial junctions. 
Hence the product of all $X$ checks on all $xy$-layers involved in an input relation yields an even number of $X$ string operators on each $xz$-layer, together with additional connecting string operators on the $yz$-layers that are intersected nontrivially. 
We then pair up the $X$ string operators in each $xz$-layer, following the ordering of the $xy$-layers they originate from, and multiply by the $X$ checks in the $xz$-layer on the regions enclosed by the paired up strings. 
This leaves no nontrivial operators on the $xz$-layers, due to the smooth boundary conditions on the $yz$ boundaries of the cuboid. 
This step does create additional $X$ string operators on the $yz$-layers that intersect the regions between the paired up strings. 
These string operators, together with the other string operators in the $yz$-layers from the initial step, form $\mathbb{Z}_2$ boundaries of regions within the $yz$-layers. 
We then multiply by $X$ checks within these regions of the $yz$-layers to cancel out the string operator along their boundaries. 
This completes the trivialization of the operator, and hence we have constructed a relation inherited from any relation of the input code. 

A relation in the input code also implies the presence of certain stringlike stabilizers in the output code. 
Consider a relation $\mathcal{R}$ satisfied by $X$ checks
\begin{align}
     \prod_{i\in \mathcal{R}} S^X_i = \mathds{1} .
\end{align}
An $X$ string operator spanning an $xy$-layer in the $\hat{y}$ direction creates an $m$-excitation on each $xz$-layer that meets it at a nontrivial junction. 
The product of these string operators for all $xy$-layers corresponding to $X$ checks in a relation hence produces an even number of $m$-excitations on each $xz$-layer which can then be paired up via additional string operators in the $xz$-layers. 
To demonstrate that this is a stabilizer, we notice that this stringlike operator can be deformed through any $yz$-layer. 
The stringlike operator can hence be moved to a $yz$ boundary of the cuboid and condensed, i.e. trivialized via multiplication with smooth boundary stabilizers.

\section{Proof of Main Results} 
\label{sec:Proofs}

In this section we provide detailed proofs characterizing the structure of logical operators and errors in layer codes and establishing an $\Omega(n)$ energy barrier for a family of good CSS LDPC codes.
These proofs underpin the higher level arguments in Section~\ref{sec:CodeProperties}.

\subsection{Characterization of logical operators and errors}
\label{sec:ProofsLogicals}
In this section we present a number of lemmas that characterize the structure of logical operators and errors that appear in the output layer codes. 
These results are relied upon when establishing the code properties in the previous section. 
We focus our discussion on $X$-type operators, analogous results hold for $Z$-type operators. 

For convenience we introduce the notation $\mathcal{P}_{\text{layer}}^X$ for the set of $X$-type Pauli operators of the layer code, and $\mathcal{P}_{\text{input}}^X$ for the set of $X$-type Pauli operators of the input code. 
We use $\mathcal{N}_\text{layer}^X$ and $\mathcal{N}_\text{input}^X$ to denote the $X$-type normalizers of the layer code and the input code, respectively.
Similarly, we use $\mathcal{S}_\text{layer}^X$ and $\mathcal{S}_\text{input}^X$ to denote the $X$-type stabilizers of the layer code and the input code, respectively.

Our strategy to characterize $X$ logical operators is to consider dividing up elements of $\mathcal{N}_{\text{layer}}^X$ into segments that are each supported within a thin slab that contains a single $yz$-layer. 
The product of all slabs then reproduces the complete logical. 
A similar approach applies to dividing elements of $\mathcal{N}_{\text{layer}}^Z$ into thin slabs that each contain a single $xy$-layer.

\begin{figure}[H]
    \centering
    \includegraphics[page=66, scale=.95]{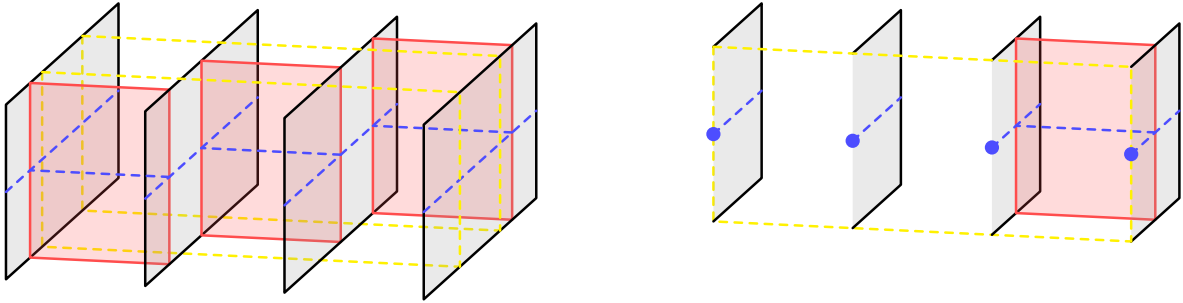}
    \caption{The layer code based on the 4-qubit repetition code divided into $yz$ slabs. On the left, yellow dotted lines indicate the boundaries between slabs that each contain a single $yz$-layer. The blue dotted lines depict a \textit{quasiconcatenated} $X$ logical operator. 
    On the right a single slab is depicted together with boundary excitations (blue circles) that are created by restricting the blue logical operator to the slab.}
    \label{fig:proofs-slabs}
\end{figure}

We now turn to the \textit{boundaries} between pairs of slabs. These slab-boundaries \textit{are not} physical boundaries, they are merely a useful bookkeeping device to track the excitations created by restricted logical operators. 
We first consider sets of $m$-excitations that are supported on the boundary between a pair of slabs. 
One way to create such $m$-excitations is by considering a logical $L \in \mathcal{N}_{\text{layer}}^X$ and a slab, then restricting the operator to that slab, written $L_\text{slab}$, to generate excitations at the boundary of said slab, see Figure \ref{fig:proofs-slabs}. 
We remark that these excitations must fall \emph{between} $yz$-layers. Such collections of $m$-excitations have natural equivalence relations under stringlike $X$ operators supported on the same slab-boundary.
Equivalently, an excitation supported on a slab-boundary can be moved along this boundary and fused with other excitations that live on the boundary in a way that is determined by the connectivity of the $xz$ and $xy$-layers.
We can make use of this equivalence relation to obtain a set of representative excitations that consist of at most a single excitation per $xz$ and $xy$-layer that intersects the slab-boundary. We call such a reduced form an $m$-configuration, see Figure~\ref{fig:m-configuration}.

\begin{definition}[$m$-configuration]
    An $m$-configuration is an equivalence class of sets of $m$-excitations. It is defined by a binary variable for each ${xz}$-layer, and for each segment of $xy$-layer that is bounded by a pair of nontrivial junctions to $xz$-layers. 
    Each binary variable specifies the parity of $m$-excitations on an  $xz$-layer or $xy$-segment. 
\end{definition}

\begin{figure}[H]
    \centering
    \includegraphics[page=69,scale=.9]{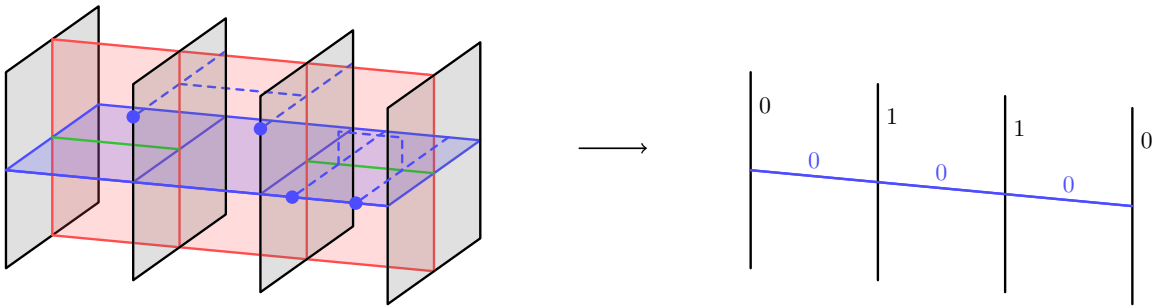}
    \caption{An example of an $m$-configuration on the boundary of a slab from the $[[4,2,2]]$ layer code. 
    On the left the dotted blue lines depict restricted logical operators and blue circles depict $m$-excitations on the slab-boundary. 
    On the right the $m$-confugration corresponding to the boundary excitations is shown. }
    \label{fig:m-configuration}
\end{figure}

Next we consider the equivalence of $m$-configurations induced on the same slab-boundary. 

\begin{definition}[Boundary equivalence]
\label{def:mConfigurationEquivalence}
    A pair of $m$-configurations are boundary-equivalent if they can be related by a sequence of condensations and creations of $m$-anyons at nontrivial junctions. 
\end{definition}

\begin{figure}[H]
    \centering
    \includegraphics[page=67]{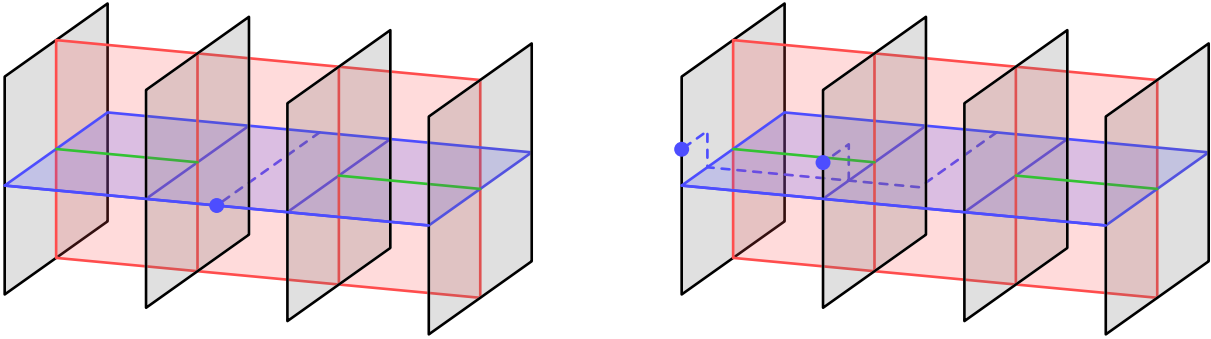}
    \caption{A pair of boundary-equivalent $m$-configurations on the $[[4,2,2]]$ layer code example.
    The $m$-configuration on the right maps to $X_1 X_2$ on the input $[[4,2,2]]$ code.}
    \label{fig:equivalent-configurations}
\end{figure}

\begin{remark}
    For any $m$-configuration there is a boundary-equivalent $m$-configuration that has no $m$-excitations on any $xy$-layers. 
    This is because the line defect junctions between $xy$ and $xz$-layers allow for the excitations to be carried to some $xz$-layer, see Figure \ref{fig:equivalent-configurations}.
\end{remark}

\begin{remark}
\label{rem:OutputInputMapping}
    An $m$-configuration that is supported exclusively on $xz$-layers can be formally mapped to an operator in $\mathcal{P}_{\text{input}}^X$. 
    This mapping simply sends such a configuration of $m$-excitations to a multi-qubit Pauli $X$ operator that is given by a product of single-qubit Pauli $X$ operators on each qubit of the input code that corresponds to an $xy$-layer with an odd number of $m$-excitations on it, see the right-most illustration in Figure \ref{fig:equivalent-configurations}. 
    The set of such $m$-configurations that are boundary-equivalent to the trivial configuration are one-to-one with elements of $\mathcal{S}_{\text{input}}^X$ under this mapping. 
\end{remark}

\begin{remark}
\label{rem:PartialStab}
    An $m$-configuration that is supported on a single $xy$-layer is boundary-equivalent to an $m$-configuration supported only on a set of $xz$-layers. Following Remark \ref{rem:OutputInputMapping}, the new $m$-configuration maps to a partial $X$ stabilizer of the input code and can be chosen such that the resulting operator in $\mathcal{P}_{\text{input}}^X$ has weight $\leq \frac{w}{2}$, see Figure \ref{fig:equivalent-configurations}. 
    Hence, any $m$-configuration of weight $h$ supported only on $xy$-layers is boundary-equivalent to an $m$-configuration supported only on $xz$-layers of weight $\leq \frac{w}{2} h$.
\end{remark}

\begin{remark}
\label{rem:EquivByStabs}
    For any pair of operators $L_a, L_b \in \mathcal{N}_\text{layer}^X$ that induce boundary-equivalent $m$-configurations $a$ and $b$ on a slab-boundary, respectively, there exists a stabilizer $S_{a} \in \mathcal{S}_\text{layer}^X$ such that $L_{a'} = S_{a} L_a$ induces an $m$-configuration $a'$ satisfying $a' = b$ at the same slab-boundary. The stabilizer $S_a$ can be found near the slab-boundary in question. 
    This follows by noting that any cluster of anyons that can condense locally on one side of a slab boundary has a mirror image cluster of anyons that can condense locally on the other side of the slab boundary. 
    The cluster of anyons and its mirror image can be created locally by operators that do not cross the slab-boundary.
    Each anyon in the cluster can then be annihilated with its mirror image via operators that do cross the slab-boundary.
    This process results in a stabilizer supported near the slab-boundary. This stabilizer is $S_a$ if the anyon cluster is chosen to be $a$. 
\end{remark}

The boundary-equivalence relations on $m$-configurations allow $m$-excitations on $xz$-layers to be moved onto $xy$-layers, which can reduce their weight. 
We now consider how much the weight of an $m$-configuration that is initially free of any excitations on $xy$-layers can be reduced by moving its $m$-excitations onto $xy$-layers. 

\begin{lemma}
\label{lem:MinmWeight}
The minimum weight of any boundary-equivalent $m$-configuration that maps to an element $P\in \mathcal{P}_\text{input}^X$ is $\geq\frac{2}{w} d_P$, where $d_P$ is the minimum weight of operators in $P\cdot\mathcal{S}_\text{input}^X$. 
\end{lemma}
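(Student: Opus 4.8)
The plan is to reduce the statement to the two structural facts already recorded in Remark~\ref{rem:OutputInputMapping} and Remark~\ref{rem:PartialStab}, so that the proof is essentially a weight-counting argument on slab-boundary configurations. Fix a boundary-equivalence class of $m$-configurations whose pure-$xz$ representatives are sent, under the correspondence of Remark~\ref{rem:OutputInputMapping}, to the coset $P\cdot\mathcal{S}_\text{input}^X$. I want to show every $m$-configuration $c$ in this class has weight $h\ge \tfrac{2}{w}d_P$. Write $h = h_{xz} + h_{xy}$, where $h_{xz}$ counts the $xz$-layers carrying an odd parity of $m$-excitations in $c$ and $h_{xy}$ counts the $xy$-segments carrying an odd parity.

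First I would split $c$ into its $xz$-supported part and its $xy$-supported part and push the latter onto $xz$-layers. By Remark~\ref{rem:PartialStab}, the $xy$-supported part, of weight $h_{xy}$, is boundary-equivalent to an $m$-configuration supported purely on $xz$-layers of weight at most $\tfrac{w}{2}h_{xy}$. Combining this with the original $h_{xz}$ excitations on $xz$-layers -- where any cancellations only decrease the count -- yields a pure-$xz$ $m$-configuration $c'$, boundary-equivalent to $c$, of weight at most $h_{xz} + \tfrac{w}{2}h_{xy} \le \tfrac{w}{2}(h_{xz}+h_{xy}) = \tfrac{w}{2}h$, where I used $w\ge 2$.

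Next, since $c'$ lies in the same boundary-equivalence class as $c$, the correspondence of Remark~\ref{rem:OutputInputMapping} sends $c'$ to some operator $P'\in P\cdot\mathcal{S}_\text{input}^X$, and the weight of $P'$ equals the number of odd-parity $xz$-layers in $c'$, which is at most $\tfrac{w}{2}h$. Hence $d_P \le |P'| \le \tfrac{w}{2}h$, i.e.\ $h\ge \tfrac{2}{w}d_P$. Since $c$ was an arbitrary configuration in the class, this lower bounds the minimum weight over the class, and the lemma follows.

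The only place that demands real care -- and the main obstacle -- is the bookkeeping of the boundary-equivalence moves: one must confirm that the partial-stabilizer relabelling of Remark~\ref{rem:PartialStab} is genuinely realized by condensations and creations of $m$-anyons at the nontrivial $xy$--$xz$ junctions, so that $c'$ truly belongs to the boundary-equivalence class of $c$, and that the weight of the associated input Pauli counts $xz$-layers (input qubits) rather than individual excitations, so that overlaps between the pushed-forward excitations and the original $xz$-excitations cannot increase the bound. Once these points are nailed down, the inequality is immediate from the cited remarks.
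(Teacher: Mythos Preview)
Your proof is correct and follows essentially the same route as the paper: split the weight of an arbitrary $m$-configuration into its $xz$- and $xy$-parts, invoke Remark~\ref{rem:PartialStab} to convert the $xy$-part to a pure-$xz$ contribution of weight at most $\tfrac{w}{2}h_{xy}$, observe that the resulting pure-$xz$ configuration maps under Remark~\ref{rem:OutputInputMapping} to an element of $P\cdot\mathcal{S}_\text{input}^X$ whose weight is therefore at least $d_P$, and conclude $h_{xz}+\tfrac{w}{2}h_{xy}\ge d_P$ and hence $h\ge\tfrac{2}{w}d_P$ using $w\ge 2$. The paper's proof contains an additional opening paragraph treating the special case of a configuration already mapping to a minimum-weight $P$, but the general argument that follows is the one you give.
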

\begin{proof}
Consider an $m$-configuration $a$ that maps, in the sense of Remark \ref{rem:OutputInputMapping}, to an element ${P\in \mathcal{P}_\text{input}^X}$, such that $d_P$ the weight of $P$ cannot be reduced by multiplication with elements of $\mathcal{S}_\text{input}$. 
Then $a$ has weight lower bounded by $\geq\frac{2}{w} d_P$.
This is because each $X$ check in the input code has maximum overlap $\leq\frac{w}{2}$ with the minimal weight $P$, otherwise multiplication with that $X$ check would reduce the weight further. 
Hence, the maximum cleaning that might be possible sends sets of $\frac{w}{2}$ $m$-excitations on disjoint $xz$-layers to single $m$-excitations on disjoint $xy$-layers. 
We now turn to other $m$-configurations that are boundary-equivalent to $a$. 
This includes the $m$-configurations that map to each element of $P\cdot\mathcal{S}_\text{input}^X$. 
From the weight bound in Remark~\ref{rem:PartialStab} it follows that any $m$-configuration with weight $h_{xy}$ on $xy$-layers and weight $h_{xz}$ on $xz$-layers is boundary-equivalent to an $m$-configuration of weight $\leq h_{xz}+\frac{w}{2}h_{xy}$ that is supported only on $xz$-layers. 
Due to the mapping from Remark \ref{rem:OutputInputMapping} this upper bound on the weight cannot be smaller than $d_P$ and hence $h_{xz}+\frac{w}{2}h_{xy}\geq d_P$. 
Without loss of generality we can assume $w\geq 2$.
From this we see that $h_{xy}+h_{xz}\geq \frac{2}{w} d_P$ for all $m$-configurations that are boundary-equivalent to $a$.
\end{proof}

Next, we turn our attention to the relationship between the $m$-configurations supported on either side of a slab -- as depicted in Figure \ref{fig:equivalent-configurations}.
\begin{lemma}
\label{lem:MSlabBdry}
    Consider an $X$ operator in a slab that creates no excitations on the $xy$-layers and $xz$-layers in the bulk of that slab. 
    The $m$-configuration created by this operator on one boundary of the slab is boundary-equivalent to the configuration created on the other boundary. 
\end{lemma}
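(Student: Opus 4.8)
The plan is a one-dimensional sweeping argument transverse to the slab. Recall that a $yz$-slab is thin in the $\hat{x}$ direction and contains a single $yz$-layer $\mathcal{Z}$ together with all of the line and point defects through which $\mathcal{Z}$ attaches to the $xz$- and $xy$-layers that pass through the slab. Let $L$ be the $X$-type operator in the statement. For each $x_0$ strictly between the two slab-boundaries let $\Sigma_{x_0}$ be the $yz$-plane at $x=x_0$, and let $c(x_0)$ be the $m$-configuration deposited on $\Sigma_{x_0}$ by $L|_{x<x_0}$, equivalently by $L|_{x>x_0}$, the two agreeing because $L$ creates no excitation on $\Sigma_{x_0}$ itself. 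As $x_0$ runs from just inside the left slab-boundary to just inside the right one, $c(x_0)$ interpolates between the two configurations appearing in the lemma, so it suffices to show that $c(x_0)$ evolves only by condensations and creations of $m$-anyons at nontrivial junctions, which is exactly boundary-equivalence in the sense of Definition~\ref{def:mConfigurationEquivalence}.

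The first step is to show that $c(x_0)$ is constant while $\Sigma_{x_0}$ sweeps through the portion of the slab containing only through-going $xz$- and $xy$-layers and their trivial crossings, i.e.\ away from $\mathcal{Z}$. The parity that $c(x_0)$ records on an $xz$-layer, or on an $xy$-segment, can only change as $\Sigma_{x_0}$ passes an excitation of $L$ or a region of modified check structure; by hypothesis $L$ has no excitations on the $xz$- and $xy$-layers in the bulk of the slab, and this portion of the slab contains no defects, so here $c(x_0)$ does not change at all. In particular the configuration on the left slab-boundary equals $c(x_0)$ for every $x_0$ between that boundary and $\mathcal{Z}$, and the same holds on the right.

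It remains to control the single event where $\Sigma_{x_0}$ sweeps across $\mathcal{Z}$ and its attached line and point defects. Since $L$ creates no excitations on the $xz$- and $xy$-layers near $\mathcal{Z}$, any string of $L$ entering this region must either pass straight through or terminate on $\mathcal{Z}$ or on one of its defect junctions; hence the difference between $c(x_0)$ immediately to the left and immediately to the right of $\mathcal{Z}$ is, layer by layer, a set of $m$-excitations sitting at the nontrivial junctions of $\mathcal{Z}$ with the $xz$- and $xy$-layers, possibly accompanied by $m$-excitations absorbed into $\mathcal{Z}$ itself or into the $\hat{y}$ line defects. Such a difference is by construction of the form produced by condensation and creation moves at those junctions, so $c(x_0)$ jumps across $\mathcal{Z}$ by a single boundary-equivalence move. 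One sees this explicitly from the condensation rules of the $\oplus$-type trijunctions comprising $\mathcal{Z}$, the $xy$-layer crossings of Section~\ref{par:line-point-defects}, and the incident point defects of Section~\ref{sub:point-defects}; Figure~\ref{fig:equivalent-configurations} depicts the prototypical move. Composing the invariance on either side of $\mathcal{Z}$ with this one move gives the claim.

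The step I expect to be the main obstacle is this last one: checking in full generality that, for every local pattern of $L$ near $\mathcal{Z}$ compatible with the hypothesis, the left and right cut-patterns differ only by junction moves, with no leftover discrepancy that is not localized at a junction. This is a finite case analysis over the defect types along $\mathcal{Z}$ -- the generalized $\oplus$-defect, the $\hat{y}$ line defects of Section~\ref{par:line-point-defects}, and the relevant point defects of Section~\ref{sub:point-defects} -- verifying in each case that the local stabilizer group acts transitively on the cut-patterns allowed by the hypothesis. The bookkeeping is aided by Remark~\ref{rem:OutputInputMapping} and Remark~\ref{rem:PartialStab}, which recast the $xz$-supported part of any such discrepancy as a partial $X$-stabilizer of the input code, and by Remark~\ref{rem:EquivByStabs}, which packages the ``create near a junction, push in, annihilate'' procedure as multiplication by a stabilizer supported near the cut.
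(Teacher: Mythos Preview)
Your sweeping argument is a genuinely different route from the paper's. The paper does not track the cut $c(x_0)$ at all: instead it first pushes any $m$-excitations on the single $yz$-layer $\mathcal{Z}$ to a fixed height via $X$ strings on $\mathcal{Z}$, and then \emph{cleans the entire operator onto one abstract $xy$-plane} by multiplication with layer-code stabilizers. After this cleaning the operator is literally a union of $\hat{x}$-directed line segments sitting in the $xz$-layers (plus a piece on $\mathcal{Z}$ that does not touch either slab-boundary). The no-bulk-excitation hypothesis then forces each such line segment to be either empty or to span the full slab width, so the $m$-patterns on the two slab-boundaries coincide pointwise, which is strictly stronger than boundary-equivalence.

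What each approach buys: the paper's cleaning move is the key economy --- it converts the whole problem into a one-line parity observation and completely bypasses the case analysis over junction and point-defect types that you correctly flag as the main obstacle in your argument. Your sweep is conceptually transparent (it makes the ``nothing happens away from $\mathcal{Z}$'' part immediate) but leaves the work concentrated at the crossing of $\mathcal{Z}$, where you must verify, defect-by-defect, that every local pattern of $L$ compatible with the hypothesis yields a left/right discrepancy that is a junction move. That verification is doable, and your invocation of Remark~\ref{rem:EquivByStabs} is the right packaging for it, but it is exactly the bookkeeping the paper's one cleaning step eliminates. If you want to finish your version, the cleanest way is in fact to import the paper's idea locally: multiply $L$ by stabilizers supported in a thin neighbourhood of $\mathcal{Z}$ to flatten its support there onto a single $xy$-height, after which the ``jump'' across $\mathcal{Z}$ is manifestly trivial rather than merely a boundary-equivalence.
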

\begin{proof}
    For the purposes of this argument we pick an abstract $xy$-\textit{plane} through the slab that does not coincide with a physical $xy$-layer, see Figure~\ref{fig:xyplane}. 
    Our aim is to clean the $X$ operator onto this $xy$-plane. 
    First, any excitations within the $yz$-layer contained in the slab can be moved onto the intersection of the $yz$-layer with the chosen $xy$-plane via $X$ string operators supported on the $yz$-layer. 
    This process creates no further excitations. 
    The modified $X$ operator can then be cleaned onto the single chosen $xy$-plane by multiplication with stabilizers in the layer code.
    The resulting cleaned operator is a product of string operators whose support is contained in the intersection of the chosen $xy$-plane with the $xz$-layers and the $yz$-layer in the slab, see Figure~\ref{fig:xyplane} for an example.
    Since the cleaned operator creates no excitations on the $xz$-layers within the slab, the checks from a given $xz$-layer force the operator to either have no support on that $xz$-layer or have support along the whole line of intersection between that $xz$-layer and the chosen $xy$-plane.
    This results in all $m$-excitations on one of the slab boundaries having a partner with the same $yz$ coordinates on the other boundary. 
\end{proof}

\begin{figure}[H]
    \centering
    \includegraphics[page=72]{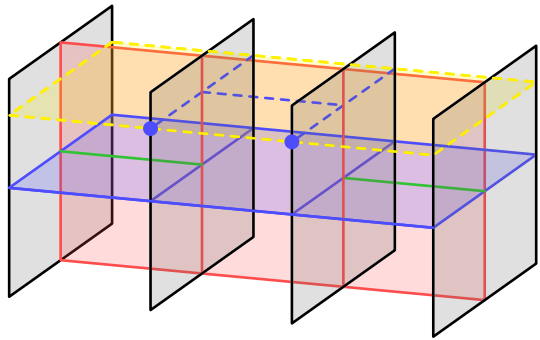}
    \caption{An example of an $X$ operator (blue dashed lines) that has been cleaned onto an $xy$-plane (yellow). We remark that the $xy$-plane is distinct from the $xy$-layer (blue) in the figure.  }
    \label{fig:xyplane}
\end{figure}

\begin{remark}
    We remark that the parity of the $m$-excitations left on the $yz$-layer in the slab matches the anticommutation relation of the associated many-qubit Pauli $X$ operator with the relevant $Z$ check in the input code, see Remark~\ref{rem:OutputInputMapping}.  
\end{remark}

\begin{lemma}
    Consider an $m$-configuration induced by an element of $\mathcal{N}_\text{layer}^X$. Then it has to be boundary-equivalent to an $m$-configuration that maps\footnote{The mapping is described in Remark~\ref{rem:OutputInputMapping}.} to an element of $\mathcal{N}_\text{input}^X$.
\end{lemma}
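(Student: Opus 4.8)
The plan is to reduce the claim to a statement about the $Z$-checks of the input code by transporting the given $m$-configuration across the whole stack of $yz$-slabs. Let $L\in\mathcal{N}_\text{layer}^X$ and let $a$ be the $m$-configuration it induces on some slab-boundary $B$. First I would invoke the earlier remark that every $m$-configuration is boundary-equivalent to one with no $m$-excitations on $xy$-layers, so I may assume $a$ is supported only on $xz$-layers; by Remark~\ref{rem:OutputInputMapping} such a configuration maps to a Pauli $P\in\mathcal{P}_\text{input}^X$, and the goal becomes to show that $P$ commutes with every $Z$-check $s_j$ of the input code, i.e.\ $P\in\mathcal{N}_\text{input}^X$.

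Next I would exploit that $L$, being in the normalizer, creates no excitations anywhere, so for every thin slab its restriction $L_\text{slab}$ produces excitations only at the two bookkeeping cuts bounding the slab and none in the bulk --- in particular none on the $yz$-layer, the $xy$-layers, or the $xz$-layers contained in that slab. This is exactly the hypothesis of Lemma~\ref{lem:MSlabBdry}, so the $m$-configuration $L$ induces on any slab-boundary is boundary-equivalent to the one it induces on the adjacent slab-boundary; chaining these equivalences, the configuration $a$ on $B$ is boundary-equivalent to the configuration induced on the pair of slab-boundaries flanking the $yz$-layer $\mathcal{Z}_j$ associated with any chosen $Z$-check $s_j$.

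The final step is the anticommutation bookkeeping. Applying Lemma~\ref{lem:MSlabBdry} to the slab containing $\mathcal{Z}_j$, the remark following that lemma identifies the parity of the $m$-excitations left on $\mathcal{Z}_j$ with the anticommutation indicator of the associated input Pauli $X$ with $s_j$. Since $L\in\mathcal{N}_\text{layer}^X$ leaves no excitation on $\mathcal{Z}_j$, this parity vanishes, so the associated input Pauli commutes with $s_j$. By the chain of boundary-equivalences above together with Remark~\ref{rem:OutputInputMapping}, that associated Pauli differs from $P$ by an element of $\mathcal{S}_\text{input}^X$, which also commutes with $s_j$; hence $P$ commutes with $s_j$. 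As $j$ is arbitrary, $P\in\mathcal{N}_\text{input}^X$, and $a$ is boundary-equivalent to the $xz$-supported configuration mapping to $P$, as claimed.

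The main obstacle I anticipate is the careful bookkeeping: one must check that the ``associated input Pauli'' produced inside the proof of Lemma~\ref{lem:MSlabBdry} for the slab around $\mathcal{Z}_j$ really is the same, up to multiplication by $\mathcal{S}_\text{input}^X$, as the $P$ obtained from $B$ --- this is where transitivity of boundary-equivalence and the fact (Remark~\ref{rem:OutputInputMapping}) that $xz$-supported configurations equivalent to the trivial one correspond exactly to input $X$-stabilizers are doing the real work. A secondary point is the behaviour at the outermost slabs, where $xz$-layers terminate on smooth $yz$-boundaries of the cuboid; there the induced configurations can only be trivialized more easily, so these boundary slabs do not threaten the argument.
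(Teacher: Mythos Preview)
Your proposal is correct and follows essentially the same route as the paper. Both arguments hinge on the same two facts: (i) the parity of $m$-excitations deposited on a $yz$-layer $\mathcal{Z}_j$ equals the anticommutation of the associated input Pauli with $s_j$ (the remark after Lemma~\ref{lem:MSlabBdry}), and (ii) this parity must vanish because $L\in\mathcal{N}_\text{layer}^X$. The only difference is organizational: the paper argues directly that ``no excitation on any slab'' forces the configuration at each slab-boundary to correspond to an element of $\mathcal{N}_\text{input}^X$, leaving the propagation between slab-boundaries implicit, whereas you make this step explicit by chaining Lemma~\ref{lem:MSlabBdry} to transport the boundary-equivalence class from the given boundary $B$ to the boundaries flanking each $\mathcal{Z}_j$, and then invoke Remark~\ref{rem:OutputInputMapping} to identify the resulting Pauli with $P$ up to $\mathcal{S}_\text{input}^X$. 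Your version is slightly more detailed on exactly the point the paper compresses; the bookkeeping concern you flag in your last paragraph is handled cleanly by Remark~\ref{rem:OutputInputMapping}, as you note.
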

\begin{proof}
We begin by remarking that a straightforward mapping of $X$ operators on the input code to string operators on $xy$-planes that run across a slab results in an operator that creates a number of $m$-excitations on the $yz$-layer whose parity matches the stabilizer syndrome of the corresponding input $Z$ check. 
These $m$-excitations can all be fused at a common location in the $yz$-layer via string operators, producing at most a single $m$-excitation.

For there to be no $m$-excitation, the $m$-configuration on the boundary of that slab must be in a boundary-equivalence class that maps, in the sense of Remark~\ref{rem:OutputInputMapping}, to an $X$ operator on the input code that commutes with the relevant $Z$ check. 

Hence the restriction of an element of $\mathcal{N}_\text{layer}^X$ to any slab must create a configuration of excitations at the slab-boundary that corresponds to an element of $\mathcal{N}_\text{input}^X$. 
This is because the element of $\mathcal{N}_\text{layer}^X$ must create no excitation on any slab. 
\end{proof} 

To complete the characterization of the distance, we demonstrate below that an element of $\mathcal{N}_\text{layer}^X \setminus \mathcal{S}_\text{layer}^X$ must have support on all slabs.

\begin{figure}[H]
    \centering
    \includegraphics[page=68, scale=.95]{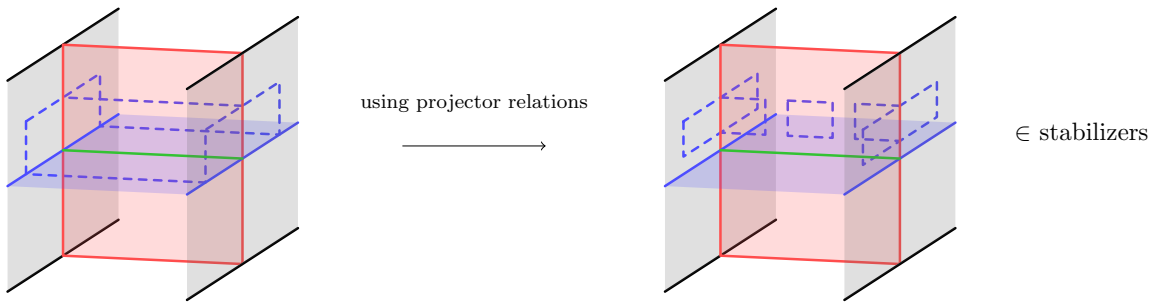}
    \caption{On the left we depict a slab from a layer code supporting an operator from $\mathcal{N}_\text{layer}^X$ depicted as a blue dashed line. 
    On the right we depict an equivalent element of $\mathcal{N}_\text{layer}^X$ that has been cleaned off the $xy$-layer via multiplication with stabilizers in the slab.}
    \label{fig:slab-is-stab}
\end{figure}

\begin{lemma}
\label{lem:slab-is-stab}
    An element of $\mathcal{N}_\text{layer}^X$ that is supported wholly within a slab must be in $\mathcal{S}_\text{layer}$. 
\end{lemma}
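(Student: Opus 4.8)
The plan is to show that the interior of a single slab supports no nontrivial $X$-type logical operator, adapting the cleaning technique from the proof of Lemma~\ref{lem:MSlabBdry}. Let $L\in\mathcal{N}_\text{layer}^X$ have support inside one slab, and fix an abstract $xy$-plane $\Pi$ through that slab chosen to avoid every physical $xy$-layer. Following the argument of Lemma~\ref{lem:MSlabBdry}, I would first push any support of $L$ on the $yz$-layer contained in the slab onto the line where $\Pi$ meets that $yz$-layer, using $X$-strings internal to it, and then clean the remainder onto $\Pi$ by multiplying with stabilizers supported inside the slab. The resulting operator $L'$ is $\mathcal{S}_\text{layer}$-equivalent to $L$, is still supported inside the slab, is supported on the intersection of $\Pi$ with the $xz$-layers and the $yz$-layer of the slab, and has no support on any $xy$-layer.

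Next I would exploit that $L'$, being equivalent to $L\in\mathcal{N}_\text{layer}^X$, creates no excitations. On each $xz$-layer the intersection with $\Pi$ is a single segment in the $\hat{x}$ direction whose endpoints sit on the two slab-boundaries, which lie in the bulk of that $xz$-layer and away from any junction (the slab-boundaries fall strictly between consecutive $yz$-layers, and $\Pi$ avoids all $xy$-layers). The plaquette-type checks of the $xz$-layer force $L'$ restricted to that layer to be either empty or the whole segment; but the whole segment terminates in the bulk and hence creates a pair of $m$-excitations there, which is impossible for $L'$. So $L'$ has no support on any $xz$-layer, and therefore $L'$ is an $X$-type operator supported entirely within the single $yz$-layer network of the slab.

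Finally I would argue that an excitation-free $X$-type operator confined to a $yz$-layer network is necessarily a product of star-type checks. The $yz$-layer network emulating a $k$-body $Z$-check is assembled from auxiliary surface codes joined by $\bullet$- and $\oplus$-defects with $e$-condensing outer $\hat{y}$-boundaries; a non-contractible $X$-string confined to this network must either terminate on an $e$-condensing boundary, leaving an $m$-excitation, or run into one of the internal defects, where it is diverted onto an adjacent $xz$-layer and so cannot remain confined (recall that only an even number of $X$-strings may cross such a defect without leaving the codespace). Hence $L'$ is a product of contractible $X$-loops, each of which is a star-type stabilizer of the network, so $L'\in\mathcal{S}_\text{layer}$; since $L$ is $\mathcal{S}_\text{layer}$-equivalent to $L'$, it follows that $L\in\mathcal{S}_\text{layer}$.

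I expect the last step to be the main obstacle: making it fully rigorous that the $yz$-layer network — itself a small topological defect network rather than a bare surface code — supports no confined nontrivial $X$-logical, equivalently that the interior of a slab with its slab-boundaries treated as rough ($e$-condensing) boundaries is a correctable region for $X$-errors. This is precisely the kind of statement handled by the point-defect correctability analysis in Section~\ref{sec:LatticeModel}. An alternative way to close this gap is to note that a confined $\bar{X}$ of the $yz$-layer network would, through the $\bullet$-defects joining it to the $xz$-layers it meets at nontrivial junctions, be forced either to close up inside a topologically trivial strip of one of those $xz$-layers — hence to be a stabilizer — or to extend out of the slab, contradicting confinement.
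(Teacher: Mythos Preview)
Your approach matches the paper's: clean onto a single abstract $xy$-plane $\Pi$ via stabilizer multiplication and argue the resulting one-dimensional operator is trivial. The paper is terser---after cleaning it simply states the result ``must result in a trivial operator given the support and the fact that no excitations are created in the bulk or on the boundary of the slab''---while you spell out the endpoint argument.

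Two small points. First, your claim that the plaquette checks force $L'$ on an $xz$-segment to be ``either empty or the whole segment'' overlooks the $\hat z$-junction with the slab's $yz$-layer, which for qubits in the support of the relevant $Z$-check sits in the interior of that segment; at such a junction an $X$-string may branch onto the $yz$-layer, so a half-segment is a priori allowed by the $xz$-plaquette checks alone. This does not change the conclusion, since on an interval with at most one interior branch point any non-empty support still has an endpoint at a slab-boundary, which lies in the bulk and creates an $m$-excitation. Second, your self-identified ``main obstacle'' is not one: once every $xz$-segment is empty, $L'$ lives on a single $\hat y$-segment of the $yz$-layer, and the same endpoint argument disposes of it---a lone $m_{\hat z}$ condenses at no $\hat z$-junction (the condensed subgroup there contains only $m_1^\ell m_1^r m_{\hat z}$ in the $m$-sector), so any non-empty support has an excitation-creating endpoint. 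No separate correctability analysis of the $yz$-layer network is needed; the paper handles this in the same one-line observation.
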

\begin{proof}
    An element $L \in \mathcal{N}_\text{layer}^X$ supported on a single slab is a product of string operators on the layers within the slab, joined at the nontrivial junctions. 
    The string operators can be cleaned off the $xy$-layers by multiplication with stabilizers, see Figure~\ref{fig:slab-is-stab}. 
    Within $xz$-layers and the $yz$-layer, $m$ string operators can be deformed along the $z$-direction via multiplication with stabilizers. 
    Hence $L$ can be cleaned onto a single $xy$-plane via stabilizer multiplication. 
    This must result in a trivial operator given the support and the fact that no excitations are created in the bulk or on the boundary of the slab.
\end{proof}

\begin{corollary}
    An element of $\mathcal{N}_\text{layer}^X \setminus \mathcal{S}_\text{layer}^X$ must have nontrivial support on all slabs. 
\end{corollary}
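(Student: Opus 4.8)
The plan is to prove the contrapositive: I will show that if $L \in \mathcal{N}_\text{layer}^X$ acts trivially on \emph{some} slab $s_0$, then $L \in \mathcal{S}_\text{layer}^X$. The first step is to split $L$ at $s_0$. Since the slabs partition the physical qubits (the $xz$- and $xy$-layers being cut into segments by the slab boundaries), and $L$ has no support on $s_0$, I can write $L = L_{<}\, L_{>}$, where $L_{<}$ and $L_{>}$ are the restrictions of $L$ to the slabs lying on the two sides of $s_0$. Because each stabilizer check is geometrically local while the slab containing a single $yz$-layer can be taken wide compared to the constant check diameter, no check touches both $L_{<}$ and $L_{>}$; since $L = L_<L_>$ creates no syndrome, each of $L_{<}$ and $L_{>}$ individually commutes with every check, i.e. $L_{<},L_{>}\in\mathcal{N}_\text{layer}^X$.

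Next I would show $L_{<}\in\mathcal{S}_\text{layer}^X$ by induction on the number of slabs in its support, with base case Lemma~\ref{lem:slab-is-stab} (an element of $\mathcal{N}_\text{layer}^X$ supported within one slab is a stabilizer). For the inductive step, let $t$ be the slab of $L_{<}$ closest to $s_0$. On $t$, $L_{<}$ agrees with $L$, so it creates no bulk excitations, and the $m$-configuration it induces on the $t$–$s_0$ boundary is empty. By Lemma~\ref{lem:MSlabBdry}, the $m$-configuration it induces on the opposite boundary of $t$ is boundary-equivalent to the empty configuration, so by Remark~\ref{rem:EquivByStabs} there is a stabilizer $S\in\mathcal{S}_\text{layer}^X$ supported near that boundary with $S L_{<}$ inducing the empty $m$-configuration there. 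Then $SL_{<}\in\mathcal{N}_\text{layer}^X$ factors across the slab boundaries as a piece supported entirely within $t$ (which has empty configurations on both of its slab-boundaries and no bulk excitations, hence is a stabilizer by Lemma~\ref{lem:slab-is-stab}) times a piece supported on strictly fewer slabs; the latter is therefore also in $\mathcal{N}_\text{layer}^X$, and multiplying the two stabilizers back expresses $L_{<}$ as a stabilizer times a normalizer element on fewer slabs, closing the induction. The identical argument applied on the other side gives $L_{>}\in\mathcal{S}_\text{layer}^X$, so $L=L_{<}L_{>}\in\mathcal{S}_\text{layer}^X$. The $Z$-type statement with $xy$-slabs follows by the $e\leftrightarrow m$ symmetry used throughout this section.

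The step I expect to be the main obstacle is the bookkeeping around the \emph{non-physical} slab boundaries: one must check that "restriction of $L$ to a set of slabs" is well defined, that the local stabilizer used to clean one boundary in the inductive step does not disturb boundaries already processed (which is why it matters that slabs are wide compared to the constant check and stabilizer sizes), and that each restricted operator genuinely remains in $\mathcal{N}_\text{layer}^X$ at every stage — this last point repeatedly relies on combining the bulk excitation-freeness inherited from $L$ with the emptiness of the configuration on the boundary cleaned at the previous step. Once the relative scales of check locality and slab width are fixed, the remaining content is a routine induction on the slab count built on Lemmas~\ref{lem:slab-is-stab} and~\ref{lem:MSlabBdry} together with Remark~\ref{rem:EquivByStabs}.
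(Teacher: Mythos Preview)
Your proposal is correct and follows essentially the same route as the paper: split $L$ at the empty slab into two normalizer pieces, then use Lemma~\ref{lem:MSlabBdry} to propagate the empty $m$-configuration across successive slab boundaries and Lemma~\ref{lem:slab-is-stab} to trivialize each single-slab piece. The only cosmetic difference is that you package the propagation as an explicit induction invoking Remark~\ref{rem:EquivByStabs}, whereas the paper does it in one sweep and handles the leftover slab-boundary operators by a small slab-shifting argument at the end.
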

\begin{proof}
    Consider an operator $L \in \mathcal{N}_\text{layer}^X$ that acts trivially on some slab. 
    It can be written as a product of disjoint operators $L_\text{left}, L_\text{right} \in \mathcal{N}_\text{layer}^X$ on either side of that slab. By assumption both $L_\text{left}, L_\text{right},$ induce an empty $m$-configuration on the boundaries of that slab, i.e.~there are no $m$-excitations on these slabs. By Lemma~\ref{lem:MSlabBdry}, the excitations created by $L_\text{left}, L_\text{right},$ on all other slabs have to be boundary-equivalent with the empty $m$-configuration.
    
    Hence the $X$ operator within each slab can be multiplied by an $X$ operator on each slab-boundary to create a logical operator supported within the slab. 
    The resulting operator must be a stabilizer by Lemma \ref{lem:slab-is-stab}.
    Hence the $X$ logical, multiplied by some $X$ operators on the boundaries between slabs that create no excitations, is a product of stabilizers. 
    The additional $X$ operators between slabs are themselves stabilizers, which follows by considering them alone and shifting the boundaries of the slabs by a small amount, sufficient to bring the boundary operators wholly into single slabs. 
    Hence $L$ itself must be a product of stabilizers.
\end{proof}

\begin{remark}
\label{rem:LogicalMapping}
    The above lemmas establish a one-to-one correspondence between elements of $\mathcal{N}_\text{layer}^X$ and $\mathcal{N}_\text{input}^X$. 
    Mapping from $\mathcal{N}_\text{input}^X$ to $\mathcal{N}_\text{layer}^X$ can be achieved via the quasiconcatenated logical operators introduced in the previous section. 
    Mapping from $\mathcal{N}_\text{layer}^X$ to $\mathcal{N}_\text{input}^X$ can be achieved by truncating the layer code logical operator and mapping the configuration of $m$-excitations it creates at the boundary of a slab to an input $X$ logical following Remark~\ref{rem:OutputInputMapping}.
\end{remark}

\begin{figure}[H]
    \centering
    \includegraphics[page=73]{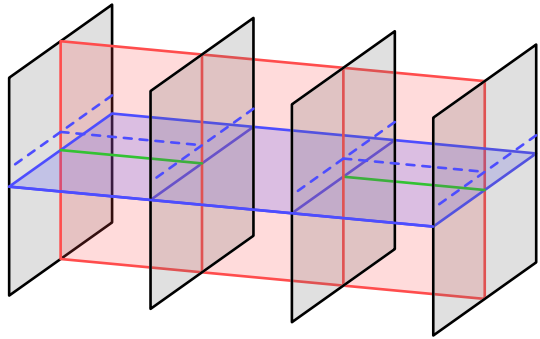}
    \caption{ A quasiconcatenated $X$-type stabilizer check (dashed blue lines) is given by the product of local $X$-type stabilizer checks on the $xy$-plane (blue). }
    \label{fig:QuasiconcatenatedStab}
\end{figure}

\begin{remark}
    The quasiconcatenated stabilizer checks play a special role in the mapping from $\mathcal{N}_\text{input}^X$ to $\mathcal{N}_\text{layer}^X$. 
    This is because the purpose of the $xy$-layers is to make the inference of quasiconcatenated stabilizer checks possible. 
    A quasiconcatenated check is equal to the product of $X$-type stabilizer checks on the corresponding $xy$-layer, See Figure~\ref{fig:QuasiconcatenatedStab}.
    Hence the measured value of the quasiconcatenated check can be inferred from local check measurements on the $xy$-layer. 
    Alternatively, the measured value of a \textit{concatenated} stabilizer check can be inferred by measuring all single qubit $X$ operators on the associated $xy$-layer. 
    This results in a deformation of the layer code. 
\end{remark}

\begin{corollary}
\label{cor:DX}
    The $X$ distance $d_X$ of a family of layer codes based on an $[[n,k,d]]$ family of input codes satisfies $D_X=\Omega\big(\frac{1}{w} d n_Z\big) $. 
    Here $n_Z$ is the number of $Z$ checks in the input code and $w$ is the maximum check weight of the input code. 
\end{corollary}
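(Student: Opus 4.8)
The plan is to bound from below the weight of an arbitrary nontrivial $X$-type logical operator $L \in \mathcal{N}_\text{layer}^X \setminus \mathcal{S}_\text{layer}^X$ by adding up, over the $yz$-slabs, the weight that $L$ is forced to carry inside each slab. Two ingredients from the preceding results do the work: the corollary above shows that such an $L$ has nontrivial support on \emph{every} slab, and by construction there are $\Theta(n_Z)$ such slabs — one containing each $yz$-layer, equivalently one for each $Z$-check of the input code — separated by a constant superlattice spacing.

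For the per-slab bound I would reason as follows. By Remark~\ref{rem:LogicalMapping} the operator $L$ corresponds to a nontrivial element $\overline{X} \in \mathcal{N}_\text{input}^X$, so, since $\overline{X}$ is not an input stabilizer, the minimum weight over its coset $\overline{X}\cdot\mathcal{S}_\text{input}^X$ is at least $d_X$, which is in turn at least $d$. Restricting $L$ to a single slab produces an $m$-configuration on each of that slab's two bounding planes; by the lemma above characterizing which $m$-configurations arise from $\mathcal{N}_\text{layer}^X$, together with Remark~\ref{rem:OutputInputMapping}, this $m$-configuration is boundary-equivalent to one mapping to $\overline{X}$. Lemma~\ref{lem:MinmWeight} then guarantees that \emph{every} $m$-configuration in that boundary-equivalence class — however its excitations are split between $xz$-layers and $xy$-segments — has weight at least $\frac{2}{w}d$. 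Since each excitation appearing on a slab-boundary is the endpoint of a string of $L$ lying inside the adjacent slab, and excitations on distinct layers or distinct $xy$-segments require disjoint supporting edges, the restriction of $L$ to a given slab has weight $\Omega(\frac{1}{w}d)$. The slabs are disjoint and number $\Theta(n_Z)$, so summing these contributions gives $|L| = \Omega(\frac{1}{w}d\,n_Z)$; as $L$ was an arbitrary nontrivial $X$ logical, $D_X = \Omega(\frac{1}{w}d\,n_Z)$.

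The step I expect to be the main obstacle is the geometric bookkeeping in the last move. One must verify that passing from the raw set of $m$-excitations created by $L$ inside a slab to the canonical $m$-configuration representative cannot increase their number, so that the actual support of $L$ dominates the $\frac{2}{w}d$ bound; that excitations on $xy$-segments are weighed on the same footing as those on $xz$-layers, which is precisely why Lemma~\ref{lem:MinmWeight} is stated in terms of total weight; and that, because the slabs partition the bulk of the lattice, no string contributes its weight to more than one slab's tally, so the $\Theta(n_Z)$ per-slab bounds genuinely add. Everything else is a direct assembly of the lemmas of this section, and the same argument with the roles of $X/Z$, $xy/yz$, and $n_X/n_Z$ interchanged yields the companion bound $D_Z = \Omega(\frac{1}{w}d\,n_X)$, whence $D = \min(D_X,D_Z) = \Omega(\frac{1}{w}d\min(n_X,n_Z))$.
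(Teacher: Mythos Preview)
Your proposal is correct and follows essentially the same route as the paper's proof: decompose into the $\Theta(n_Z)$ disjoint $yz$-slabs, use Lemma~\ref{lem:MinmWeight} to lower-bound the weight of the $m$-configuration induced on each slab-boundary by $\frac{2}{w}d$, convert that into an $\Omega(\frac{1}{w}d)$ lower bound on the support of $L$ inside each slab, and sum. Your write-up is more explicit about the geometric bookkeeping (disjointness of supporting edges, the passage from raw excitations to the canonical $m$-configuration not inflating the count) than the paper's terse two-sentence argument, but the logical skeleton is identical.
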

\begin{proof}
    The $n_Z$ factor in the scaling of the logical $X$ distance for the layer codes follows from the fact that nontrivial logical $X$ operators must have support on all slabs.
    The $\frac{1}{w}d$ factor in the scaling of the logical $X$ distance follows from the implication of Lemma~\ref{lem:MinmWeight} that restricting any representative of a nontrivial logical $X$ operator to a single slab results in a configuration of $m$-excitations on each boundary of that slab with minimum weight $\frac{2}{w} d$. 
    Moving $\frac{2}{w} d$ excitations into, and across, the slab from different points on the boundary requires an operator with weight $\Omega \big( \frac{2}{w} d \big) $.
\end{proof}

\begin{remark}
    The above distance scaling bound is not always tight. 
    This is particularly relevant for layer codes based on non-LDPC codes. 
    In particular, for non-LDPC codes whose maximum check weight grows faster than their distance, the simple lower bound $D_X=\Omega\big(n_Z\big)$ is more useful for the resulting layer code. 
    This bound follows from the fact that the minimum weight of any $m$-configuration in the boundary-equivalence class of an element $L \in \mathcal{N}_\text{input}^X\setminus \mathcal{S}_\text{input}^X$ must be at least $1$ even for non-LDPC codes. 
    If this bound is saturated the logical operators are stringlike. 
\end{remark}

\subsection{Energy barrier for the Leverrier-Zemor codes}
\label{sec:ProofsLZBarrier}

In this section we show that there exist good CSS LDPC codes with a linear energy barrier. 
For this purpose we focus on the Leverrier-Zemor (LZ) codes~\cite{leverrier2022quantum}.
Since $X$ and $Z$ operators can be treated separately in CSS codes, we use the notation $y \in \{0,1\}^n$ to specify an $n$-qubit $X$ or $Z$ operator depending on the context, $| \cdot |$ refers to the Hamming norm, and $\oplus$ denotes binary addition. 
We write $H_Z, H_X$ the $Z$ and $X$ parity-check matrices. Similarly we write $L_Z$ and $L_X$ the nontrivial logical operators of the code.
We state our argument explicitly for $X$ logical operators, and the $Z$ case follows identically. 
We start by giving a formal definition to the strict version of the energy barrier described in Section \ref{sec:intro-energy-barrier}.

\begin{definition}[Energy barrier]
    A code is said to have an energy barrier $\Delta$ if for any $y \in L_X$ and for any sequence $(y_i)_{i=1}^{i=l}$ of single-qubit operators such that $\bigoplus_{i=1}^{i=l} y_i = y$ there exists an index $e \leq l$ such that $|H_Z y'| \geq \Delta$, where $y' = \bigoplus_{i=1}^{i=e} y_i$, and similarly for $L_Z,H_X$.
\end{definition}

The LZ codes have the striking property that if $y$ has a small syndrome then either it is very far from, or very close to, being a trivial operator. The notion of being ``far" from the codespace is captured by the $| \cdot |_{S_X}$ (or $| \cdot |_{S_Z}$) norm:
\[
    | y |_{S_X} =  \min_{s \in S_X} |y \oplus s|
\]
and similarly for $| \cdot |_{S_Z}$.

It is easy to verify that for $\hat{x} \in \{0,1\}^{n}$, with $|\hat{x}| \leq 1$, then 
\[
\left | | y |_{S_X} - | y \oplus \hat{x}|_{S_X}\right | \leq 1,
\]
which we refer to as the \emph{continuity} of $| \cdot |_{S_X}$ .

\begin{lemma}[Property 1 of Ref.~\cite{anshu2023nlts}]
\label{lemma:lz}
    For the codes defined in Ref.~\cite{leverrier2022quantum} there exist constants $c_1,c_2,\delta_0$ such that for any $ 0 \leq \delta < \delta_0$, if ${c_1 \delta n \leq |y|_{S_X}  \leq c_2 n}$, then $|H_Z y| > \delta n_Z$, where $n_Z$ is the number of $Z$ checks.
\end{lemma}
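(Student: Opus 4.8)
The plan is to obtain this statement as a corollary of the \emph{linear confinement} property enjoyed by the quantum Tanner codes of Ref.~\cite{leverrier2022quantum}; confinement is the real content here, and once it is in hand the lemma reduces to fixing constants. Recall that such a code has $Z$-confinement with parameters $(\beta,\gamma)$ if there are constants $\beta,\gamma>0$, independent of the block length $n$, such that every $y\in\{0,1\}^{n}$ with $|y|_{S_X}\le\beta n$ obeys $|H_Z y|\ge\gamma\,|y|_{S_X}$ (equivalently, one may work with the minimum-weight representative of $y$ modulo $S_X$); the $X$-confinement statement is the mirror image. The Leverrier--Zemor codes enjoy such confinement with absolute constants: this follows from the robustness of the local tensor codes underlying the construction together with expander-mixing bounds on the left--right Cayley complex. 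I would treat this as a black box. In fact the present lemma is literally ``Property~1'' of Ref.~\cite{anshu2023nlts}, so one may just as well cite it from there; the argument below simply records how it is extracted from confinement.

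Granting confinement, the remaining steps are elementary. Because the code is LDPC with $n_Z=\Theta(n)$, fix an absolute constant $C$ with $n_Z\le Cn$. Put $c_2=\beta$, so that the hypothesis $|y|_{S_X}\le c_2 n$ is precisely the statement that $y$ lies within the confinement window. Then for any $y$ with $c_1\delta n\le |y|_{S_X}\le c_2 n$ confinement yields
\begin{align}
 |H_Z y|\;\ge\;\gamma\,|y|_{S_X}\;\ge\;\gamma c_1\,\delta\,n\;\ge\;\frac{\gamma c_1}{C}\,\delta\,n_Z .
\end{align}
Choosing $c_1$ large enough that $\gamma c_1>C$, for instance $c_1=2C/\gamma$, makes the right-hand side strictly larger than $\delta n_Z$ whenever $\delta>0$, which is the claim. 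Finally set $\delta_0=c_2/c_1=\beta\gamma/(2C)$: for $\delta\ge\delta_0$ the interval $[c_1\delta n,\,c_2 n]$ is empty and the statement is vacuously true, while for $0<\delta<\delta_0$ the display above applies to every $y$ in the stated window. (The boundary case $\delta=0$ needs only the trivial extra remark that any $y$ in the window with $|y|_{S_X}>0$ already has $|H_Z y|\ge\gamma>0$ by confinement, so the only exceptions are stabilizers, which are excluded.) Exchanging the roles of $X$ and $Z$ gives the corresponding statement for $L_Z$ and $H_X$.

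The single real obstacle is the black box: establishing linear confinement for the Leverrier--Zemor codes. If one were required to prove it here rather than cite it, that step would dominate the work, requiring the robust tensor code machinery, the left--right Cayley complex structure, and a random-walk / expansion argument showing that a low-weight error modulo stabilizers is necessarily ``visible'' to a constant fraction of local views and hence carries a syndrome proportional to its reduced weight. Everything after that point, as sketched above, is just constant-chasing together with $n_Z=\Theta(n)$.
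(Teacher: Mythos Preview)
The paper does not prove this lemma at all: it is stated as ``Property~1 of Ref.~\cite{anshu2023nlts}'' and simply cited as a known fact about the Leverrier--Z\'emor codes, with no derivation given. Your proposal therefore goes strictly beyond what the paper does, supplying an explicit reduction to linear confinement rather than a bare citation. The reduction you give is correct and is indeed the standard way to package the statement: confinement $|H_Z y|\ge\gamma|y|_{S_X}$ for $|y|_{S_X}\le\beta n$, together with $n_Z=\Theta(n)$, immediately yields the inequality after setting $c_2=\beta$ and choosing $c_1$ large enough. One minor quibble: your treatment of the boundary case $\delta=0$ asserts that stabilizers ``are excluded,'' but the lemma as stated does not exclude them, and a stabilizer $y$ with $|y|_{S_X}=0$ would violate the conclusion $|H_Zy|>0$. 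This is really a defect of the lemma statement (which should read $0<\delta<\delta_0$, as indeed the paper only invokes it with a strictly positive $\delta'$) rather than of your argument.
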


\begin{lemma}
    The Leverrier-Zemor codes have an energy barrier scaling as $\Omega(n)$.
\end{lemma}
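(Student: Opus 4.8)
The plan is to track the $|\cdot|_{S_X}$ norm of the running product along an arbitrary single-qubit decomposition of a nontrivial $X$ logical, and to invoke Lemma~\ref{lemma:lz} at the moment this norm first crosses a fixed small multiple of $n$. The structural content is entirely supplied by Lemma~\ref{lemma:lz} (small-syndrome operators are either very close to or very far from $S_X$); continuity of $|\cdot|_{S_X}$ supplies an intermediate-value argument, and goodness of the Leverrier--Zemor codes supplies $d=\Theta(n)$ and $n_Z=\Theta(n)$.

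First I would fix the constants. Since the codes are good there is a constant $c_d>0$ with $d_X\geq d\geq c_d n$, and $n_Z=\Theta(n)$. Set $\delta=\min\{\delta_0/2,\ c_d/(2c_1),\ c_2/(2c_1)\}>0$, a positive constant independent of $n$, and let $v^\star=\lceil c_1\delta n\rceil$. The purpose of this choice is that, for all sufficiently large $n$, one has simultaneously $c_1\delta n\leq v^\star\leq c_2 n$ (so $v^\star$ lies inside the window of Lemma~\ref{lemma:lz}) and $v^\star\leq d$.

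Next, take any $y\in L_X$ and any sequence $(y_i)_{i=1}^{l}$ of single-qubit operators with $\bigoplus_{i=1}^{l}y_i=y$, and form the partial products $y'_e=\bigoplus_{i=1}^{e}y_i$, so $y'_0=0$ and $y'_l=y$. Consider the integer sequence $a_e:=|y'_e|_{S_X}$. Then $a_0=0$; and since $y$ is a nontrivial $X$-logical, every $y\oplus s$ with $s\in S_X$ is again a nontrivial $X$-logical and hence has weight $\geq d_X\geq d$, so $a_l=|y|_{S_X}\geq d\geq v^\star$. Because $y'_e$ and $y'_{e-1}$ differ in at most one coordinate, the continuity property of $|\cdot|_{S_X}$ (applied with $\hat{x}=y_e$) gives $|a_e-a_{e-1}|\leq 1$. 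Thus $(a_e)$ starts at $0$, ends at a value $\geq v^\star$, and never jumps by more than $1$, so there is an index $e^\star$ (e.g.\ the first passage) with $a_{e^\star}=v^\star$; in particular $c_1\delta n\leq|y'_{e^\star}|_{S_X}=v^\star\leq c_2 n$.

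Finally, apply Lemma~\ref{lemma:lz} with this $\delta$ to $y'_{e^\star}$ to conclude $|H_Z\,y'_{e^\star}|>\delta n_Z=\Omega(n)$, so the decomposition passes through a syndrome of weight $\Omega(n)$. The identical argument with $X$ and $Z$ exchanged (using $|\cdot|_{S_Z}$, $H_X$, and the corresponding statement of Lemma~\ref{lemma:lz}) handles nontrivial $Z$ logicals, giving the energy barrier $\Omega(n)$ for the code. The only real care needed — the ``main obstacle'' such as it is — is the bookkeeping that guarantees $v^\star$ genuinely sits in $[c_1\delta n,\,c_2 n]$ and below $d$ at the same time for all large $n$; everything else is immediate from the two cited properties.
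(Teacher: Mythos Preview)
Your proof is correct and follows essentially the same approach as the paper's: track $|y'_e|_{S_X}$ along the partial products, use continuity to hit a target value in the window $[c_1\delta n,\,c_2 n]$, and invoke Lemma~\ref{lemma:lz} there. The only cosmetic difference is that the paper first fixes the target value $\lfloor\min(c_2 n,\,d/2)\rfloor$ and then picks $\delta'$, whereas you first pick $\delta$ and then define $v^\star=\lceil c_1\delta n\rceil$; your bookkeeping of the constants is in fact a bit more explicit.
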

\begin{proof}
Consider an arbitrary sequence $(y_i)_{i=1}^{i=l}$ of single-qubit operators such that $\bigoplus_{i=1}^{i=l} y_i = y$ for some nontrivial logical operator $y$.  Note that by the definition of the distance, there exists an index $e$ such that
\begin{align}
    \left |\bigoplus_{i=1}^{i=e} y_i \right |_{S_X} &= \lfloor \min(c_2 n, d/2) \rfloor . 
\end{align}

We will write $y' = \bigoplus_{i=1}^{i=e} y_i $.
Given that $d\in \Omega(n)$, then there exists a constant $\delta' > 0$ independent of $n$ such that $c_1 \delta' n <  \lfloor \min(c_2 n, d/2) \rfloor $. That gives us
\[
 c_1 \delta' n \leq \left|y' \right|_{S_X} \leq c_2 n
\]
This can be inserted into Lemma~\ref{lemma:lz}, and we verify that 
\[
\left |H_Z \ y' \right| \in \Omega(n)
\]
\end{proof}

\section{Lattice Model} 
\label{sec:LatticeModel}

In order to build a physical model for the layer code construction introduced in Section~\ref{sec:LayerCodeConstruction}, we need to specify explicit checks for each topological defect that was used. 
To the best of our knowledge, there is no convenient ``algorithmic" way to obtain these lattice terms.
An effective general heuristic is to map each projector in the relevant annulus algebra to a set of stabilizer generators.
For layers of surface code, this process turns out to be simple and intuitive, see Figure \ref{fig:proj-to-term} for example.
We remark that these checks can be chosen so that the resulting layer code is CSS.

\begin{figure}[H]
    \centering
    \includegraphics[page=14]{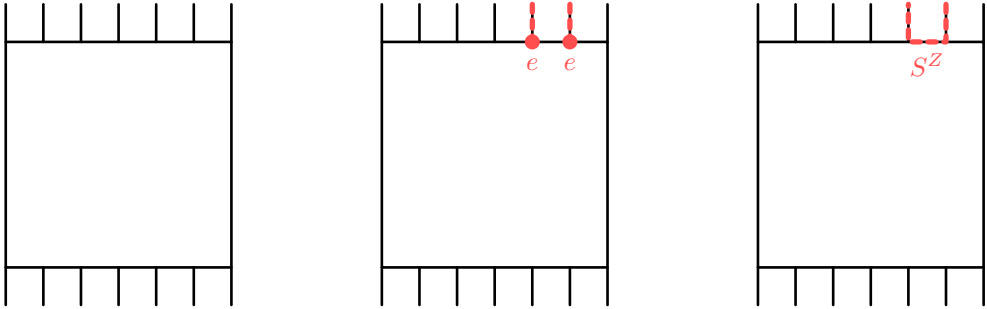}
    \caption{
    Illustration of how condensation rules can be used to derive lattice stabilizer checks. From left to right, a pair of $e$ anyons is created and then fused together, resulting in the operator in the rightmost figure. This operator creates no excitations, and therefore should be included in the stabilizer group.
    }
    \label{fig:proj-to-term}
\end{figure}

\subsection{Layer checks}

The stabilizer checks away from defects and boundaries are given by the canonical surface code checks, see Section~\ref{sec:Background}. 
The surface code layers have rough boundary conditions where they meet the $xy$-boundary of the cuboid that bounds the layer code. 
Similarly, they have smooth boundary conditions where they meet the $yz$-boundary of the cuboid, see Figure~\ref{fig:SCLayerBoundaries}. 
This choice of boundary conditions is such that the $\overline{X}$ logical operators are oriented along the $\hat{x}$ direction and similarly $\overline{Z}$ logical operators are oriented along $\hat{z}$, as depicted in Figure~\ref{fig:surface-code-bdry}.

\begin{figure}[H]
    \centering
    \includegraphics[scale = 1, page=59]{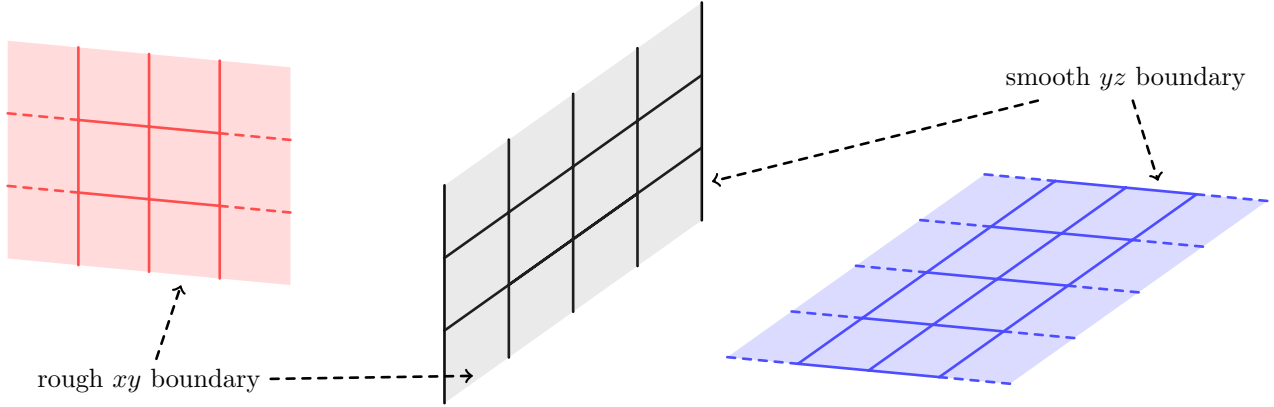}
    \caption{A $yz$-layer on the left with rough $xy$-boundaries. An $xz$-layer in the middle with both rough $xy$-boundaries and smooth $yz$-boundaries. An $xy$-layer on the right with smooth $yz$-boundaries.}
    \label{fig:SCLayerBoundaries}
\end{figure}

\begin{remark}
\label{rem:LatticeChecks}
    Below we explicitly list the novel lattice checks that appear when the layer code construction is applied to an arbitrary input CSS code. 
    For clarity, we list the minimal number of novel checks required to specify each defect. 
    For this reason we \textit{do not} list the checks near a defect that have already appeared when specifying the adjacent layers.
    In particular, we do not list the standard checks of the surface code layers and their rough and smooth boundaries below. 
    Furthermore, in the point defect subsection~\ref{sec:PointDefectChecks} we do not list the checks on adjacent line defects that have already appeared in the line defect subsection~\ref{sec:LineDefectChecks}.
    We remark that the novel checks at each defect replace some of the checks that would otherwise appear in the adjacent surface code layers. 
\end{remark}

\subsection{Line defect checks}
\label{sec:LineDefectChecks}

We now turn to the implementation of the topological line defects. 
Below, we list only those checks that differ from the standard surface code stabilizers. There are 3 types of $\hat{x}$-oriented junctions, 2 types of $\hat{y}$-oriented junctions and 3 types of $\hat{z}$-oriented junctions.  
\begin{enumerate}

    \item The first, \textit{trivial}, type of $\hat{y}$-junction is simply a crossing an ${xy}$ and a ${yz}$ surface code layer with no coupling between the layers. See Figure~\ref{fig:LineDefects}~(a). The layers meet along a line of $\hat{y}$ edges. 
    The stabilizer checks for the two layers separate and are given by the canonical surface code terms. 
    \begin{figure}[H]
        \centering
        \includegraphics[scale = 1, page=31]{TikzFigures}
    \end{figure}
    
    \item The second, \textit{nontrivial}, type of $\hat{y}$-junction has modified stabilizer checks that couple an ${xy}$ and a ${yz}$ surface code layer. See 
    Figure~\ref{fig:LineDefects}~(b).
    \begin{figure}[H]
        \centering
        \includegraphics[scale = 1, page=32]{TikzFigures}
    \end{figure}
    In this figure, and those below, the $S^Z$ stabilizer on the region highlighted in red has support on the edges that are depicted adjacent to it. The $S^X$ stabilizer on the region highlighted in blue is specified similarly. 
    
    \item The \emph{first} type of $\hat{z}$-junction occurs where a $yz$-layer meets its first $xz$-layer, see Figure~\ref{fig:k-body-Z-checks}. 
    The star checks are modified along the junction.
    \begin{figure}[H]
        \centering
        \includegraphics[scale = 1, page=33]{TikzFigures}
    \end{figure}
    
    \item The second, \textit{middle}, type of $\hat{z}$-junction occurs where a $yz$-layer intersects an $xz$-layer, see Figure~\ref{fig:k-body-Z-checks}. 
    The star and plaquette checks are modified along the junction. 
    \begin{figure}[H]
        \centering
        \includegraphics[scale = 1, page=34]{TikzFigures}
    \end{figure}
    
    \item The third, \textit{last}, type of $\hat{z}$-junction occurs where a $yz$-layer meets its last $xz$-layer, see 
    Figure~\ref{fig:k-body-Z-checks}. 
    It is simply a reflection of the first type of $\hat{z}$-junction above. 
    \begin{figure}[H]
        \centering
        \includegraphics[scale = 1, page=35]{TikzFigures}
    \end{figure}
    
    \item The \emph{first} type of $\hat{x}$-junction occurs where an $xy$-layer meets its first $xz$-layer, see 
    Figure~\ref{fig:k-body-X-checks}. 
    The star and plaquette checks are modified as follows along the junction.
    \begin{figure}[H]
        \centering
        \includegraphics[scale = 1, page=36]{TikzFigures}
    \end{figure}
    
    \item The second, \textit{middle}, type of $\hat{x}$-junction occurs where an $xy$-layer intersects an $xz$-layer, see Figure~\ref{fig:k-body-X-checks}. 
    The star and plaquette checks are modified along the junction. 
    \begin{figure}[H]
        \centering
        \includegraphics[scale = 1, page=37]{TikzFigures}
    \end{figure}
    
    \item The third, \textit{last}, type of $\hat{x}$-junction occurs where an $xy$-layer meets its last $xz$-layer, see Figure~\ref{fig:k-body-X-checks}. 
    \begin{figure}[H]
        \centering
        \includegraphics[scale = 1, page=38]{TikzFigures}
    \end{figure}
    
\end{enumerate}
This concludes the enumeration of line defect lattice checks. 

\subsection{Point Defect Checks}

\label{sec:PointDefectChecks}

In addition to the above line defects, there are 10 types of point defects in the bulk and 12 types of point defects on the boundary of the cuboid. 

\subsubsection*{Bulk Point Defects}

The 10 point defects in the bulk are listed below.

\begin{enumerate}
    \item The first type of bulk point defect~(\ref{point:1}) occurs at the intersection of a first $\hat{x}$-junction, a first $\hat{z}$-junction and a nontrivial $\hat{y}$-junction. 
    The modified star checks at the point defect are shown directly below.
    \begin{figure}[H]
        \centering
        \includegraphics[scale = 1, page=39]{TikzFigures}
    \end{figure}
    
    \item  The second type of bulk point defect~(\ref{point:2}) occurs at the intersection of a last $\hat{x}$-junction, a last $\hat{z}$-junction, and a nontrivial $\hat{y}$-junction. 
    This point defect is simply the reflection of the above through an $xz$-plane. 
    The modified star checks are shown directly below.
    \begin{figure}[H]
        \centering
        \includegraphics[scale = 1, page=40]{TikzFigures}
    \end{figure}
    
    \item The third type of bulk point defect~(\ref{point:3}) occurs at the intersection of a middle $\hat{x}$-junction, a middle $\hat{z}$-junction, a trivial $\hat{y}$-junction from below and a nontrivial $\hat{y}$-junction from above.
    The modified checks are shown directly below. 
    \begin{figure}[H]
        \centering
        \includegraphics[scale = 1, page=41]{TikzFigures}
    \end{figure}
  
    \item The fourth type of bulk point defect~(\ref{point:4}) occurs at the intersection of a middle $\hat{x}$-junction, a middle $\hat{z}$-junction, a nontrivial $\hat{y}$-junction from below and a trivial $\hat{y}$-junction from above.
    This junction only differs from the above by the exchange of the trivial and nontrivial $\hat{y}$-junctions. 
    The modified checks are shown directly below. 
    \begin{figure}[H]
        \centering
        \includegraphics[scale = 1, page=42]{TikzFigures}
    \end{figure}
    
    \item The fifth type of bulk point defect~(\ref{point:5}) occurs at the intersection of a middle $\hat{x}$-junction, a first $\hat{z}$-junction, and a nontrivial $\hat{y}$-junction from above. 
    The modified star checks on the point defect are shown directly below. 
    \begin{figure}[H]
        \centering
        \includegraphics[scale = 1, page=43]{TikzFigures}
    \end{figure}
    
    \item The sixth type of bulk point defect~(\ref{point:6}) occurs at the intersection of a middle $\hat{x}$-junction, a last $\hat{z}$-junction, and a nontrivial $\hat{y}$-junction from below. 
    The modified star checks on the point defect are shown directly below. 
    \begin{figure}[H]
        \centering
        \includegraphics[scale = 1, page=44]{TikzFigures}
    \end{figure}
    
    \item The seventh type of bulk point defect~(\ref{point:7}) occurs at the intersection of a first $\hat{x}$-junction, a middle $\hat{z}$-junction, and a nontrivial $\hat{y}$-junction from above. 
    The modified checks on the point defect are shown directly below. 
    \begin{figure}[H]
        \centering
        \includegraphics[scale = 1, page=45]{TikzFigures}
    \end{figure}
    
    \item The eighth type of bulk point defect~(\ref{point:8}) occurs at the intersection of a last $\hat{x}$-junction, a middle $\hat{z}$-junction, and a nontrivial $\hat{y}$-junction from below. 
    The modified checks on the point defect are shown directly below.  
    \begin{figure}[H]
        \centering
        \includegraphics[scale = 1, page=46]{TikzFigures}
    \end{figure}
    
    \item The ninth type of bulk point defect~(\ref{point:9}) occurs at the intersection of a trivial crossing $\hat{x}$-junction, a nontrivial middle $\hat{z}$-junction, and a nontrivial $\hat{y}$-junction that passes through the point defect. 
    \begin{figure}[H]
        \centering
        \includegraphics[scale = 1, page=75]{TikzFigures}
    \end{figure}
    
    \item The tenth type of bulk point defect~(\ref{point:10}) is similar to the ninth type with the roles of $\hat{x}$ and $\hat{z}$ reversed. 
    \begin{figure}[H]
        \centering
        \includegraphics[scale = 1, page=74]{TikzFigures}
    \end{figure}
\end{enumerate}

\subsubsection*{Boundary Point Defects}

The 12 boundary point defects are listed below.
\begin{enumerate}
    \item The first type of boundary point defect~(\ref{point:bdry1}) occurs where a first $\hat{z}$-junction meets the front $yz$ boundary. 
    The stabilizer checks are given by standard surface code terms in the presence of a rough boundary and the trivalent junction introduced above. 
    \begin{figure}[H]
        \centering
        \includegraphics[scale = 1, page=47]{TikzFigures}
    \end{figure}
    
    \item The second type of boundary point defect~(\ref{point:bdry2}) occurs where a first $\hat{z}$-junction meets the back $yz$ boundary. 
    The stabilizer checks are given by standard surface code terms in the presence of a rough boundary and the trivalent junction introduced above. 
    This is simply the reflection of the above point defect through a $yz$-plane. 
    \begin{figure}[H]
        \centering
        \includegraphics[scale = 1, page=48]{TikzFigures}
    \end{figure}
    
    \item The third type of boundary point defect~(\ref{point:bdry3}) occurs where a last $\hat{z}$-junction meets the front $yz$ boundary. 
    The stabilizer checks are given by standard surface code terms in the presence of a rough boundary and the trivalent junction introduced above. 
    This is simply the reflection of the first boundary point defect through an $xz$-plane.
    \begin{figure}[H]
        \centering
        \includegraphics[scale = 1, page=49]{TikzFigures}
    \end{figure}
    
    \item The fourth type of boundary point defect~(\ref{point:bdry4}) occurs where a last $\hat{z}$-junction meets the back $yz$ boundary. 
    The stabilizer checks are given by standard surface code terms in the presence of a rough boundary and the trivalent junction introduced above. 
    This is simply the reflection of the second boundary point defect through an $xz$-plane.
    \begin{figure}[H]
        \centering
        \includegraphics[scale = 1, page=50]{TikzFigures}
    \end{figure}
    
    \item The fifth type of boundary point defect~(\ref{point:bdry5}) occurs where a middle $\hat{z}$-junction meets the front $yz$ boundary. 
    The stabilizer checks are given by standard surface code terms in the presence of a rough boundary and the middle $\hat{z}$-junction introduced above, together with a modified plaquette term shown directly below.
    \begin{figure}[H]
        \centering
        \includegraphics[scale = 1, page=51]{TikzFigures}
    \end{figure}
    
    \item The sixth type of boundary point defect~(\ref{point:bdry6}) occurs where a middle $\hat{z}$-junction meets the back $yz$ boundary. 
    This is given by the reflection of the fifth boundary point defect through an $xy$-plane. 
    The stabilizer checks are given by standard surface code terms in the presence of a rough boundary and the middle $\hat{z}$-junction introduced above, together with a modified plaquette term shown directly below. 
    \begin{figure}[H]
        \centering
        \includegraphics[scale = 1, page=52]{TikzFigures}
    \end{figure}
    
    \item The seventh type of boundary point defect~(\ref{point:bdry7}) occurs where a first $\hat{x}$-junction meets the left $yz$ boundary. 
    The stabilizer checks are given by standard surface code terms with a smooth boundary and the $\hat{x}$-junction introduced above, together with modified boundary star checks. 
    \begin{figure}[H]
        \centering
        \includegraphics[scale = 1, page=53]{TikzFigures}
    \end{figure}
    
    \item The eighth type of boundary point defect~(\ref{point:bdry8}) occurs where a first $\hat{x}$-junction meets the right $yz$ boundary. 
    This is simply the reflection of the above point defect through the $yz$-plane. 
    The stabilizer checks are given by standard surface code terms with a smooth boundary and the $\hat{x}$-junction introduced above, together with modified boundary star checks shown directly below. 
    \begin{figure}[H]
        \centering
        \includegraphics[scale = 1, page=54]{TikzFigures}
    \end{figure}
    
    \item The ninth type of boundary point defect~(\ref{point:bdry9}) occurs where a last $\hat{x}$-junction meets the left $yz$ boundary. 
    The stabilizer checks are given by standard surface code terms with a smooth boundary and the $\hat{x}$-junction introduced above, together with modified boundary star checks shown directly below.
    \begin{figure}[H]
        \centering
        \includegraphics[scale = 1, page=55]{TikzFigures}
    \end{figure}
    
    \item The tenth type of boundary point defect~(\ref{point:bdry10}) occurs where a last $\hat{x}$-junction meets the right $yz$ boundary. 
    This is given by reflecting the above point defect through the $yz$-plane. 
    The stabilizer checks are given by standard surface code terms with a smooth boundary and the $\hat{x}$-junction introduced above, together with modified boundary star checks shown directly below. 
    \begin{figure}[H]
        \centering
        \includegraphics[scale = 1, page=56]{TikzFigures}
    \end{figure}
    
    \item The eleventh type of boundary point defect~(\ref{point:bdry11}) occurs where a middle $\hat{x}$-junction meets the left $yz$ boundary. 
    The stabilizer checks are given by standard surface code terms with a smooth boundary and the middle $\hat{x}$-junction introduced above, together with modified boundary star checks shown directly below. 
    \begin{figure}[H]
        \centering
        \includegraphics[scale = 1, page=57]{TikzFigures}
    \end{figure}
    
    \item The twelfth type of boundary point defect~(\ref{point:bdry12}) occurs where a middle $\hat{x}$-junction meets the right $yz$ boundary. 
    This is given by reflecting the above point defect through a $yz$-plane. 
    The stabilizer checks are given by standard surface code terms with a smooth boundary and the middle $\hat{x}$-junction introduced above, together with modified boundary star checks shown directly below. 
    \begin{figure}[H]
        \centering
        \includegraphics[scale = 1, page=58]{TikzFigures}
    \end{figure}
\end{enumerate}

This concludes the enumeration of point defect lattice checks. We now proceed to show that for any logical operator $L \in\mathcal{N}_{\text{layer}}$ whose support overlaps with a ball of constant radius around any of the point defects, there exists an operator $K \in \mathcal{S}_{\text{layer}}$ such that $L' = SL$ no longer overlaps with that ball. When a point defect satisfies this condition, it is said to be correctable \cite{bravyi2009no, bravyi2010tradeoffs, flammia2017limits}. 

It was shown in Ref.~\cite{delfosse2013upper} that a subset $A$ of qubits is correctable if 
\begin{align}
    2 |A| = rank (H) + rank(H|_A) + rank(H|_{\overline{A}}),
\end{align}
where $H$ is the code's parity-check matrix, $H|_A$ is its restriction to the qubits in $A$, and $\overline{A}$ is the complement of $A$. Using this relation it can be verified that when introducing a point defect that preserves the number of qubits and the number of linearly independent checks in its neighborhood, then the correctability of the region of space affected by the point defect is preserved. We can leverage this property into a guarantee that the point defects we introduce are all correctable. Hence, they can be ignored in our discussion of the code properties such as the distance and the energy barrier.

\section{Discussion} 
\label{sec:Discussion}

In this work we have introduced a construction that takes as input an arbitrary $[[n,k,d]]$ CSS stabilizer code and produces as output a layer code, which is a $[[\Theta(nn_Xn_Z),k,\Omega(\frac{1}{w}dn)]]$ CSS stabilizer code that is local in three-dimensional space.
Here $n_X$ denotes the number of $X$ checks, $n_Z$ the number of $Z$ checks, and $w$ the maximum weight of the checks in the input code.
Each layer code has checks of weight 6 or less, and takes the form of a topological defect network. 
The name layer code reflects that these codes are a special variety of topological defect network that is built exclusively from layers of surface code, one for each physical qubit and stabilizer check of the input code, joined together by line defects with incidence relations inherited from the Tanner graph of the input code.  

Applying our construction to a family of good quantum CSS LDPC codes with parameters $[[n,\Theta(n),\Theta(n)]]]$ produces a family of layer codes with parameters $[[\Theta(n^3),\Theta(n),\Theta(n^2)]]$, which is optimal for a family of three-dimensional topological codes. 
While our construction applies to general input CSS codes that are not necessarily LDPC, we can only guarantee favorable scaling of the output code's distance when the maximum stabilizer weight of the input code is constant. 

We have shown that the layer codes output by our construction preserve the scaling of the energy barrier of an input LDPC code. 
Hence any layer code family that is based on a family of input codes with a polynomial energy barrier also has a polynomial energy barrier. 
We proved that the LDPC codes introduced in Ref.~\cite{leverrier2022quantum} have a linear energy barrier $\Theta(n)$, where $n$ is the number of physical qubits. 
Hence there exist layer codes with $\Theta(n^3)$ physical qubits and energy barrier $\Theta(n)$.

\acknowledgements

We thank Shin Ho Choe and Libor Caha for useful comments on the first version of this preprint. We are also grateful to Daniel Litinski and Anthony Leverrier for their precious insights when this project was still in its infancy.
DW's work on this project was supported by the Australian Research Council Discovery Early Career Research Award (DE220100625).

\bibliography{references.bib}

\end{document}